\documentclass[onefignum,onetabnum]{siamart190516}
\usepackage{amsmath,rotating}
\usepackage{amssymb,verbatim}
\usepackage{amsfonts,todonotes}
\usepackage{graphicx,todonotes}
\usepackage[font={small}]{caption}
\usepackage{microtype,boxedminipage}
\usepackage[rightcaption]{sidecap}    
\usepackage{enumitem}
\usepackage{hyperref,microtype}
\usepackage{booktabs}

%QED symbols
%\renewcommand\qedsymbol{a}
%\let\doendproof\endproof
%\renewcommand\endproof{~\hfill\qed\doendproof}

\usetikzlibrary{shapes}

\newcommand\blfootnote[1]{%
  \begingroup
  \renewcommand\thefootnote{}\footnote{#1}%
  \addtocounter{footnote}{-1}%
  \endgroup
}

\newcommand{\cD}{{\cal D}}

\newcommand{\cF}{{\cal F}}

\newcommand{\cH}{{\cal H}}
\newcommand{\cI}{{\cal I}}
\newcommand{\cJ}{{\cal J}}

\newcommand{\cL}{{\cal L}}

\newcommand{\cP}{{\cal P}}
\newcommand{\cQ}{{\cal Q}}

\newcommand{\cY}{{\cal Y}}

\newcommand{\CCC}{\mathcal{C}}

\newcommand{\bigoh}{\ensuremath{{\mathcal O}}}

\newcommand{\cvc}{tcw\textsc{-CVC}}
\newcommand{\CVC}{tcw\textsc{-CVC}}
\newcommand{\imb}{tcw\textsc{-IMB}}
\newcommand{\cds}{tcw\textsc{-CDS}}

\newcommand{\fpt}{\mathsf{FPT}}
\newcommand{\xp}{\mathsf{XP}}
\newcommand{\np}{\mathsf{NP}}
\newcommand{\wx}[1]{\mathsf{W}[#1]}

\newcommand{\mc}{\mathcal}
\newcommand{\SB}{\{\,} \newcommand{\SM}{\;{|}\;} \newcommand{\SE}{\,\}}
\newtheorem{observation}{Observation}

\newcommand{\W}[1][xxxx]{\text{\normalfont W[#1]}}
\newcommand{\XP}{\text{\normalfont XP}}

\newcommand{\pw}{{\mathbf{pw}}}
\newcommand{\td}{{\mathbf{td}}}
\newcommand{\tcw}{{\mathbf{tcw}}}
\newcommand{\tw}{{\mathbf{tw}}}
\newcommand{\cut}{{\mathsf{cut}}}
\newcommand{\adh}{{\mathsf{adh}}}
\newcommand{\tor}{{\mathsf{tor}}}
\newcommand{\cvcv}{\mathsf{cvc}}
\newcommand{\imbv}{\mathsf{imb}}
\newcommand{\cost}{\mathsf{cost}}
\newcommand{\acost}{\mathsf{Acost}}

\newcommand{\PP}{P}
\newcommand{\QQ}{Q}
\newcommand{\Nat}{\mathbb{N}}

\sloppy 

\title{Algorithmic Applications of Tree-Cut Width}
%Algorithmic aspects of tcw
%Algorithmic use of tcw
% Parameterized complexity of problems prameterized by tcw
% Hard problems parameterized by tcw
% Which problems are easy for small tree-cut width?

\begin{document}

\newtheorem{fact}{Fact}
\newtheorem{notation}{Notation}
\newtheorem{claimm}{Claim}

\author{Robert Ganian\thanks{Algorithms and Complexity Group, TU Wien, Vienna, Austria.}
\and Eun Jung Kim\thanks{CNRS, Universit\'{e} Paris-Dauphine, Paris, France.}
\and Stefan Szeider\footnotemark[2]}

%\author{Robert Ganian\inst{1} \and Eun Jung Kim\inst{2} \and Stefan Szeider\inst{1}}
%
%\institute{
%Algorithms and Complexity Group, TU Wien, Vienna, Austria \\
%\and
%CNRS, Universit\'{e} Paris-Dauphine, Paris, France \\
%}

%\keywords{Tree-Cut Width, Immersion, Parameterized Complexity, Structural Parameters, Coloring Problems, Constraint Satisfaction, Integer Linear Programming}

\maketitle

\begin{abstract}
  The recently introduced graph parameter \emph{tree-cut width} plays
  a similar role with respect to immersions as the graph parameter
  \emph{treewidth} plays with respect to minors. In this paper, we
  provide the first algorithmic applications of tree-cut width to hard
  combinatorial problems. Tree-cut width is known to be lower-bounded
  by a function of treewidth, but it can be much larger and hence has
  the potential to facilitate the efficient solution of problems that
  are not known to be fixed-parameter tractable (FPT) when
  parameterized by treewidth. We introduce the notion of nice tree-cut
  decompositions and provide FPT algorithms for the showcase problems
  \textsc{Capacitated Vertex Cover}, \textsc{Capacitated Dominating
    Set}, and \textsc{Imbalance} parameterized by the tree-cut width of
  an input graph. On the other hand, we show that \textsc{List
    Coloring}, \textsc{Precoloring Extension}, and \textsc{Boolean CSP}
  (the latter parameterized by the tree-cut width of the incidence
  graph) are W[1]-hard and hence unlikely to be fixed-parameter
  tractable when parameterized by tree-cut width.
  \blfootnote{\textnormal
    An extended
  abstract of this paper appeared in the proceedings of MFCS 2015, the
  40th International Symposium on Mathematical Foundations of Computer Science.}  
\end{abstract}

\begin{keywords}
Tree-Cut Width, Immersion, Parameterized Complexity, Structural Parameters, Integer Linear Programming
\end{keywords}

\begin{AMS}
  05C85, 68Q25
\end{AMS}

\section{Introduction}
In their seminal work on graph minors,
Robertson and Seymour have shown that all finite graphs are not only
well-quasi ordered by the \emph{minor} relation, but also by the
\emph{immersion} relation\footnote{A graph $H$ is an immersion of a
  graph $G$ if $H$ can be obtained from $G$ by applications of vertex
  deletion, edge deletion, and edge lifting, i.e.,
  replacing two incident edges by a single edge which joins the two
  vertices not shared by the two edges.}, the Graph Immersion Theorem
\cite{RobertsonSeymour95}. This verified a conjecture by
Nash-Williams~\cite{Nash64,Nash65}.  As a consequence of this theorem, each
graph class that is closed under taking immersions can be
characterized by a finite set of forbidden immersions, in analogy to a
graph class closed under taking minors being characterized by a finite
set of forbidden minors.

In a recent paper \cite{Wollan15}, Wollan introduced the graph
parameter \emph{tree-cut width}, which plays a similar role with
respect to immersions as the graph parameter \emph{treewidth} plays
with respect to minors. Wollan obtained an analog to the
Excluded Grid Theorem for these notions: if a graph has bounded
tree-cut width, then it does not admit an immersion of the $r$-wall
for arbitrarily large~$r$~\cite[Theorem~15]{Wollan15}. Marx
and Wollan~\cite{MarxWollan14} proved that for all $n$-vertex graphs
$H$ with maximum degree $k$ and all $k$-edge-connected graphs $G$,
either $H$ is an immersion of $G$, or $G$ has tree-cut width
bounded by a function of $k$ and~$n$.

%\medskip\noindent 
In this paper, we provide the first algorithmic applications of tree-cut
width to hard combinatorial problems. Tree-cut width is known to be
lower-bounded by a function of treewidth, but it can be much larger
than treewidth if the maximum degree is unbounded (see Subsection~\ref{sub:Rel} for an comparison of
tree-cut width to other parameters). Hence tree-cut width has the
potential to facilitate the efficient solution of problems which are
not known or not believed to be fixed-parameter tractable (FPT) when parameterized
by treewidth. For other problems it might allow the strengthening of parameterized hardness results.

We provide results for both possible outcomes: in
Section~\ref{sec:algorithms} we provide FPT algorithms for the
showcase problems \textsc{Capacitated Vertex Cover},
\textsc{Capacitated Dominating Set} and \textsc{Imbalance}
parameterized by the tree-cut width of an input graph $G$, while in
Section~\ref{sec:hard} we show that \textsc{List Coloring},
\textsc{Precoloring Extension} and \textsc{Boolean CSP} parameterized
by tree-cut width (or, for the third problem, by the tree-cut width of
the incidence graph) are not likely to be
FPT. Table~\ref{tab:overview} provides an overview of our results. The
table shows how tree-cut width provides an intermediate measurement
that allows us to push the frontier for fixed-parameter tractability
in some cases, and to strengthen $\W[1]$-hardness results in some
other cases.

\newcommand{\super}[1]{${}^\text{#1}$}

\begin{table}[htb]
  \centering 
%  \vspace{-0.6cm}
\resizebox{\columnwidth}{!}{
  \begin{tabular}{@{}ll@{\hspace{2mm}}l@{\hspace{2mm}}l@{}}
 \toprule
     & \multicolumn{3}{c}{\emph{Parameter}} \\ \cmidrule(){2-4}%@{}

     \emph{Problem}                    & \emph{tw} & \emph{tree-cut width} &
                                      \emph{max-degree and tw}   \\
    \textsc{Capacitated Vertex Cover} & $\W[1]$-hard\super{\cite{DomLokstanovSaurabhVillanger08}}
 & FPT\super{(Thm \ref{thm:CVCalg})} & FPT \\ 
    \textsc{Capacitated Dominating Set} &
    $\W[1]$-hard\super{\cite{DomLokstanovSaurabhVillanger08}} &
    FPT\super{(Thm \ref{thm:CDSPalg})} & FPT \\ 
    \textsc{Imbalance} & Open\super{\cite{LokshtanovMisraSaurabh13}} & FPT\super{(Thm~\ref{thm:IMBalg})} & FPT\super{\cite{LokshtanovMisraSaurabh13}} \\ 
    \textsc{List Coloring} & $\W[1]$-hard\super{\cite{FellowsEtAl11}}
    & $\W[1]$-hard\super{(Thm \ref{thm:listhard})} & FPT\super{(Obs~\ref{obs:coldeg})} \\ 
    \textsc{Precoloring Extension} & $\W[1]$-hard\super{\cite{FellowsEtAl11}} & $\W[1]$-hard\super{(Thm \ref{thm:listhard})} & FPT\super{(Obs~\ref{obs:coldeg})} \\ 
    \textsc{Boolean CSP} & $\W[1]$-hard\super{\cite{SamerSzeider10a}} & $\W[1]$-hard\super{(Thm \ref{thm:csphard})} & FPT\super{\cite{SamerSzeider10a}} \\ 
\bottomrule
  \end{tabular}
  }
  \caption{Overview of results (\emph{tw} stands for treewidth).  \vspace{-0.6cm}
   }
  \label{tab:overview}
\end{table}
%  \vspace{-0.5cm}
%\enlargethispage*{5mm}

Our FPT algorithms assume that a suitable decomposition, specifically
a so-called \emph{tree-cut decomposition}, is given as part of the
input. Since the class of graphs of tree-cut width at most $k$ is closed
under taking immersions~\cite[Lemma~10]{Wollan15}, the Graph Immersion
Theorem together with the fact that immersions testing is fixed-parameter tractable~\cite{GroheKawarabayashiMarxWollan11} gives rise
to a non-uniform FPT algorithm for testing whether a graph has
tree-cut width at most~$k$. Kim et
al.~\cite{KimOumPaulSauThilikos14} provide a uniform FPT algorithm
which constructs a tree-cut decomposition whose width is at most twice
the optimal one.  Effectively, this result allows us to remove the
condition that a tree-cut decomposition is supplied as part of the
input from our uniform FPT algorithms.

We briefly outline the methodology used to obtain our algorithmic
results. As a first step, in Section~\ref{sec:nice} we develop the
notion of \emph{nice tree-cut decompositions}\footnote{We call them ``nice''
as they serve a similar purpose as the nice tree decompositions~\cite{Kloks94},
although the definitions are completely unrelated.} and show that every
tree-cut decomposition can be transformed into a nice one in
polynomial time. These nice tree-cut decompositions are of independent
interest, since they provide a means of simplifying the complex
structure of tree-cut decompositions. In Section~\ref{sec:algorithms}
we introduce a general three-stage framework for the design of
FPT algorithms on nice tree-cut decompositions and apply it to our
problems. The crucial part of this framework is the computation of the
``joins''. We show that the children of any node in a nice tree-cut
decomposition can be partitioned into (i)~a bounded number of children
with complex connections to the remainder of the graph, and (ii)~a
potentially large set of children with only simple connections to the
remainder of the graph. We then process these by a combination of
branching techniques applied to~(i) and integer linear programming
applied to~(ii). The specifics of these procedures differ from
problem to problem.

\smallskip

% explored the intricate structure of graphs arising from the
% notion of minors;

\section{Preliminaries}
\label{sec:prelim}

\subsection{Basic Notation} 
We use standard terminology for graph theory, see for instance~\cite{Diestel00}. All graphs except for those used to compute the torso-size in Subsection~\ref{sub:tcw} are simple; the multigraphs used in Subsection~\ref{sub:tcw} have loops, and each loop increases the degree of the vertex by $2$.

Given a graph $G$, we let $V(G)$ denote its vertex set and $E(G)$ its edge set. The (open) neighborhood of a vertex $x \in V(G)$ is the set $\{y\in V(G):xy\in E(G)\}$ and is denoted by $N_G(x)$. The closed neighborhood $N_G[x]$ of $x$ is defined as $N_G(x)\cup \{x\}$. For a vertex subset $X$, the (open) neighborhood of $X$ is defined as $\bigcup_{x\in X} N_G(x) \setminus X$ and denoted by $N_G(X)$. The set $N_G[X]$ refers to the closed neighborhood of $X$ defined as $N_G(X)\cup X$. We refer to the set $N_G(V(G)\setminus X)$ as $\partial_G(X)$; this is the set of vertices in $X$ which have a neighbor in $V(G)\setminus X$. The degree of a vertex $v$ in $G$ is denoted $deg_G(v)$, and a vertex of degree $1$ is called a \emph{pendant vertex}. 
When the graph we refer to is clear, we drop the lower index $G$ from the notation. We use $[i]$ to denote the set $\{0,1,\dots,i\}$.

%For disjoint sets $X,Y \subseteq V(G)$, $\delta_G(X,Y)$ denotes the set of edges in $G$ with exactly one endpoint in $X$ and another endpoint in $Y$. We use $\delta_G(x)$ to denote the set of edges incident to $x$. In particular, when $(X,Y)$ forms a bipartition of $V(G)$, we call $(X,Y)$ a {\em cut} in $G$. The set $\delta_G(X,Y)$ is called the {\em cut-set} of $(X,Y)$ and will be simply denoted as $\delta_G(X)$. The size of a cut $(X,Y)$ is the size of its cut-set $\delta(X,Y)$. 

\subsection{Parameterized Complexity}

A \emph{parameterized problem} $\PP$ is a subset of $\Sigma^* \times \Nat$ for some finite alphabet $\Sigma$. Let $L\subseteq \Sigma^*$ be a classical decision problem for a finite alphabet, and let $p$ be a non-negative integer-valued function defined on $\Sigma^*$. Then $L$ \emph{parameterized by} $p$ denotes the parameterized problem $\SB(x,p(x)) \SM x\in L \SE$ where $x\in \Sigma^*$.  For a problem instance $(x,k) \in \Sigma^* \times \Nat$ we call $x$ the main part and $k$ the parameter.  
A parameterized problem $\PP$ is \emph{fixed-parameter   tractable} (FPT in short) if a given instance $(x, k)$ can be solved in time  $f(k) \cdot |x|^{\bigoh(1)}$ where $f$ is an arbitrary computable function of $k$. %We refer the reader to~\cite{DowneyFellows13} for other important notions in parameterized complexity, such as \emph{FPT Turing reductions}, the complexity class $\wx{1}$ and the {\sc Multi-Colored Clique (MCC)} problem.

%We refer the reader to~\cite{DowneyFellows13} for standard notions in parameterized complexity, such as the complexity classes FPT and $\wx{1}$, FPT Turing reductions and the {\sc Multi-Colored Clique (MCC)} problem.

Parameterized complexity classes are defined with respect to {\em fpt-reducibility}. A parameterized problem $\PP$ is {\em fpt-reducible} to $\QQ$ if in time $f(k)\cdot |x|^{\bigoh(1)}$, one can transform an instance $(x,k)$ of $\PP$ into an instance $(x',k')$ of $\QQ$ such that $(x,k)\in \PP$ if and only if $(x',k')\in \QQ$, and $k'\leq g(k)$, where $f$ and $g$ are computable functions depending only on $k$.  We also use the related notion of \emph{FPT Turing reductions}~\cite{FlumGrohe06,DowneyFellows13,CyganFKLMPPS15}. 
For two parameterized problems $P$ and $Q$, an FPT Turing reduction from $\PP$ to $\QQ$ is an FPT algorithm which decides whether $(x,k)$ is in $\PP$, 
provided that there is an oracle access to $\QQ$ 
 for all instances $(x',k')$ of $\QQ$ with $k'\leq g(k)$ for some computable function $g$.
Informally speaking, a problem $\PP$ is admits an FPT  Turing reduction to $\QQ$ if one can solve an instance of $\PP$ by an FPT algorithm 
which has access to an oracle which can solve $\QQ$ but which may only be used for instances of $Q$ whose parameter is upper-bounded by a function of the parameter of $P$. 
 Owing to the definition, if $\PP$ fpt-reduces or FPT Turing-reduces to $\QQ$
and $\QQ$ is fixed-parameter tractable, then $P$ is fixed-parameter
tractable as well.

Central to parameterized complexity is the following hierarchy of complexity classes, defined by the closure of canonical problems under fpt-reductions:
\[\fpt \subseteq \wx{1} \subseteq \wx{2} \subseteq \cdots \subseteq \xp.\] All inclusions are believed to be strict. In particular, $\fpt\neq \wx{1}$ under the Exponential Time Hypothesis~\cite{ImpagliazzoPaturiZane01}. 

The class $\wx{1}$ is the analog of $\np$ in parameterized complexity. A major goal in parameterized complexity is to distinguish between parameterized problems which are in $\fpt$ and those which are $\wx{1}$-hard, i.e., those to which every problem in $\wx{1}$ is fpt-reducible. There are many problems shown to be complete for $\wx{1}$, or equivalently $\wx{1}$-complete, including the {\sc Multi-Colored Clique (MCC)} problem~\cite{DowneyFellows13}.

\subsection{Integer Linear Programming}
Our algorithms use an Integer Linear Programming (ILP) subroutine. ILP is a well-known framework for formulating problems and a powerful tool for the development of fpt-algorithms for optimization problems. %Here we provide only a brief overview of the framework:

\begin{definition}[$p$-Variable Integer Linear Programming Optimization] Let $A\in \mathbb{Z}^{q\times p}, b\in \mathbb{Z}^{q\times 1}$ and $c\in \mathbb{Z}^{1\times p}$. The task is to find a vector $x\in \mathbb{Z}^{p\times 1}$ which minimizes the objective function $c\times x$ and satisfies all $q$ inequalities given by $A$ and $b$, specifically satisfies $A\cdot x\geq b$. The number of variables $p$ is the parameter.
\end{definition}

Lenstra \cite{Lenstra83} showed that \textsc{$p$-ILP}, together with its optimization variant \textsc{$p$-OPT-ILP} (defined above), are in FPT. His running time was subsequently improved by Kannan \cite{Kannan87} and Frank and Tardos \cite{FrankTardos87} (see also \cite{FellowsLokshtanovMisraRS08}).

\begin{theorem}[\cite{Lenstra83,Kannan87,FrankTardos87,FellowsLokshtanovMisraRS08}]
\label{thm:pilp}
\textsc{$p$-OPT-ILP} can be solved using $\bigoh(p^{2.5p+o(p)}\cdot L)$ arithmetic operations in space polynomial in $L$, $L$ being the number of bits in the input.
\end{theorem}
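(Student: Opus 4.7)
The plan is to reduce the optimization version to a polynomial number of feasibility queries via binary search over the objective value $c \cdot x$, so that the main task becomes solving $p$-ILP feasibility within the claimed time bound. For feasibility, I would follow Lenstra's strategy: given a rational polytope $P = \{x \in \mathbb{R}^p : Ax \geq b\}$, decide whether $P \cap \mathbb{Z}^p$ is nonempty by recursion on the dimension $p$. The base case $p=1$ is a single interval check, and the recursion slices $P$ along integer hyperplanes to reduce to a bounded number of instances of dimension $p-1$.

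The heart of the argument is Khinchin's flatness theorem: there is a function $f(p)$ such that every convex body $K \subseteq \mathbb{R}^p$ either contains a point of $\mathbb{Z}^p$ or has \emph{lattice width} at most $f(p)$, meaning that some nonzero integer vector $v$ satisfies $\max_{x \in K} v^\top x - \min_{x \in K} v^\top x \leq f(p)$. To exploit this algorithmically, I would first apply an affine transformation making $P$ well-rounded in the sense of John's theorem (sandwiched between two concentric ellipsoids with bounded radius ratio), then run LLL basis reduction on the resulting transformed lattice. The shortest reduced basis vector yields either a direction $v$ witnessing flatness, or an explicit integer point of $P$ read off from rounding.

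If $P$ is flat in direction $v$, the integer hyperplane slices $\{x : v^\top x = t\}$ for the at most $O(f(p))$ relevant integer values of $t$ partition the lattice points of $P$ into $O(f(p))$ subproblems, each living on an affine lattice of dimension $p-1$. Recursing on dimension gives $T(p) \leq O(f(p)) \cdot T(p-1) \cdot \mathrm{poly}(L)$, which unrolls to $f(p)^{O(p)} \cdot \mathrm{poly}(L)$. Lenstra's original flatness bound is exponential in $p$, yielding a doubly-exponential dependence; Kannan's generalized basis reduction tightens the flatness constant, and a careful accounting of the enumeration of lattice hyperplanes at each recursion level produces the stated $p^{2.5p + o(p)}$ bound. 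The Frank--Tardos preprocessing step replaces the coefficients of $A, b, c$ by integers of bit-length polynomial in $p$ while preserving the feasible set and the optimum, which is what enforces the polynomial-in-$L$ working space guarantee.

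The main obstacle, and the reason I would rely on the cited machinery rather than re-derive it from scratch, is the extraction of sharp constants. The flatness bound $f(p)$ hides subtle convex-geometric arguments (John ellipsoids, covering radius estimates), and pinning down the exponent $2.5$ specifically requires Kannan's improved reduction together with his refined analysis of how many lattice hyperplanes must be enumerated at each recursive level. For the downstream applications in Section~\ref{sec:algorithms} only the qualitative statement ``FPT in the number of variables, polynomial in the input length'' is actually needed, which already follows from Lenstra's weaker bound; the sharper exponent in Theorem~\ref{thm:pilp} is recorded for completeness and is imported as a black box.
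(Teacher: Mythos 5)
The paper does not prove this statement at all: Theorem~\ref{thm:pilp} is imported verbatim from the cited works of Lenstra, Kannan, Frank--Tardos, and Fellows et al., exactly as you do when you defer to the cited machinery for the flatness constants and the $p^{2.5p+o(p)}$ exponent. Your sketch (binary search to reduce optimization to feasibility, Lenstra's dimension recursion via flatness and basis reduction, Kannan's sharpened bound, Frank--Tardos coefficient reduction) is a faithful outline of the standard proof and is consistent with how the paper uses the result as a black box, so there is nothing to reconcile.
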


\subsection{Tree-Cut Width}
\label{sub:tcw}
The notion of tree-cut decompositions was first proposed by Wollan~\cite{Wollan15}, see also~\cite{MarxWollan14}.
%In this section we introduce a streamlined, equivalent definition.
A family of subsets $X_1, \ldots, X_{k}$ of $X$ is a {\em near-partition} of $X$ if they are pairwise disjoint and $\bigcup_{i=1}^{k} X_i=X$, allowing the possibility of $X_i=\emptyset$.  

\begin{definition}
A {\em tree-cut decomposition} of $G$ is a pair $(T,\mc{X})$ which consists of a tree $T$ and a near-partition $\mc{X}=\{X_t\subseteq V(G): t\in V(T)\}$ of $V(G)$. A set in the family $\mc{X}$ is called a {\em bag} of the tree-cut decomposition. 
\end{definition}

For any edge $e=(u,v)$ of $T$, let $\Upsilon_u$ and $\Upsilon_v$ be the two connected components in $T-e$ which contain $u$ and $v$ respectively. Note that $(\bigcup_{t\in \Upsilon_u} X_t, \bigcup_{t\in \Upsilon_v} X_t)$ is a partition of $V(G)$, and we use $\cut(e)$ to denote the set of edges with one endpoint in each partition. A tree-cut decomposition is \emph{rooted} if one of its nodes is called the root, denoted by $r$. For any node $t\in V(T) \setminus \{r\}$, let $e(t)$ be the unique edge incident to $t$ on the path between $r$ and $t$. We define the {\em adhesion} of $t$ ($\adh_T(t)$ or $\adh(t)$ in brief) as $|\cut(e(t))|$; if $t$ is the root, we set $\adh_T(t)=0$.

The {\em torso} of a tree-cut decomposition $(T,\mc{X})$ at a node $t$, written as $H_t$, is the graph obtained from $G$ as follows. If $T$ consists of a single node $t$, then the torso of $(T,\mc{X})$ at $t$ is $G$. Otherwise let $T_1, \ldots , T_{\ell}$ be the connected components of $T-t$. For each $i=1,\ldots , \ell$, the vertex set $Z_i$ of $V(G)$ is defined as the set $\bigcup_{b\in V(T_i)}X_b$. The torso $H_t$ at $t$ is obtained from $G$ by {\em consolidating} each vertex set $Z_i$ into a single vertex~$z_i$. Here, the operation of consolidating a vertex set $Z$ into $z$ is to substitute $Z$ by $z$ in $G$, and for each edge $e$ between $Z$ and $v\in V(G)\setminus Z$, adding an edge $zv$ in the new graph. We note that this may create parallel edges.

The operation of {\em suppressing} a vertex $v$ of degree at most $2$ consists of deleting~$v$, and when the degree is two, adding an edge between the neighbors of $v$. Given a graph $G$ and  $X\subseteq V(G)$, let the {\em 3-center} of $(G,X)$ be the unique graph obtained from $G$ by exhaustively suppressing vertices in $V(G) \setminus X$ of degree at most two. Finally, for a node $t$ of $T$, we denote by $\tilde{H}_t$ the 3-center of $(H_t,X_t)$, where $H_t$ is the torso of $(T,\mc{X})$ at $t$. %\ejk{the notation $\tor(t)$ is misleading since what we measure is the 3-center of torso, not the torso itself. I propose to use the notation $|\tilde{H}_t|$ instead.} 
Let the \emph{torso-size} $\tor(t)$ denote $|\tilde{H}_t|$. 
%Given a tree-cut decomposition $(T,\mc{X})$ of $G$ and node $t$, let $R_t$ contain all edges $e$ incident to $t$ such that $|\cut(e)|>2$. Then the torso-size of $t$ ($\tor_T(t)$ or $\tor(t)$ in brief) is defined as $|X_t|+|R_t|$, and $width(t)$ is $\max\{\adh(t), \tor(t)\}$. 

\begin{definition}
The width of a tree-cut decomposition $(T,\mc{X})$ of $G$ is $\max_{t\in V(T)}\{ \adh(t), \tor(t) \}$. The tree-cut width of $G$, or $\tcw(G)$ in short, is the minimum width of $(T,\mc{X})$ over all tree-cut decompositions $(T,\mc{X})$ of $G$.
\end{definition}

We conclude this subsection with some notation related to rooted tree-cut decompositions. 
For $t\in V(T)\setminus \{r\}$, we let $p_T(t)$ (or $p(t)$ in brief) denote the parent of $t$ in $T$. For two distinct nodes $t,t'\in V(T)$, we say that $t$ and $t'$ are {\em siblings} if $p(t)=p(t')$. Given a tree node $t$, let $T_t$ be the subtree of $T$ rooted at $t$. Let $Y_t=\bigcup_{b\in V(T_t)} X_b$, and let $G_t$  denote the induced subgraph $G[Y_t]$. The \emph{depth} of a node $t$ in $T$ is the distance of $t$ from the root $r$.
The vertices of $\partial(Y_t)$ are called the {\em borders} at node $t$.  A node $t\neq r$ in a rooted tree-cut decomposition is \emph{thin} if $\adh(t)\leq 2$ and \emph{bold} otherwise. 

\begin{figure}[ht]
\begin{center}
\begin{tikzpicture}
\tikzstyle{every node}=[draw, shape=circle, minimum size=3pt,inner sep=2pt, fill=white]

\draw (0,0) node[label=left:$a$] (a){}; 
\draw (1,0) node [label=270:$d$] (d){};
\draw (0,1) node[label=left:$b$] (b){}; 
\draw (1,1) node [label=right:$c$] (c){};
\draw (2,0) node [label=right:$e$](e){}; \draw (2,0.5) node [label=right:$f$](f){};
\draw (2,1) node [label=right:$g$](g){};
\draw (b)--(d)--(a)--(b)--(c)--(d)--(e)--(f); \draw (f)--(d)--(g);
\end{tikzpicture}
\quad\quad
\begin{tikzpicture}[scale=0.9]
\tikzstyle{cloud} = [draw, ellipse, minimum height=0em, minimum width=3em]

\node (d) at (0,2)[cloud,label=left:{$(2,0)$}] {d};
\node (a) at (-1,1)[cloud,label=left:{$(3,3)$}] {a};
\node (bc) at (-1,0)[cloud,label=left:{$(3,3)$}] {bc};
\node (e) at (0.5,1)[cloud,label=270:{$(1,2)$}] {e};
\node (f) at (2,1)[cloud,label=270:{$(1,2)$}] {f};
\node (g) at (3.5,1)[cloud,label=270:{$(1,1)$}] {g};

\draw[ultra thick] (bc)--(a)--(d);
\draw (d)--(e); \draw (f)--(d)--(g);

\tikzstyle{empty}=[draw, shape=circle, minimum size=3pt, inner sep=0pt, fill=white, color=white]

%\node (a1) at (-0.55,1.25)[empty, label=135:$3$] {};
%\node (a1) at (-0.9,0.55)[empty, label=left:$2$] {};
%\node (a1) at (0.4,1.4)[empty, label=left:$1$] {};
%\node (a1) at (1.1,1.1)[empty, label=135:$1$] {};
%\node (a1) at (2.2,1.7)[empty, label=left:$1$] {};

\end{tikzpicture}
\end{center}
\caption{A graph $G$ and a width-$3$ tree-cut decomposition of $G$, including the torso-size (left value) and adhesion (right value) of each node.}
\vspace{-0.5cm}
\end{figure}
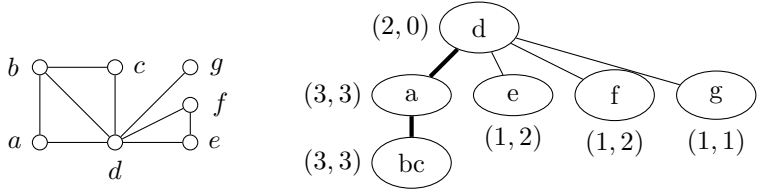

%\todo{add figure with some tc-decomposition example}

\subsection{Relations to Other Width Parameters}
\label{sub:Rel}
Here we review the relations between the tree-cut width and other
width parameters, specifically \emph{treewidth} ($\mathbf{tw}$),
\emph{pathwidth} ($\mathbf{pw}$)~\cite{Diestel00}, 
%\emph{carving-width} ($\mathbf{carvw}$) \cite{SeymourThomas94}, 
%\emph{cutwidth} ($\mathbf{cutw}$) \cite{Chung85}, 
and \emph{treedepth} ($\mathbf{td}$)
\cite{NesetrilMendez06}. We also compare to the maximum over treewidth
and maximum degree, which we refer to as \emph{degree treewidth}
($\mathbf{degtw}$).
%The carving-width is the smallest adhesion of a tree-cut decomposition $(T,\mc{X})$, where $T$ is constrained to be ternary and the leaves of $T$ is mapped with $V(G)$ bijectively~\cite{SeymourThomas94}. The cut-width is the smallest adhesion of a tree-cut decomposition $(T,\mc{X})$, where $T$ is constrained to be a path and the nodes of $T$ are mapped with $V(G)$ bijectively\todo{do we have a ref for this?}. Computing the cut-width was known as the {\sc Minimum Cut Linear Arrangement} in the literature~\cite{GareyJohnson79}. The notion of tree-cut width generalizes both {\em cut-width} and {\em carving-width}, in the following sense. 

\begin{proposition}[\cite{Wollan15,MarxWollan14}]\label{prop:comparison}
There exists a function $h$ such that $\tw(G)\leq h(\tcw(G))$ and $\tcw(G)\leq 4\mathbf{degtw}(G)^2$ for any graph $G$.
%For any graph $G$:
%\begin{enumerate}
%\item $\tcw(G) \leq 
%\mathbf{carvw}(G) \leq 
%\mathbf{cutw}(G)$,
%\cite{Wollan15}
%\item there exists a function $h$ such that $\tw(G)\leq h(\tcw(G))$, 
%\cite{MarxW14}
%\item \label{comparison3}$\tcw(G)\leq 4\mathbf{degtw}(G)^2$.
%\cite{Wollan15}
%\end{enumerate}
\end{proposition}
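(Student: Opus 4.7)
The plan is to prove the two bounds independently, in each direction converting the decomposition witnessing one parameter into a decomposition witnessing the other.

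For the bound $\tcw(G)\leq 4\mathbf{degtw}(G)^2$, let $k=\mathbf{degtw}(G)$ and start from a rooted tree decomposition $(T,\{B_t\}_{t\in V(T)})$ of $G$ with $|B_t|\leq k+1$ and $\Delta(G)\leq k$. For each vertex $v$ let $t(v)$ be the node of $T$ closest to the root whose bag contains $v$, and set $X_t:=\{v:t(v)=t\}$; the pair $(T,\{X_t\})$ is a near-partition and hence a candidate tree-cut decomposition. For its adhesion I would invoke the standard separator property: any edge $uv\in E(G)$ crossing $e(t)$ has both endpoints in $B_t\cap B_{p(t)}$, a set of size at most $k+1$, which combined with $\Delta(G)\leq k$ gives $\adh(t)\leq (k+1)k$. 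For the torso size, I would classify the edges of $H_t$ as internal to $X_t$, joining $X_t$ to a consolidated vertex, or joining two consolidated vertices; the first two types together contribute at most $(k+1)k$ edges by handshaking on $X_t$, while the third corresponds via the tree-decomposition subtree property to a $G$-edge with both endpoints in $B_t\setminus X_t$, giving at most $\binom{k+1}{2}$ such edges. Since the $3$-centre retains only consolidated vertices of degree $\geq 3$, their number is at most $2|E(H_t)|/3=\mathcal{O}(k^2)$, so $\tor(t)=\mathcal{O}(k^2)$ and a careful count tightens the constant to $4k^2$.

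For the bound $\tw(G)\leq h(\tcw(G))$, let $k=\tcw(G)$ and take a rooted tree-cut decomposition $(T,\mathcal{X})$ of width $k$. I would construct a tree decomposition on the tree $T^*$ obtained from $T$ by subdividing every edge $e=(p(c),c)$ by an auxiliary node $m_e$. Writing $D_c$ for the at most $\adh(c)\leq k$ endpoints of $\cut(e(c))$ lying in $Y_c$ and $P_c$ for the corresponding at most $k$ endpoints in $V(G)\setminus Y_c$, set $B_{m_e}:=D_c\cup P_c$; for $t\in V(T)$ let $B_t$ be $X_t$ together with all vertices that must appear in $B_t$ to make their occurrence sets connected subtrees of $T^*$. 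Every $G$-edge is then either internal to some $X_t$ and covered by $B_t$, or belongs to $\cut(e)$ for a unique $e$ and covered by $B_{m_e}$, so the edge condition holds, and the subtree condition holds by construction.

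The main obstacle is bounding $|B_t|$. A node $t$ may have unboundedly many \emph{thin} children (those with $\adh\leq 2$), which do not enter $\tor(t)$; however, each such thin child $c$ contributes at most $2$ vertices to each of $D_c$ and $P_c$, and these are routed inside the auxiliary bag $B_{m_{e(c)}}$ of size at most $4$ rather than through $B_t$. A vertex $v$ is forced into $B_t$ only when $t$ lies strictly interior to the $T$-path between $t(v)$ and the host node of some $G$-neighbour of $v$; such a $v$ is then an endpoint of a cut edge at two distinct tree edges incident to $t$, and the number of \emph{bold} edges incident to $t$ is at most $\tor(t)+1\leq k+1$. Combining these observations with $|X_t|\leq\tor(t)\leq k$ gives $|B_t|=\mathcal{O}(k^2)$ and $|B_{m_e}|\leq 2k$, so $h(k)=\mathcal{O}(k^2)$ suffices.
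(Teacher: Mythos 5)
The paper cites this proposition from Wollan and Marx--Wollan and only proves the treewidth half itself, in the sharper explicit form of Proposition~\ref{prop:twtcw}, so that proof is the natural benchmark. Your first bound ($\tcw(G)\leq 4\,\mathbf{degtw}(G)^2$ via assigning every vertex to its topmost bag of a width-$k$ tree decomposition) is the standard argument and is essentially sound, although both of your ``both endpoints'' claims are false as stated: an edge crossing $e(t)$ is only guaranteed to have \emph{one} endpoint in $B_t\cap B_{p(t)}$, and an edge between two consolidated vertices only one endpoint in $B_t\setminus X_t$. The counts survive because one endpoint in a set of at most $k+1$ vertices of degree at most $k$ already gives $\bigoh(k^2)$ edges in each class; the exact constant $4$ is waved through but that is cosmetic.

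The genuine gap is in the second bound, at the sentence ``the number of bold edges incident to $t$ is at most $\tor(t)+1\leq k+1$.'' For an arbitrary (non-nice) tree-cut decomposition this is simply false, and it is exactly where all the work lies. Take $X_t=\{x\}$ and, for each $i\leq m$, a child $c_i$ with bag $\{a_i,b_i\}$ and edges $a_ix,\,b_ix,\,a_ib_i$, plus a further child $d_i$ with bag $\{u_i\}$ and the single edge $a_iu_i$. Each $c_i$ has adhesion $3$ (bold), yet in the torso at $t$ every $z_{d_i}$ has degree $1$ and is deleted, after which every $z_{c_i}$ has degree $2$ and is suppressed, so $\tor(t)=1$ while $t$ has $m$ bold children and the decomposition has width $3$. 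Your construction really does blow up there: $a_i$ is an endpoint of cut edges of both tree edges $(t,c_i)$ and $(t,d_i)$, so connectivity in $T^*$ forces $a_i\in B_t$ for every $i$, giving an unbounded bag. The missing ingredients are precisely what the paper uses for Proposition~\ref{prop:twtcw}: first normalize to a \emph{nice} tree-cut decomposition (Lemma~\ref{lem:nicetcdec}, Theorem~\ref{thm:decompsize}), which eliminates configurations like the one above (each $d_i$ is a bad node and gets rerouted below $c_i$), and then invoke Lemma~\ref{lem:Asmall}, which bounds by $2k+1$ the number of children that are bold or thin with neighbours outside $X_t$; that lemma is proved by a delicate analysis of the suppression sequence and cannot just be asserted. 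Granting niceness and Lemma~\ref{lem:Asmall}, your counting of the vertices forced into $B_t$ does go through and yields $\bigoh(k^2)$, essentially the paper's $2k^2+3k$; your subdivision bags are a mild variant of the paper's truncation of the subtrees $T(v)$ at $B$-thin nodes and handle $B$-thin children equally well, but they are not a substitute for niceness or for Lemma~\ref{lem:Asmall}.
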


Below, we provide a new explicit bound on the relationship between treewidth and tree-cut width, and show that it is incomparable with pathwidth and treedepth. Since the proof of the following Proposition~\ref{prop:twtcw} relies on Theorem~\ref{thm:decompsize} and notation introduced later, we postpone its proof to Section~\ref{sec:algorithms}.

\begin{proposition}%[$\star$]
\label{prop:twtcw}
For any graph $G$ it holds that $\tw(G)\leq 2\tcw(G)^2+3\tcw(G)$.
\end{proposition}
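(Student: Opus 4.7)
The plan is to convert a width-$k$ tree-cut decomposition $(T,\mathcal{X})$ of $G$ (where $k:=\tcw(G)$) into a tree decomposition on the same tree $T$ of width at most $2k^2+3k$. Using the structural framework of Section~\ref{sec:algorithms} (in particular Theorem~\ref{thm:decompsize}), I would first replace $(T,\mathcal{X})$ by a \emph{nice} tree-cut decomposition of the same width, rooted at some node $r$. For each $t\in V(T)$ I would then define
\[
B_t \;=\; X_t \;\cup\; V(\cut(e(t))) \;\cup\; \bigcup_{c \text{ a bold child of } t} V(\cut(e(c))),
\]
where $V(F)$ denotes the set of endpoints of a set $F$ of edges, and claim that $(T,\{B_t\}_{t\in V(T)})$ is a tree decomposition of $G$ of the desired width.

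The size bound is a direct calculation. Since $X_t\subseteq V(\tilde H_t)$ and every bold child $c$ (i.e. $\adh(c)\geq 3$) contributes its consolidated vertex $z_c$ to $V(\tilde H_t)$, we have $|X_t|+\#\{\text{bold children of }t\}\leq \tor(t)\leq k$; in particular $|X_t|\leq k$ and $t$ has at most $k$ bold children. Combining this with $|V(\cut(e(t)))|\leq 2\adh(t)\leq 2k$ and $|V(\cut(e(c)))|\leq 2\adh(c)\leq 2k$ for every bold child $c$ gives $|B_t|\leq k+2k+k\cdot 2k=2k^2+3k$, as required.

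Coverage of vertices and edges would then be handled by a short case analysis on the position of an edge's endpoints' homes in $T$: every $v\in V(G)$ lies in $B_{t_v}$ via $X_{t_v}$; and every edge $uv$ with $u\in X_{t_u}$, $v\in X_{t_v}$ and $t_u\neq t_v$ lies in $\cut(e(t_u))$ whenever $t_v\notin V(T_{t_u})$ (whence $u,v\in B_{t_u}$) and symmetrically for $t_v$, so at least one of these two options works.

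The main obstacle will be verifying the connectivity axiom: for every $v\in V(G)$, the set $\{t:v\in B_t\}$ must induce a connected subtree of $T$. For a fixed neighbor $w$ of $v$, the edge $vw$ puts $v$ into $B_s$ for every $s$ on the $T$-path from $t_v$ to $t_w$ \emph{except possibly} their least common ancestor $s^\star$; the node $s^\star$ is captured precisely when one of its two children on this path is bold. Here I would invoke the crucial structural property of nice tree-cut decompositions: for every thin child $c$ of a node $t$, the outside endpoints of $\cut(e(c))$ all lie in $X_t$. If the two children $c_v,c_w$ of $s^\star$ lying on the $t_v$-$t_w$ path were both thin, applying this property to $c_v$ would force $w\in X_{s^\star}$ while applying it to $c_w$ would force $v\in X_{s^\star}$, contradicting the fact that $t_v,t_w$ are strict descendants of $s^\star$. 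Hence at least one of $c_v,c_w$ is bold, so $v\in B_{s^\star}$, and $\{t:v\in B_t\}$ is a connected union of paths emanating from $t_v$.
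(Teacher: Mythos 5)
Your construction and the incidence/connectivity arguments are sound, and they take a genuinely different route from the paper (which adds each vertex $v$ to a truncated subtree $T'_v$ of the minimal subtree covering $N[v]$ rather than building bags from cut endpoints); in particular, your use of niceness at the least common ancestor is correct -- in fact it shows that \emph{both} children $c_v,c_w$ on the path must be bold, since an edge between $Y_{c_v}$ and $Y_{c_w}$ already violates the niceness condition for whichever of the two is thin. The genuine gap is in the width bound. Your key counting claim -- that every bold child $c$ contributes its consolidated vertex $z_c$ to $V(\tilde H_t)$, and hence $|X_t|+\#\{\text{bold children of }t\}\leq \tor(t)\leq k$ -- is false. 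Suppression can delete $z_c$: suppressing a degree-one vertex (typically $z_{top}$, the vertex consolidating $V(G)\setminus Y_t$ when $\adh(t)=1$) lowers the degree of an adjacent bold $z_c$ from $3$ to $2$, after which $z_c$ itself is suppressed. Concretely, take $X_t=\{x\}$ and one bold child $c$ whose cut consists of two edges to $x$ and one edge leaving $Y_t$: then $z_{top}$ is deleted, $z_c$ drops to degree $2$ and is suppressed, and $\tor(t)=1<|X_t|+1$. Worse, the conclusion itself fails: a node can have $k+1$ bold children in a width-$k$ nice decomposition (e.g.\ $X_t=\emptyset$, children $c_1,\dots,c_k$ carrying single vertices that form a $K_k$, plus a child $c_0$ carrying a vertex adjacent to $c_1$, $c_2$ and to one vertex outside $Y_t$; then $z_{top}$ and $z_{c_0}$ are suppressed and $\tor(t)=k$). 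This is exactly why Lemma~\ref{lem:Asmall} requires its delicate suppression-sequence argument and only yields the bound $k+1$ on the bold children ($|A''_t|\leq k+1$). With $k+1$ bold children your count gives bags of size up to $k+2k+(k+1)\cdot 2k=2k^2+5k$, which overshoots the claimed $2k^2+3k$.

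The construction can nevertheless be salvaged, because your per-child count of $2\adh(c)$ is wasteful. By niceness, every outside endpoint of $\cut(e(c))$ for a bold child $c$ lies either in $X_t$, or outside $Y_t$ (and is then an endpoint of $\cut(e(t))$), or in $Y_b$ for a \emph{bold} sibling $b$ (thin siblings cannot see $Y_c$), where it is an inside endpoint of $\cut(e(b))$. Hence your bag is contained in $X_t\cup V(\cut(e(t)))\cup\bigcup_{c\text{ bold}}\{\text{endpoints of }\cut(e(c))\text{ inside }Y_c\}$, of size at most $k+2k+(k+1)\cdot k=k^2+4k\leq 2k^2+3k$ for $k\geq 1$, using Lemma~\ref{lem:Asmall} (or its proof) to bound the number of bold children by $k+1$. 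The paper's own accounting is the analogous per-adhesion count $|Z_t|\leq |X_t|+\adh(t)+\sum_{t'\in A_t}\adh(t')\leq 2k^2+3k$. So: replace the false torso-size inequality by an appeal to Lemma~\ref{lem:Asmall} plus the refined endpoint count, and your proof goes through; as written, the width bound is not established. (Minor point: your bags $B_t$ clash notationally with the paper's set $B_t$ of thin children of $t$.)
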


\begin{proposition}%[$\star$]
\label{prop:tcwtdpw}
There exists a graph class $\cH_1$ of bounded pathwidth and treedepth but unbounded tree-cut width, and there exists a graph class $\cH_2$ of bounded tree-cut width but unbounded pathwidth and treedepth.
\end{proposition}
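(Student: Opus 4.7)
The plan is to take $\cH_1$ to be the family of complete bipartite graphs $\{K_{3,n}\}_{n \geq 1}$ and $\cH_2$ to be the family of complete binary trees $\{T_d\}_{d \geq 1}$.

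For $\cH_2$, I would first observe that any tree $G$ admits the trivial tree-cut decomposition in which $T := G$ and every bag is a singleton: every non-root edge has adhesion exactly $1$, and in every torso each consolidated child-subtree contributes a single vertex of degree $1$, which the $3$-center suppresses, leaving only the bag's own singleton. Hence $\tcw(T_d) = 1$ for every $d$, while both $\pw(T_d)$ and $\td(T_d)$ grow linearly with $d$ by standard results on complete binary trees (the treedepth obeys the recurrence $\td(T_d) = 1 + \td(T_{d-1})$, and the pathwidth is $\Theta(d)$).

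For $\cH_1$, the path decomposition with bag sequence $\{u_1,u_2,u_3,w_i\}_{i=1}^{n}$ yields $\pw(K_{3,n}) \leq 3$, and stacking $u_1,u_2,u_3$ into a chain with every $w_j$ as a depth-$4$ leaf gives $\td(K_{3,n}) \leq 4$. The main work is the lower bound on $\tcw(K_{3,n})$. Given any rooted tree-cut decomposition $(T,\cX)$ of $K_{3,n}$, I would take $t^*$ to be the lowest common ancestor in $T$ of the bags housing $u_1, u_2, u_3$: this node satisfies $\{u_1,u_2,u_3\} \subseteq Y_{t^*}$ but $\{u_1,u_2,u_3\} \not\subseteq Y_c$ for every child $c$ of $t^*$. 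I would then split into two cases.

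In the \emph{concentrated} case $\{u_1,u_2,u_3\} \subseteq X_{t^*}$, each $w_i \notin X_{t^*}$ lies in a child-subtree of $t^*$ that contains no $u_j$, so the consolidated vertex of that subtree has at least one edge to each of $u_1, u_2, u_3$ in $H_{t^*}$ (degree $\geq 3$) and survives the $3$-center. Letting $m$ be the maximum number of $w$'s in any one child-subtree, one obtains $\tor(t^*) \geq n/m$ while any child holding $m$ of the $w$'s has adhesion at least $3m$; balancing the two constraints yields width at least $\sqrt{3n}$. In the \emph{split} case, some child $c$ of $t^*$ has $U_c := Y_c \cap \{u_1,u_2,u_3\}$ with $|U_c| \in \{1,2\}$, and a direct count of crossing edges gives $\adh(c) = |U_c|(n-|W_c|) + (3-|U_c|)|W_c| \geq n$, where $W_c = Y_c \cap \{w_1,\dots,w_n\}$. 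Either way the width is $\Omega(\sqrt{n})$, so $\tcw(K_{3,n})$ is unbounded.

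The hard part will be the lower bound for $\tcw(K_{3,n})$: one has to set up the "Steiner node" $t^*$ correctly in every decomposition, and then verify that splitting the three left-vertices across the decomposition is always expensive in adhesion while concentrating them forces many degree-$3$ consolidations to survive in the torso. Once the minimal-node framing is fixed, both case computations reduce to short arithmetic on cut sizes and vertex counts in $K_{3,n}$.
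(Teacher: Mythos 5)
Your construction is correct in spirit but takes a genuinely different route from the paper on the family witnessing unbounded tree-cut width. The paper uses the graphs $S_n$ obtained from a star with $n$ leaves by replacing each edge with $n$ subdivided edges: there $\pw=2$, $\td=3$, and the lower bound $\tcw(S_n)\ge n$ follows in two lines, since any two leaves are joined by $n$ edge-disjoint paths, so in a decomposition of width below $n$ no tree edge may separate them, forcing all leaves into one bag and hence a torso of size at least $n$. Your $K_{3,n}$ also works, but only yields $\Omega(\sqrt{n})$ (which is in fact the right order for $K_{3,n}$: putting $\{u_1,u_2,u_3\}$ in a root bag and the $w_i$ in blocks of size about $\sqrt{n/3}$ gives width $O(\sqrt{n})$), and it requires the LCA case analysis you sketch rather than a one-line edge-connectivity argument. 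For the bounded tree-cut width family, your complete binary trees with the identity decomposition (singleton bags, $T=G$, every adhesion $1$, every consolidated vertex of degree $1$ suppressed) are essentially the same idea as the paper's full ternary trees; the paper instead invokes $\tcw\le 4\,\mathbf{degtw}^2$ from Proposition~\ref{prop:comparison}, while your explicit width-$1$ decomposition is more elementary and equally valid.

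One step of your concentrated case needs repair. The claim that ``each $w_i\notin X_{t^*}$ lies in a child-subtree of $t^*$'' is false when $t^*$ is not the root: such a $w_i$ may lie outside $Y_{t^*}$, in which case it is absorbed into the single parent-side consolidated vertex and contributes nothing to the count behind $\tor(t^*)\ge n/m$. The patch is easy: since all three $u_j$ lie in $X_{t^*}$, every $w_i$ outside $Y_{t^*}$ contributes $3$ edges to $\cut(e(t^*))$, so there are at most $\adh(t^*)/3\le k/3$ such vertices, where $k$ is the width; vertices $w_i\in X_{t^*}$ only increase $\tor(t^*)$. This still gives $k\ge\tor(t^*)\ge (n-k/3)/m$ together with $3m\le k$, hence $k=\Omega(\sqrt{n})$, which suffices for unboundedness. (Also, your assertion that the degree-$\ge 3$ consolidated vertices ``survive the $3$-center'' deserves one sentence: their only neighbours are $u_1,u_2,u_3\in X_{t^*}$, which are never suppressed, so their degree cannot drop during suppression.) With these patches your proof is sound; if you want the cleaner linear lower bound without case analysis, the paper's subdivided star is the simpler witness.
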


\begin{proof}
Let $\cH_1$ be the class of full ternary trees. It is easy to see that, for each ternary tree $T_n$ of depth $n$, $\pw(T_n)\in \Theta(n)$ 
and $\td(T_n)\in \Theta(n)$~\cite{TAKAHASHI1994}. On the other hand, the tree-cut width of each $T_n$ is bounded by a constant by Proposition~\ref{prop:comparison}. 
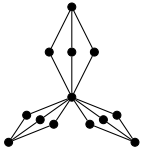
\begin{figure}[th]
\begin{flushleft}
\begin{center}
\begin{tikzpicture}[scale=1.2]
\tikzstyle{every node}=[draw, shape=circle, minimum size=3pt,inner sep=1pt, fill=black]

\draw (0,0) node (x){};
\draw (-0.7,-0.5) node (a){};
\draw (0.7,-0.5) node (b){};
\draw (0,1) node (c){};

\draw (-0.35,-0.25) node (a1){};
\draw (-0.5,-0.2) node (a2){};
\draw (-0.2,-0.3) node (a3){};

\draw (0.35,-0.25) node (b1){};
\draw (0.5,-0.2) node (b2){};
\draw (0.2,-0.3) node (b3){};

\draw (0,0.5) node (c1){};
\draw (-0.25,0.5) node (c2){};
\draw (0.25,0.5) node (c3){};

\draw (x)--(c1)--(c); \draw (x)--(c2)--(c); \draw (x)--(c3)--(c);
\draw (x)--(a1)--(a); \draw (x)--(a2)--(a); \draw (x)--(a3)--(a);
\draw (x)--(b1)--(b); \draw (x)--(b2)--(b); \draw (x)--(b3)--(b);

\end{tikzpicture}
\end{center}
\end{flushleft}
\caption{The graph $S_3$.}
\end{figure}
Let $\cH_2$ be the class of graphs $S_n$ obtained from a star with $n$ leaves $z_1,\ldots , z_n$ by replacing each edge with $n$ subdivided edges. It is easy to see that $\pw(S_n)=2=\td(S_n)-1$. We verify that $\tcw(S_n)\geq n$: suppose $\tcw(S_n)\leq n-1$. Then any two leaves $z_i$ and $z_j$, $i\neq j$, must be contained in the same bag in any tree-cut decomposition of width at most $n-1$ as they are connected by $n$ edge-disjoint paths. This means there exists a bag $t$ containing all $z_i$'s in any such tree-cut decomposition, which however implies that $\tor(t)\geq n$.
\end{proof}

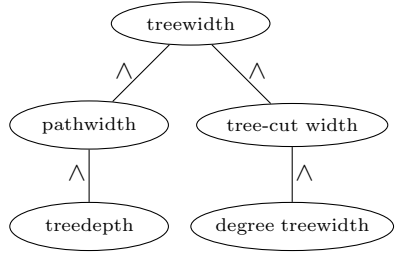
\begin{figure}[h]
\begin{center}
\vspace{-0.4cm}
\begin{tikzpicture}[scale=0.9]
\tikzstyle{cloud} = [draw, ellipse, minimum height=0em, minimum width=6em]

%\node (vc) at (0,-3)[cloud] {\small Vertex cover};
\node (tcw) at (1.5,3.5)[cloud] {\scriptsize \!\!tree-cut width\!\!};
\node (pw) at (-1.5,3.5)[cloud] {\scriptsize pathwidth};
\node (td) at (-1.5,2)[cloud] {\scriptsize treedepth};
\node (tw) at (0,5)[cloud] {\scriptsize treewidth};
\node (dtw) at (1.5,2)[cloud] {\scriptsize \!\!\!\!degree treewidth\!\!\!\!};
%\node (carw) at (0,2)[cloud] {\scriptsize \!\!carving-width\!\!};
%\node (cutw)at (0,2)[cloud] {\scriptsize cutwidth};

\draw (dtw) -- (tcw) -- (tw);
\draw (td) -- (pw) -- (tw);
%\draw (cutw) -- (pw); 
%\draw (cutw) -- (tcw);
%\draw (cutw) -- (dtw); 

\tikzstyle{every node}=[]
\tikzstyle{empty}=[draw, shape=circle, minimum size=3pt,inner sep=0pt, fill=white, color=white]

\node (vk) at (1.3,4.3)[empty, label=left:\begin{sideways}$>$\end{sideways}] {};
\node (lr) at (-1.3,4.3)[empty, label=right:\begin{sideways}$>$\end{sideways}] {};
%\node (lr) at (1.3,2.7)[empty, label=left:\begin{sideways}$>$\end{sideways}] {};
%\node (vp) at (-1.3,2.7)[empty, label=right:\begin{sideways}$>$\end{sideways}] {};
\node (aa) at (2,2.8)[empty, label=left:\begin{sideways}$>$\end{sideways}] {};
\node (bb) at (-2,2.8)[empty, label=right:\begin{sideways}$>$\end{sideways}] {};
%\node (cc) at (-0.1,1.5)[empty, label=right:\begin{sideways}$>$\end{sideways}] {};
%\node (dd) at (1.05,2.2)[empty, label=right:$>$] {};

\end{tikzpicture}
\end{center}
%\vspace{-0.4cm}
\caption[center]{Relationships between selected graph parameters
  ($A\!\! > \!\!B$ means that every graph class of bounded $A$ is
  also of bounded $B$, but there are graph classes of bounded $B$
  which are not of bounded $A$).}
  \vspace{-0.5cm}
%exists a function $f$ such that for all graphs $G$, $f(A(G)) > B(G)$).}
\end{figure}
%\todo{add references for all relations?}

\section{Nice Tree-Cut Decompositions}
\label{sec:nice}

In this section we introduce the notion of a {\em nice tree-cut decomposition} and present an algorithm to transform any tree-cut decomposition into a nice one without increasing the width. 
A tree-cut decomposition $(T,\mc{X})$ rooted at $r$ is \emph{nice} if it satisfies the following condition for every thin node $t\in V(T)$:
$N(Y_t)\cap \bigcup_{b\text{ is a sibling of }t}Y_b=\emptyset.$
%Equivalently, this condition can be restated as follows: 
%\[N(Y_t)\subseteq X_{p(t)} \cup (V(G)\setminus Y_{p(t)}). \]

\smallskip
The intuition behind nice tree-cut decompositions is that we restrict the neighborhood of thin nodes in a way which facilitates dynamic programming. 
%The following theorem summarizes the main result of this section.

\begin{lemma}%[$\star$]
\label{lem:nicetcdec}
There exists a cubic-time algorithm which transforms any rooted tree-cut decomposition $(T,\mc{X})$ of $G$ into a nice tree-cut decomposition of the same graph, without increasing its width or number of nodes.
\end{lemma}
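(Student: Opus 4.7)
The plan is to iteratively eliminate violations of the niceness condition. Call a pair $(t, b)$ a \emph{violation} if $t$ is a thin node, $b$ is a sibling of $t$ in $T$, and there is an edge of $G$ between $Y_t$ and $Y_b$. If such a violation exists, I will repair it by \emph{re-parenting} $t$ below $b$: remove the tree edge $(t, p(t))$ and add a new tree edge $(t, b)$, so that the whole subtree $T_t$ (with its bags intact) now hangs below $b$. This operation preserves the node count of $T$, and once no violation remains the decomposition is nice by definition.

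For termination I will use the potential $\Phi(T) := \sum_{v \in V(G)} \mathrm{depth}_T(\mu(v))$, where $\mu(v)$ denotes the unique bag containing $v$. A single re-parenting of $t$ increases the depth of every node in $T_t$ by exactly one, hence the depth of every vertex in $Y_t$ is raised by one and $\Phi$ grows by $|Y_t| \geq 1$ per iteration (here $|Y_t| \geq 1$ holds because the violation requires at least one $G$-edge incident to $Y_t$). Since $\Phi \leq |V(G)| \cdot |V(T)| = O(n^2)$, and since each iteration takes $O(n)$ time to locate a violation and perform the swap, the total running time is $O(n^3)$.

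The crucial step is to show that a single re-parenting preserves the width. For any $t' \in V(T)$ that either lies in $T_t$, lies outside $T_{p(t)}$, or is a strict ancestor of $p(t)$, both $Y_{t'}$ and the partition of $T - t'$ into connected components are unaffected, so neither $\adh(t')$ nor $\tor(t')$ changes. It therefore suffices to analyse the two nodes $p(t)$ and $b$. Write $c_1 \geq 1$ for the number of $G$-edges between $Y_t$ and $Y_b$, and $c_2 := \adh(t) - c_1 \leq 1$. At $p(t)$ the set $Y_{p(t)}$ is unchanged, so $\adh(p(t))$ is preserved; moreover, the new torso $H_{p(t)}'$ is obtained from $H_{p(t)}$ by identifying $z_t$ with $z_b$ and discarding the $c_1$ loops thereby created, which removes one non-bag vertex. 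At $b$, the set $Y_b$ grows to $Y_b \cup Y_t$, yielding $\adh(b)^{\mathrm{new}} = \adh(b) + c_2 - c_1 \leq \adh(b)$; additionally, $H_b'$ gains one new consolidated vertex $z_t$ of degree $\adh(t) \leq 2$, which is therefore automatically suppressed by the $3$-center operation.

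The main obstacle will be establishing that $\tor(p(t))^{\mathrm{new}} \leq \tor(p(t))$ and $\tor(b)^{\mathrm{new}} \leq \tor(b)$ despite the subtle interplay between re-parenting and the cascaded suppressions involved in the $3$-center construction. My strategy is to compare vertex degrees in the old and new torsos: at both $p(t)$ and $b$, the operation relating the two torsos (vertex identification with loop removal at $p(t)$; vertex splitting followed by suppression of the degree-$\leq 2$ vertex $z_t$ at $b$) can only decrease the degrees of other consolidated vertices. Since the $3$-center iteratively suppresses non-bag vertices of degree at most two, decreasing an initial degree can only enable further suppressions. A short case analysis on $(c_1, c_2) \in \{(1, 0), (1, 1), (2, 0)\}$ then verifies that each modified torso produces a $3$-center containing no more vertices than the original, so no node's width ever increases.
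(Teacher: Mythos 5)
Your proposal is correct, but it takes a genuinely different route from the paper's in the two places where real choices have to be made. The paper does not re-parent $t$ under the offending sibling $b$ itself: it hangs $t$ under the \emph{deepest} node $b'$ (possibly far inside the sibling's subtree) whose bag $X_{b'}$ contains a neighbor of $Y_t$, and it processes bad nodes strictly top-down; this careful choice is what powers its termination argument, an amortized ``bad pair'' count showing that at most $2|V(T)|$ reroutings ever occur (each thin node can be involved in at most two pairs since $\adh(t)\leq 2$, and once $t$ hangs below its bad neighbor it stays a descendant of it). You instead reattach $t$ directly under the sibling and pay for it with the depth-sum potential $\Phi$, accepting $O(|V(G)|\cdot|V(T)|)$ re-parentings; in exchange you may repair violations in any order and your width analysis is genuinely simpler, since only $\adh(b)$ and the torsos at $b$ and $p(t)$ change, whereas the paper must argue about every cut along the tree path from the sibling down to the deep reattachment point (using thinness to show at most one new edge enters each such cut). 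Two small points to tighten: first, your bounds $\Phi=O(n^2)$ and ``$O(n)$ per iteration'' silently assume $|V(T)|=O(n)$, which is not guaranteed for an arbitrary input decomposition (the paper only prunes empty nodes later, in Theorem~\ref{thm:decompsize}); the honest bound is $O(|V(G)|\cdot|V(T)|)$ iterations, which is still comfortably cubic in the input size. Second, ``degrees only decrease, hence the $3$-center cannot grow'' is not literally a proof, since the $3$-center depends on more than the degree sequence; the clean statement is that the new torso at $p(t)$ (respectively, at $b$ after suppressing the degree-$\leq 2$ vertex $z_t$) is obtained from the old one by at most one edge deletion or one lift, and the $3$-center size is monotone under suppressions, deletions and lifts -- your planned case analysis on $(c_1,c_2)$ does exactly this, and it is at least as explicit as the paper's own ``it is easy to verify that $\tor_T(b)\geq\tor_{T'}(b)$''.
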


The proof of Lemma~\ref{lem:nicetcdec} is based on the following considerations.
%The remainder  of this section is devoted to proving Lemma~\ref{lem:nicetcdec}. 
Let $(T,\mc{X})$ be a rooted tree-cut decomposition of $G$ whose width is at most $w$. We say that a node $t$, $t \neq r$, is {\em bad} if it violates the above condition, i.e., $\adh(t)\leq 2$ and there is a sibling $b$ of $t$ such that $N(Y_t)\cap Y_b\neq \emptyset$. For a bad node $t$, we say that $b$ is a {\em bad neighbor} of $t$ if $N(Y_t) \cap X_b\neq \emptyset$ and $b$ is either a sibling of $t$ or a descendant of a sibling of $t$. %Clearly, a thin node $t$ is bad if and only if it has at least one bad neighbor, and any bad node can have at most two bad neighbors. %We define a \textsc{Rerouting} procedure on $T$ which will be applied to bad nodes.
In order to construct a tree-cut decomposition with the claimed property, we introduce a rerouting operation in which we pick a bad node 
and relocate it, that is, change the parent of the selected bad node. 

\begin{quote}
\textsc{Rerouting($t$)}: let $t$ be a bad node and let $b$ be a bad neighbor of $t$ of maximum depth (resolve ties arbitrarily). Then remove the tree edge $e(t)$ from $T$ and add a new tree edge $\{b,t\}$.
\end{quote}
%
%Notice that an application of {\sc Rerouting} does not change the size of the tree. 
%We propose a transformation of $T$ which applies the {\sc Rerouting} procedure in a specific manner. 

\begin{quote}
{\sc Top-down Rerouting}: as long as $(T,\mc{X})$ is not a nice tree-cut decomposition, pick any bad node $t$ of minimum depth. Perform {\sc Rerouting($t$)}.
\end{quote}
The selection criteria of a new parent in \textsc{Rerouting($t$)} ensures 
that no new bad node is created at the depth of $p(t)$ or at a higher depth. This, together with the priority given to a bad node of minimum depth in {\sc Top-down Rerouting},  
implies that the tree gradually becomes free from bad nodes in a top-down manner.

\begin{lemma}\label{lem:rerout}
The procedure {\sc Rerouting($t$)} does not increase the width of the tree-cut decomposition.
\end{lemma}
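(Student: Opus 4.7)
My approach is to show that both $\adh(c)$ and $\tor(c)$ are non-increasing at every node $c$, which immediately implies the width does not increase. Let $u = p^{\mathrm{old}}(t)$, let $s$ be the (old) sibling of $t$ whose subtree contains $b$, and let $P = (s = p_1, \ldots, p_\ell = b)$ be the path from $s$ to $b$ in $T^{\mathrm{old}}$. A routine case analysis on the only tree-edge change $\{u,t\}\!\to\!\{b,t\}$ shows that for every $c \notin P \cup \{u\}$, both $Y_c$ and the vertex-partition of $T-c$ into connected components coincide in $T^{\mathrm{old}}$ and $T^{\mathrm{new}}$; in particular $Y_u$ is unchanged since $t$ stays inside $T_u$. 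Hence $\adh(c)$ and $H_c$ (and therefore $\tor(c)$) are unaffected outside $\{u\} \cup P$, localising the problem.

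For the adhesion at $c \in P$ we have $Y_c^{\mathrm{new}} = Y_c^{\mathrm{old}} \cup Y_t$, and a direct count of edges in the new cut yields
\[
\adh(c)^{\mathrm{new}} - \adh(c)^{\mathrm{old}} = \adh(t) - 2\,|E(Y_t, Y_c^{\mathrm{old}})|.
\]
Since $c$ is an ancestor of $b$, we have $X_b \subseteq Y_c^{\mathrm{old}}$; the bad-neighbor condition $N(Y_t) \cap X_b \neq \emptyset$ gives $|E(Y_t, X_b)| \geq 1$, so $|E(Y_t, Y_c^{\mathrm{old}})| \geq 1$. Combined with $\adh(t) \leq 2$ (as $t$ is thin) the right-hand side is non-positive.

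For the torso, only a small number of consolidated vertices change. At $c = u$, the sibling child-components $Y_s^{\mathrm{old}}$ and $Y_t$ merge into $Y_s^{\mathrm{new}}$, replacing $z_s$ and $z_t$ by a single vertex $z_s'$ of initial degree $\adh(s) + \adh(t) - 2|E(Y_s^{\mathrm{old}}, Y_t)|$, which is at most $\adh(s)$ (using $|E|\geq 1$ and $\adh(t)\leq 2$). At each $c \in P$, $Y_t$ migrates from the upper component to the descendant component on the path to $b$, and the initial degrees of the two affected consolidated vertices equal the corresponding adhesions, which are non-increasing by the previous paragraph. At $c = b$, a new vertex $z_t$ of initial degree $\adh(t) \leq 2$ is introduced and is immediately suppressed, while the upper vertex $z_U$ is replaced by $z_U'$ with $\deg(z_U) - \deg(z_U') = \alpha + \sum_i \gamma_i - \beta \geq 0$ in the notation of the cut computation (the inequality using $\alpha \geq 1$ and $\beta \leq 1$). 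In each case the initial degree of every non-$X_c$ consolidated vertex is non-increasing, so a vertex that was suppressed by default in $H_c^{\mathrm{old}}$ (initial degree $\leq 2$) remains so in $H_c^{\mathrm{new}}$.

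The main obstacle is upgrading this initial-degree comparison to a statement about the actual $3$-centers, which are computed by exhaustive cascading suppression. The crucial simplifying fact is that suppressing a degree-$2$ vertex preserves every other vertex's degree, because the two incident edges are replaced by a single edge (or loop) between its neighbours; only degree-$\leq 1$ cascades can propagate. Since the consolidated vertices $z_{d_i}$ unaffected by the rerouting are identical in $H_c^{\mathrm{old}}$ and $H_c^{\mathrm{new}}$, the cascades they trigger occur identically in both, and the only ``extra'' cascade introduced by the rerouting is the suppression of the small-degree vertex tied to $Y_t$, whose effect on any remaining vertex is contained by the same initial-degree bound. This lets me conclude $|\tilde{H}_c^{\mathrm{new}}| \leq |\tilde{H}_c^{\mathrm{old}}|$ in each of the cases $c=u$, $c\in P\setminus\{b\}$, and $c=b$, completing the proof.
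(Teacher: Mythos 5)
Your proposal is correct and follows essentially the same route as the paper: it establishes that both $\adh$ and $\tor$ are non-increasing at every node, localizes all changes to the path between $p(t)$ and $b$, and uses $\adh(t)\leq 2$ together with the bad-neighbor edge into $X_b$ to bound the changes. Your torso analysis is more detailed than the paper's (which merely observes that the tree-neighborhood grows only at $b$ and that $t$ is thin), and its final cascading-suppression step is argued at roughly the same level of informality as the paper's own Claim (2).
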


\begin{proof}
Let $b$ be the bad neighbor of $t$ chosen by \textsc{Rerouting($t$)}, and let $(T',\mc{X})$ be the tree-cut decomposition obtained from the tree-cut decomposition $(T,\mc{X})$ by \textsc{Rerouting($t$)}. We will show that for each $z\in V(T)$, it holds that 
\begin{enumerate}
\item[(1)] $\adh_T(z)\geq \adh_{T'}(z)$, and 
\item[(2)] $\tor_T(z)\geq \tor_{T'}(z)$, 
\end{enumerate}
from which the lemma follows.

Let us first consider Claim (1). Let $P$ be the set of edges on the path in $T$ between $b$ and $p(t)$. Slightly abusing the notation, 
we also denote the path in $T'$ connecting $b$  and $p(t)$ by $P$. Then for any edge $p$ of $T'$ 
which is not on $P$, it holds that $\cut_T(p)=\cut_{T'}(p)$, and hence Claim (1) holds for all nodes $z$ which are not on $P$ and also for $z=p(t)$. 
As for the remaining nodes $z$ on $P$, note that $z\neq t$ and it holds that every edge between $X_b$ and $X_t$ lies in $\cut_T(e(z))\setminus \cut_{T'}(e(z))$. 
Furthermore, by thinness of $t$ and the fact that $X_t$ has a neighbor in $X_b$, there may exist at most one edge $e'$ such that $e' \in \cut_{T'}(e(z))\setminus \cut_T(e(z))$, and hence either $\adh_T(z)= \adh_{T'}(z)$ or $\adh_T(z)= \adh_{T'}(z)+1$. Finally, note that $\adh_T(t)= \adh_{T'}(t)$. Thus Claim (1) holds.

Now we consider Claim (2). From Claim (1) it follows that Claim (2) may only be violated if $N_T(z)\subset N_{T'}(z)$, which is only the case for $z=b$. However, it is easy to verify that $\tor_T(b)\geq \tor_{T'}(b)$ since $\{t\}=N_{T'}(b)\setminus N_{T}(b)$ and $t$ is thin.
\end{proof}

\begin{lemma}%[$\star$]
\label{lem:topdown}
{\sc Top-down Rerouting} terminates after performing {\sc Rerouting} at most $2|T|$ times, where $(T,\mc{X})$ is the initial tree-cut decomposition.
\end{lemma}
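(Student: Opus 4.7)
The plan is to establish three structural properties of Rerouting and then use them in a phase-based counting argument to bound the total number of iterations by $2|T|$.

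First, I would verify three properties of a single Rerouting($t$). (a) The depth of $t$ strictly increases: since $b$ is a sibling of $t$ or a descendant of a sibling, the depth of $b$ is at least that of $t$, and so $t$'s new depth (one more than that of $b$) is strictly greater. In particular, $t$'s whole subtree is pushed strictly deeper. (b) Immediately after Rerouting($t$), the node $t$ is not bad: otherwise some new sibling $b'$ of $t$ (a child of $b$ in the old tree) would give $N(Y_t)\cap Y_{b'}\neq\emptyset$ via some witness $v$ lying in the bag $X_c$ of a descendant $c$ of $b'$; then $c$ would be a descendant of a sibling of $t$ with $N(Y_t)\cap X_c\neq\emptyset$, hence a bad neighbor of $t$ at depth strictly greater than that of $b$, contradicting the maximality of $b$'s depth among bad neighbors. (c) Rerouting($t$) creates no new bad node at depth strictly smaller than that of $t$: the only subtrees $Y_z$ that change are for nodes $z$ on the path from $p(t)$ to $b$ in $T$, all of which have depth at least that of $p(t)$; the vertex set $Y_{p(t)}$ itself is unchanged; and by Lemma~\ref{lem:rerout} all adhesions only weakly decrease. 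Hence no previously non-bad node at smaller depth can freshly become bad.

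Combining (b) and (c) with the minimum-depth priority of Top-down Rerouting, the depth of the shallowest bad node is monotonically non-decreasing throughout the algorithm, partitioning the execution into phases indexed by the current minimum bad depth $d$. Inside a single phase at depth $d$, each Rerouting($t$) permanently removes $t$ from depth $d$ by (a), and the only new bad node that can appear at depth $d$ is $z_1$, the first node on the path from $p(t)$ to $b$, which is a sibling of $t$ in the old tree.

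Finally, I would complete the bound by an amortized charging argument in which each rerouting at depth $d$ is charged either to the node it permanently expels from depth $d$ or to the freshly bad sibling $z_1$. A careful case analysis then shows that each of the $|T|$ nodes participates in at most two reroutings across the entire algorithm---once as an ``initial'' bad node at its current depth, and at most once more as a sibling made newly bad by a neighbor's rerouting---yielding the bound $2|T|$. The main obstacle is precisely this last step: ruling out a third rerouting of a single node requires tracking how a rerouted node's depth and adjacencies interact with subsequent reroutings happening at deeper phases, and leveraging the fact that after its own rerouting the node is no longer bad while shallower phases can no longer disturb it.
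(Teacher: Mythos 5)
Your preliminary claims (a)--(c) are correct and carefully argued (in particular, your use of the maximum-depth choice of $b$ to show that $t$ is not bad immediately after \textsc{Rerouting}($t$) is sound), but the proof has a genuine gap exactly where you flag it: the bound of two reroutings per node is asserted (``a careful case analysis then shows''), not proven, and the charging dichotomy you propose --- each node is rerouted once as an ``initial'' bad node and at most once more as the freshly bad sibling $z_1$ --- is not exhaustive. A rerouting of $t$ under $b$ enlarges $Y_{z_i}$ by $Y_t$ for \emph{every} node $z_1,\dots,z_m=b$ on the path from $p(t)$ to $b$, and it gives every child of $b$ the new sibling $t$; hence any thin child of some $z_i$ or of $b$ whose neighborhood meets $Y_t$ can become newly bad. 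These nodes lie at depths strictly greater than $d$, so they do not contradict your phase monotonicity, but they do refute the claim that a node can only re-enter the bad state by being some rerouting's $z_1$. Since the heart of the lemma is precisely this counting, the argument as written does not establish the $2|T|$ bound.

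The missing step is closed in the paper by a different, adhesion-based charging argument rather than a depth-phase analysis. Each rerouting of $t$ is witnessed by an edge of $G$ from $Y_t$ into the bag $X_b$ of the chosen new parent $b$; after the rerouting, $b$ remains an ancestor of $t$ for the rest of the execution (every rerouted node stays inside the subtree of its current parent, and $Y_t$ never shrinks), so this witness edge stays in $\cut(e(t))$ forever and its outer endpoint never again lies in the subtree of a sibling of $t$. Witness edges of distinct reroutings of $t$ are distinct because the bags are pairwise disjoint and the new parent of a later rerouting is, at that moment, a sibling or a descendant of a sibling, whereas all previous parents are ancestors. Consequently, if $t$ were rerouted a third time, then at that moment $\cut(e(t))$ would contain the two old witness edges plus a fresh edge into the bag of a sibling or descendant of a sibling, giving $\adh(t)\ge 3$ and contradicting the thinness required for $t$ to be bad. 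Hence every node is rerouted at most twice, for a total of at most $2|T|$ reroutings; the paper phrases this via an auxiliary relation of ``bad pairs'' $(t,b)$, each eliminated when $b$ becomes the parent of $t$, never re-entering, and numbering at most two per node because $\adh(t)\le 2$ whenever $t$ is bad. So your claimed per-node bound is true, but it is this argument --- not the initial-bad/$z_1$ dichotomy --- that delivers it, and notably it does not need your phase structure at all.
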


\begin{proof}
%Let $(T,\mc{X})$ be a rooted tree-cut decomposition with a bad node $t$ at depth $d$ such that all nodes at depth at most $d-1$ are not bad, and let $(T',\mc{X})$ be the rooted tree-cut decomposition obtained from $(T,\mc{X})$ by {\sc Rerouting($t$)}. Let $dep_T(i)$ denote the number of nodes at depth $i$ in $T$. It is easy to see that $dep_T(d)=dep_{T'}(d)+1$. Furthermore, for any tree-cut decomposition $(T'',\mc{X}'')$, if  $dep_{T''}(i)=1$ then the single node at depth $i$ cannot be bad. From these two observations it follows that {\sc Rerouting($t$)} can only be called at most $|T|$ times at each depth $d$, and since $d$ is bounded by $|T|$, the proof is finished.
The key observation is that 
%only a thin node can potentially become a bad node while performing the procedure {\sc Top-down Rerouting}, and thus at most $2|T|$ nodes can become bad nodes. Moreover, 
once a bad node $t$ is rerouted to be a child of its bad neighbor $b$, $b$ will remain as an ancestor of $t$ in the  tree-cut decompositions obtained 
by the subsequent {\sc Top-down Rerouting}
and thus $b$ will not become a bad neighbor of $t$ again. Indeed, the only way that $b$ ceases to be a parent of $t$ is when $t$ is rerouted again as a bad node. 
Yet, the bad neighbor of $t$ is a descendant of $b$ and thus after the (second) rerouting $t$ remains as a descendant of $b$. 
Furthermore, this implies that once a bad node $t$ stops being a bad node, $t$ will never turn into a bad node in the subsequent {\sc Top-down Rerouting}.

To see that {\sc Top-down Rerouting} terminates in at most $2|T|$ steps, consider maintaining the following auxiliary binary relation over the nodes of $T$ during the procedure. 
Let $(t,b)$ be a \emph{bad pair} if $t$ is a bad node of $T$, $b$ is a bad neighbor of $t$, and there is an edge of $G$ between $X_t$ and $X_b$. Let $B$ be the binary relation containing all bad pairs.
%We initialize the auxiliary binary relation $B$ over the nodes of $T$ with all \emph{bad pairs} $(t,b)$, viewed as an oriented edge from node $t$ to $b$, 
%whenever there is an edge of $G$ between $X_t$ to $X_b$, $t$ is a bad node of $T$ and $b$ is a bad neighbor of $t.$ 
We update the binary relation $B$ as we perform {\sc Top-down Rerouting} so that $B$ always consists of the bad pairs in the current tree-cut decomposition. 
When a bad neighbor $b$ of $t$ becomes the parent of $t$ by rerouting, we eliminate $(t,b)$ from $B$. 
Therefore, the number of pairs ever eliminated from $B$ throughout  {\sc Top-down Rerouting} provides an upper bound 
on the number of steps. Also the number of pairs eliminated from $B$ equals the number of pairs which were ever inserted into $B.$ 
The latter is bounded by $2|T|$ because the number of pairs which ever enter $B$ with the first entry $t$ is at most two (due to $\adh(t)\leq 2$)
and any pair which was removed from $B$ will never enter it again. It follows that {\sc Top-down Rerouting} terminates in at most $2|T|$ steps.
\end{proof}

\begin{proof}[Proof of Lemma~\ref{lem:nicetcdec}]
By definition, the output of \textsc{Top-down Rerouting} is a nice tree-cut decomposition. The lemma then follows from Lemma~\ref{lem:topdown} and Lemma~\ref{lem:rerout}.
\end{proof}

%We conclude this section with a useful property of (nice) tree-cut decompositions.

The following Theorem~\ref{thm:decompsize} builds upon Lemma~\ref{lem:nicetcdec} by additionally giving a bound on the size of the decomposition.

\begin{theorem}%[$\star$]
\label{thm:decompsize}
If $\tcw(G)= k$, then there exists a nice tree-cut decomposition $(T, \mc{X})$ of $G$ of width $k$ with at most $2|V(G)|$ nodes. Furthermore, $(T,\mc{X})$ can be computed from any width-$k$ tree-cut decomposition of $G$ in quadratic time.
\end{theorem}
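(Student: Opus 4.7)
The plan is to prove this in two stages: first, reduce the given width-$k$ tree-cut decomposition to one with $O(|V(G)|)$ nodes by contracting/removing empty bags, and then apply Lemma~\ref{lem:nicetcdec} to enforce niceness without increasing the node count. The main structural content lies in the first stage.

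Starting from an arbitrary width-$k$ tree-cut decomposition $(T_0,\mc{X}_0)$ of $G$, I would apply the following reduction rules exhaustively: (R1) if a leaf $t$ has $X_t=\emptyset$, delete $t$; (R2) if a non-root internal node $t$ has $X_t=\emptyset$ and a single child $c$, suppress $t$ by attaching $c$ directly to $p(t)$; (R3) if the root has empty bag and a single child, promote the child. The verification that each rule preserves the width is the main obstacle. For (R1), no $Y_z$ changes (the empty set drops out), so all adhesions are unchanged; the only effect on the torso at $p(t)$ is to remove the consolidated vertex $z_t$, which is isolated because $X_t=\emptyset$, and hence would be suppressed when forming the 3-center anyway. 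For (R2), the crucial observation is that $Y_t = X_t \cup Y_c = Y_c$; thus $\cut(e(t))=\cut(e(c))$ in the old tree, so $c$'s new adhesion after reattachment to $p(t)$ equals its old adhesion, and all other adhesions are untouched. Likewise, the vertex subsets of $G$ being consolidated into components of $T-p(t)$, $T-c$, and $T-z$ (for any other node $z$) are identical before and after the suppression, so all torsos, and hence all torso-sizes, are unchanged.

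Once the rules are saturated, I would bound the size of the resulting tree $T'$ by a standard counting argument. Let $L$ be the number of leaves, $I_{\neq}$ the number of non-leaf nodes with non-empty bags, and $I_\emptyset$ the number of non-leaf nodes with empty bags. Since $\mc{X}'$ is a near-partition of $V(G)$ into disjoint sets of size at least $1$ on the non-empty bags,
$$L + I_{\neq} \;\le\; |V(G)|.$$
After exhaustive reduction every leaf has a non-empty bag, and every empty internal node has at least two children, while every non-empty internal node has at least one. Counting child slots gives
$$|T'|-1 \;=\; \sum_{v \in V(T')} c(v) \;\ge\; 2 I_\emptyset + I_{\neq},$$
and since $|T'| = L + I_{\neq} + I_\emptyset$, rearranging yields $I_\emptyset \le L - 1$. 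Combining,
$$|T'| \;\le\; 2(L + I_{\neq}) - 1 \;\le\; 2|V(G)| - 1.$$

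Finally, I would invoke Lemma~\ref{lem:nicetcdec} on $(T',\mc{X}')$ to obtain a nice tree-cut decomposition $(T,\mc{X})$ of the same width and with $|T|\le |T'| \le 2|V(G)|$ nodes, which proves the claimed existence. For the running-time claim, the reductions can be implemented in linear time in $|T_0|$ by a single bottom-up pass (with rule (R2) fed into a queue as new single-child empty nodes appear), and Lemma~\ref{lem:nicetcdec} applied to the reduced decomposition runs in time polynomial in $|V(G)|$; together these give the stated polynomial bound. The delicate step to get down to the quadratic bound asserted is to avoid re-running the full rerouting procedure from scratch, which would require exploiting that only $O(|V(G)|)$ rerouting steps are needed after reduction together with a careful amortized implementation of each step.
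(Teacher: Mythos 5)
Your proposal takes essentially the same route as the paper's proof: delete empty leaves, suppress empty internal nodes with a single child (verifying, exactly as the paper does, that adhesions and torso-sizes are unchanged because $Y_t=Y_c$ when $X_t=\emptyset$), bound the node count by comparing empty internal nodes (now all with at least two children) against the leaves/non-empty nodes to get at most $2|V(G)|$ nodes, and then invoke Lemma~\ref{lem:nicetcdec}. Your slight hedging on the quadratic running time is not a real deviation, since the paper's own proof is no more detailed there --- it simply applies the (cubic-time) Lemma~\ref{lem:nicetcdec} to the reduced decomposition with $\bigoh(|V(G)|)$ nodes, which is precisely the accounting you sketch.
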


\begin{proof}
For brevity, we say that a node $t$ is empty if $X_t=\emptyset$.
Consider a rooted width-$k$ tree-cut decomposition $(T,\mc{X})$ of $G$. We can assume that each leaf $t$ of $T$ is non-empty, since otherwise $t$ may be deleted from $T$ without changing the width. 

Consider an empty non-leaf node $t$ which has a single child $t_1$. Let $(T',\mc{X})$ be the tree-cut decomposition obtained by suppressing the node $t$. We claim that the width of $(T',\mc{X})$ is at most $k$. Indeed, $\adh_{T'}(t_1)=\adh_T(t_1)=\adh_T(t)$ and $\adh_{T'}(u)=\adh_T(u)$ for every other node $u\in V(T')$. Furthermore, $\tor_T(t_1)=\tor_{T'}(t_1)$ and, for the parent $p$ of $t$ (if it exists), we also have $\tor_T(p)=\tor_{T'}(p)$. It is then easily observed that the torso-size of all other nodes remains unchanged.

After exhaustively applying the contraction specified above, we arrive at a rooted tree-cut decomposition $(T'',\mc{X}'')$ of $G$ where each empty node has at least two children. Hence the number of empty nodes in $T''$ is upper-bounded by the number of non-empty nodes, which implies that $|T''|\leq 2|V(G)|$. $(T'',\mc{X}'')$ can then be transformed into a nice tree-cut decomposition without increasing the number of nodes by Lemma~\ref{lem:nicetcdec}.
\end{proof}

\section{FPT Algorithms}
\label{sec:algorithms}

In this section we will introduce a general dynamic programming framework for the design of FPT algorithms on nice tree-cut decompositions. The framework is based on leaf-to-root processing of the decompositions and can be divided into three basic steps:

\begin{enumerate}
\item[$\bullet$] {\bf Data Table}: definition of a data table $\cD_T(t)$ ($\cD(t)$ in brief) for a problem $\cP$ associated with each node $t$ of a nice tree-cut decomposition $(T,\mc{X})$.
\item[$\bullet$] {\bf Initialization and Termination}: computation of $\cD(t)$ in FPT time for any leaf $t$, and solution of $\cP$ in FPT time if $\cD(r)$ is known for the root $r$.
\item[$\bullet$] {\bf Inductive Step}: an FPT algorithm for computing $\cD(t)$ for any node $t$ when $\cD(t')$ is known for every child $t'$ of $t$.
\end{enumerate}
If the above holds for a problem $\cP$, then $\cP$ admits an FPT algorithm parameterized by tree-cut width. The {\bf Inductive Step} usually represents the most complex part of the algorithm. Our notion of nice tree-cut decompositions provides an important handle on this crucial step, as we formalize below.

Let $t$ be a node in a nice tree-cut decomposition. We use $B_t$ to denote the set of thin children $t'$ of $t$ such that $N(Y_{t'})\subseteq X_t$, and we let $A_t$ contain every child of $t$ not in $B_t$. The following lemma is a crucial component of our algorithms, since it bounds the number of children with nontrivial edge connections to other parts of the decomposition.

\begin{lemma}%[$\star$]
\label{lem:Asmall}
Let $t$ be a node in a nice tree-cut decomposition of width $k$. Then $|A_t|\leq 2k+1$.
\end{lemma}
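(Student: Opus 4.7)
The plan is to partition $A_t$ into its thin and bold members and bound each part separately, leveraging niceness in both steps.

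I will first decompose $A_t = A_t^{\mathrm{thin}}\cup A_t^{\mathrm{bold}}$, where $A_t^{\mathrm{thin}} = \{t' \in A_t : \adh(t') \leq 2\}$ and $A_t^{\mathrm{bold}} = \{t' \in A_t : \adh(t')\geq 3\}$. For any $t' \in A_t^{\mathrm{thin}}$ we have $N(Y_{t'})\not\subseteq X_t$ by definition of $A_t$, while niceness applied to the thin node $t'$ forbids $N(Y_{t'})$ from meeting any sibling subtree $Y_b$. Hence every vertex of $N(Y_{t'})\setminus X_t$ must lie in $V(G)\setminus Y_t$, so at least one edge of $\cut(e(t))$ has its $Y_t$-endpoint inside $Y_{t'}$. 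Since the $Y_{t'}$'s are pairwise disjoint across the children of $t$, distinct members of $A_t^{\mathrm{thin}}$ are charged to distinct edges of $\cut(e(t))$, yielding $|A_t^{\mathrm{thin}}|\leq\adh(t)\leq k$.

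For $A_t^{\mathrm{bold}}$ the goal is to show $|A_t^{\mathrm{bold}}|\leq k+1$ by matching almost every bold child to a distinct surviving vertex of $\tilde{H}_t\setminus X_t$. For each bold $t'$, the torso vertex $z_{t'}$ has $\deg_{H_t}(z_{t'}) = \adh(t')\geq 3$, and I will track how the iterative suppressions used to build the $3$-center can lower that degree. The crucial observation is that niceness, applied to any thin sibling $t_j$ of $t'$, forbids $G$-edges between $Y_{t_j}$ and $Y_{t'}$, so $z_{t_j}$ is not adjacent to $z_{t'}$ in $H_t$; hence suppressing any thin sibling (at degree $1$ or $2$) leaves $\deg(z_{t'})$ unchanged. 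The only element of $V(H_t)\setminus X_t$ whose suppression could reduce a bold $z_{t'}$'s degree is the parent vertex $z_p$, and only if $z_p$ is ultimately suppressed as a degree-$1$ vertex --- an event that can affect at most one bold child and by at most one unit of degree. Consequently at most one bold $z_{t'}$ can drop below degree $3$ and be suppressed; every other bold $z_{t'}$ survives as a distinct element of $V(\tilde{H}_t)\setminus X_t$, which gives $|A_t^{\mathrm{bold}}|\leq |V(\tilde{H}_t)\setminus X_t|+1\leq \tor(t)+1\leq k+1$.

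Combining, $|A_t|=|A_t^{\mathrm{thin}}|+|A_t^{\mathrm{bold}}|\leq k+(k+1)=2k+1$. The main obstacle is the careful bookkeeping of how the $3$-center suppressions cascade: one must verify that no chain of suppressions beyond that single possible $z_p$-induced reduction can propagate to a second bold vertex, in particular handling the case where $z_p$ is itself initially bold but is thinned down by prior suppressions of thin external children. It is exactly the niceness condition that makes this analysis go through, as it insulates bold $z_{t'}$'s from all thin sibling cascades and leaves only $z_p$ as a potential source of degree loss.
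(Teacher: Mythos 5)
Your split of $A_t$ into thin and bold children is exactly the paper's, and your bound $|A_t^{\mathrm{thin}}|\leq k$ via charging each thin child in $A_t$ to a distinct edge of $\cut(e(t))$ is complete and matches the paper's argument verbatim. The problem lies in the bold part, where the statement you call "the crucial observation plus one exceptional $z_p$-event" is precisely the claim that carries the whole proof, and you assert it rather than prove it. Concretely, the claim that the only suppression that can lower the degree of a bold $z_{t'}$ is a degree-$1$ suppression of the parent vertex $z_p$ has two unproved ingredients. First, niceness only guarantees that thin representatives are non-adjacent to bold representatives in the initial torso $H_t$; during the suppression sequence such adjacencies \emph{can} be created (e.g., suppressing $z_p$ at degree $2$ when its two neighbors are a thin and a bold representative), after which a thin representative reaching degree $1$ would lower a bold degree in a way not covered by your single $z_p$-event. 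Ruling this out requires either fixing a suppression order that prioritizes thin representatives (as the paper does, exploiting that thin representatives always have degree at most $2$ and hence are always suppressible) and proving by induction that no thin--bold adjacency ever arises, or invoking the uniqueness of the $3$-center to justify analyzing such an order. Second, and more centrally, you must exclude the bold--bold cascade: a bold sibling $z_{t''}$, once its degree has dropped, could in principle later be suppressed at degree $1$ and lower a second bold vertex. You explicitly name this as "the main obstacle" and state that the bookkeeping "must be verified", but you never carry it out; this verification is the bulk of the paper's proof (taking the first and second bold vertices $b',b''$ whose degrees decrease, showing the first decrease must come from $z_{top}$ suppressed at degree $1$ after all thin representatives are gone, that the second could then only come from $z_{b'}$ itself, and that $z_{b'}$ cannot have reached degree $1$ without a third bold vertex decreasing earlier --- a contradiction).

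So while your intended route is essentially the paper's (at most one bold child representative can be eliminated before the $3$-center is reached, hence $|A_t^{\mathrm{bold}}|-1\leq\tor(t)\leq k$), the proposal as written has a genuine gap: the insulation of bold representatives from all cascading suppressions, which is exactly what the fixed prioritized suppression order and the first/second-decrease contradiction argument in the paper are there to establish.
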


\begin{proof}
We partition the nodes in $A_t$ into two sets: $A'_t$ contains all thin nodes in $A_t$ and $A''_t$ contains all the bold nodes in $A_t$. We claim that $|A'_t|\leq k$ and $|A''_t|\leq k+1$, which will establish the statement. The inequality $|A'_t|\leq k$ is easy to see. Indeed, recall that $N(Y_b)\subseteq X_t \cup (V(G)\setminus Y_t)$ for every $b \in A'_t$  since $(T,\mc{X})$ is nice. Furthermore, each $b\in A'_t$ satisfies  $N(Y_{b})\cap (V(G)\setminus Y_t)\neq \emptyset$ since otherwise, $b$ would have been included in $B_t$. Therefore, each $b \in A'_t$ contributes at least one to the value $\adh(t)$. From $\adh(t)\leq k$, the inequality follows. 

%To prove $|A''_t|\leq k+1$, we first observe that every node $b$ in $A''_t$ is bold, i.e. $\adh(b)\geq 3$. Suppose $b\in A''_t$ is thin. Since the given tree-cut decomposition is nice, we have $N(Y_{b})\subseteq X_t\cup (V(G)\setminus Y_t)$. Then $N(Y_{b})\subseteq X_t$ since otherwise, $b$ would have classified as an element of $A'_t$. This, however, contradicts the fact that $B_t$ is the set of all thin children $b$ of $t$ with $N(Y_{b})\subseteq X_t$. 

To prove $|A''_t|\leq k+1$, suppose $|A''_t|=\ell\geq k+2$ for the sake of contradiction. Consider the torso $H_t$ at $t$. For each $b\in A_t \cup B_t$, let $z_b$ be the vertex of $H_t$ obtained by consolidating the vertex set $Y_b$ in $G$ and let $z_{top}$ be the vertex of $H_t$ obtained by consolidating the vertex set $V(G)\setminus Y_t$. Fix a sequence of suppressing vertices of degree at most two which yields a sequence of intermediate graphs $H_t=H_t^{(0)}, H_t^{(1)}, \cdots , H_t^{(m)}=\tilde{H}_t$. 
In this sequence, it is assumed that we suppress a vertex $z_b$ with $b\in A'_t\cup B_t$, namely which represents a thin node, whenever this is possible.

Observe that no vertex representing a bold node (i.e. a node of $A''_t$) is adjacent with a vertex representing a thin node (i.e. a node of $A'_t\cup B_t$) 
in any graph appearing in the suppression sequence. This observation is valid on $H_t^{(0)}=H_t$ because the tree-cut decomposition at hand is nice 
and thus in $H_t$ the vertex set $\{z_{top}\}\cup X_t$ separates the vertices obtained from $A'_t\cup B_t$ and those obtained from $A''_t$. By induction hypothesis, 
if a vertex $z_q$ representing a bold node $q$ becomes adjacent with a vertex $z_{q'}$ representing a thin node first time in the suppression sequence, say at $H_t^{(\ell)}$,
this is due to the suppression of $z_{top}$ in $H_t^{(\ell-1)}.$ Then, however, we should have suppressed $z_{q'}$ in $H_t^{(\ell)}$ before $z_{top}$ because $z_{q'}$ has degree at most two 
in $H_t^{(\ell)}$ (suppression does not increase the degree of any vertex) and we prioritize suppressing $z_{q'}$ over $z_{top}$. 

Let us choose $b',b'' \in A''_t$ so that $z_{b'}$ and $z_{b''}$ are the first and the second (distinct) vertices among $z_b$ for all $b\in A''_t$ whose degree strictly decreases in this sequence. Such $b'$ and $b''$ must exist since at least two vertices $z_b$, where $b\in A''_t$, do not appear in $\tilde{H}_t$, and any $z_b$ may only be removed by suppression. Let $a',a'' \in A_t\cup B_t\cup \{top\}$ and $0\leq i < j \leq m$ be such that the first decrease in the degrees of $z_{b'}$ and $z_{b''}$ is due to the suppressions of $z_{a'}\in V(H^{(i)}_t)$ and of $z_{a''}\in V(H^{(j)}_t)$. We observe that the respective degree of $z_{a'}$ and $z_{a''}$ are exactly one in $V(H^{(i)}_t)$ and in $V(H^{(j)}_t)$ since otherwise, the degree of $z_{b'}$ and $z_{b''}$ would not decrease. Moreover, $z_{a'}$ and $z_{b'}$ are adjacent in $V(H^{(i)}_t)$, and $z_{a''}$ and $z_{b''}$ are adjacent in $V(H^{(j)}_t)$

%We first argue that $a', a'' \in A''_t \cup \{top\}$. Notice that
%$z_{b'}$ and $z_{a'}$ are adjacent in $H_t^{(i)}$ and also, $z_{b''}$
%and $z_{a''}$ adjacent in $H_t^{(j)}$. This means that $z_{b'}z_{a'}$
%and $z_{b''}z_{a''}$ are edges in $H_t$ as well. For the sake of
%contradiction, suppose $a' \in A'_t\cup B_t$. This means that $a'$ is
%a thin node. The property of a nice tree-cut decomposition implies
%$N(Y_{a'}) \cap Y_{b'}=\emptyset$ and thus $z_{b'}$ and $z_{a'}$ are
%non-adjacent in $H_t$, a contradiction. The same argument applies to
%$a''$. It follows that $a',a''\in A''_t\cup \{top\}$.
% Furthermore, as a suppressing of a vertex removes it from the considered graph, either $a'$ or $a''$ belongs to $A''_t$. Since we pick $b'$ as the first $b \in A''_t$ such that the degree $z_{b}$ strictly decreases, it cannot be $a'\in A''_t$. Hence, we have $a'=top$ and $a''\in A''_t$. By a similar argument, we know that $a''=b'$.
%\textcolor{orange}{Replace this paragraph with the next one:

By the previous observation that $z_{b'}$ is non-adjacent with any vertex representing a thin node in $V(H^{(i)}_t)$ and $V(H^{(j)}_t)$, 
we conclude that $a'=top$. When $z_{top}$ is suppressed in $V(H^{(i)}_t)$ and accordingly decrease the degree of $z_{b'}$, 
this means that in $V(H^{(i)}_t)$, all the vertices obtained from thin nodes $A'_t\cup B_t$ have been eliminated in the previous steps. Therefore, all vertices in $V(H^{(i+1)}_t)\setminus X_t$ are those obtained 
from $A''_t$. It follows that  $a''=b'$. 

Notice that the suppressing of $z_{a'}$, namely $z_{top}$, decreases the degree of $z_{b'}$ by one. That is, the degree of $z_{b'}$ in $H^{(i+1)}_t$ remains at least two. Now that 
suppressing $z_{b'}$ in $H^{(j)}_t$ strictly decreases the degree of $z_{b''}$, the degree of $z_{b'}$ in $H^{(j)}$ equals to one. This implies that 
there is a vertex, say $z_{a^*}$, whose suppression further decreased the degree of $z_{b'}$ between the sequence of $H^{(i+1)}_t$ and $H^{(j)}_t$. 
However, then $a^*\in A''_t$, which contradicts our choice of $b''$ and $H^{(j)}_t$. This proves $|A''_t|=\ell\leq k+1$.
\end{proof}

In the remainder of this section we employ this high-level framework on the design of FPT algorithms parameterized by tree-cut width
for the following problems: \textsc{Capacitated Vertex Cover}, \textsc{Imbalance}, and \textsc{Capacitated Dominating Set}.

\subsection{Capacitated Vertex Cover}
\label{sub:cvc}
The \textsc{Capacitated Vertex Cover} is a generalization of the classical \textsc{Vertex Cover} problem
 which was originally introduced and studied in the classical setting \cite{ChuzhoyNaor02,GuhaHassinKhullerOr03}. More recently, its parameterized complexity has also been studied in combination with various parameters \cite{GuoNiedermeierWernicke07,DomLokstanovSaurabhVillanger08}. 
Unlike its uncapacitated variant, \textsc{Capacitated Vertex Cover} is known to be W[1]-hard when parameterized by treewidth \cite{DomLokstanovSaurabhVillanger08}. 

A capacitated graph is a graph $G=(V,E)$ together with a capacity function $c:V\rightarrow \Nat_0$. Then we call $C\subseteq V$ a \emph{capacitated vertex cover} of $G$ if there exists a mapping $f:E\rightarrow C$ which maps every edge to one of its endpoints so that the total number of edges mapped by~$f$ to any $v\in C$ does not exceed $c(v)$. We say that $f$ \emph{witnesses} $C$.

\begin{center}

  \vspace{0.3cm} \begin{boxedminipage}[t]{0.99\textwidth}
\begin{quote}
  tcw-\textsc{Capacitated Vertex Cover} (\cvc)\\ \nopagebreak
  \emph{Instance}: A capacitated graph $G$ on $n$ vertices together with a width-$k$ tree-cut decomposition $(T,\mc{X})$ of $G$, and an integer $d$.  \\ \nopagebreak
  \emph{Parameter}: $k$. \\ \nopagebreak
  \emph{Task}: Decide whether there exists a capacitated vertex cover $C$ of $G$ containing at most $d$ vertices. \nopagebreak
\end{quote}
\end{boxedminipage}
\end{center}
%The goal of this subsection is to prove the following theorem.\nb{What is the decomposition is huge? -R}
\subsubsection{Data Table, Initialization and Termination}

Informally, we store for each node $t$ two pieces of information: the ``cost'' of covering all edges inside $G[Y_t]$, and how much more it would cost to additionally cover edges incident to~$Y_t$. We formalize below.

For any graph $G=(V,E)$ and $U\subseteq V$, we let $\cvcv(G,U)$ denote the minimum cardinality of a capacitated vertex cover $C\subseteq U$ of $G$; if no such capacitated vertex cover exists, we instead let $\cvcv(G,U)=\infty$. For any node $t$ in a nice tree-cut decomposition of a capacitated graph $G=(V,E)$, we then use $a_t$ to denote $\cvcv(G[Y_t],Y_t)$.

Let $E_t$ denote the set of all edges with both endpoints in $Y_t$ and let $K_t$ denote the set of edges with exactly one endpoint in $Y_t$. Then $\cQ_t=\SB H=(Y_t\cup N(Y_t),E_t\cup E') \SM E'\subseteq K_t \SE$. Finally, we define $\beta_t: \cQ_t\rightarrow \Nat_0 \cup \{\infty\}$ such that 
for every $H$ in $\cQ_t$, $\beta_t(H)=\cvcv(H,Y_t)-a_t$ (whereas $\infty$ acts as an absorbing element and $\infty - \infty = \infty$). 

\begin{definition}
$\cD(t)=(a_t,\beta_t)$.
\end{definition}

Next, we show that the number of possible functions $\beta_t$ is bounded. To this end, we make use of the following lemma.

\begin{lemma}
\label{lem:cvcsizeaux}
Let $G=(V,E)$ be any capacitated graph and let $G'$ be obtained from $G$ by adding a pendant vertex $x$ (with arbitrary capacity). Let $U\subseteq V$. Then either $\cvcv(G',U)=\cvcv(G,U)$ or $\cvcv(G',U)=\cvcv(G,U)+1$ or $\cvcv(G',U)=\infty$.
\end{lemma}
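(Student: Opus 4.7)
Let $y$ denote the unique neighbor of $x$ in $G'$; note $y\in V$. The plan is to prove (i) $\cvcv(G',U)\ge \cvcv(G,U)$ unconditionally and (ii) $\cvcv(G',U)\le \cvcv(G,U)+1$ whenever $\cvcv(G',U)<\infty$; together these give the three stated outcomes. Direction (i) is immediate: given a capacitated vertex cover $C'$ of $G'$ in $U$ witnessed by $f'$, the restriction $f'|_E$ witnesses $C'$ as a capacitated vertex cover of $G$, since every edge of $G$ is still covered and loads only decrease. Since $x\notin U$ means that only $y$ can cover the edge $xy$, the assumption $\cvcv(G',U)<\infty$ also forces $y\in U$ and $c(y)\ge 1$, which I will use freely below.

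For (ii), I would start from an optimal capacitated vertex cover $C$ of $G$ in $U$ with witness $f$ and try to build a capacitated vertex cover of $G'$ of size at most $|C|+1$. Two easy cases dispose of most situations: if $y\notin C$, set $C'=C\cup\{y\}$ and extend $f$ by $f'(xy)=y$; and if $y\in C$ admits some witness $f$ with $|f^{-1}(y)|<c(y)$, set $C'=C$ and use the same extension. In both cases the load at $y$ stays within $c(y)$, so $C'$ witnesses the desired bound.

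The main obstacle is the remaining \emph{rigidly saturated} case: $y\in C$ and $|f^{-1}(y)|=c(y)$ in every witness $f$ of $C$. To handle it, I would pass to the standard max-flow encoding: build the network $N$ with source $s$, a node for every $e\in E$, a node for every $v\in V$, sink $t$, unit-capacity arcs $s\to e$ and $e\to u$, $e\to v$ for every $e=uv$, and arcs $v\to t$ of capacity $c(v)$ for every $v\in U$. Any witness of $C$ as a capacitated vertex cover of $G$ corresponds precisely to a max flow of value $|E|$ in $N$ when only the arcs $v\to t$ with $v\in C$ are enabled. Let $N'$ be the analogous network for $G'$ (with an additional piece $s\to xy\to y$). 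The assumption $\cvcv(G',U)<\infty$ says that the max flow in $N'$ with all arcs $v\to t$, $v\in U$, enabled is $|E|+1$, and so starting from the flow induced by $f$ there exists an augmenting path of the form
\[
s\to xy\to y\to e_1\to u_1\to e_2\to u_2\to\cdots\to e_m\to u_m\to t
\]
in the residual graph.

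Two observations will then close the argument. First, the interior vertices $u_1,\ldots,u_{m-1}$ must lie in $C$, since each $u_i$ is used as the tail of a reverse arc $u_i\to e_{i+1}$, which exists only when $f(e_{i+1})=u_i$, and $f$ maps into $C$. Second, rigid saturation forces $u_m\notin C$: otherwise the segment from $y$ to $t$ would be an augmenting path inside $N$ restricted to $C$, and rerouting along it would yield a witness of $C$ with $|f^{-1}(y)|=c(y)-1$, contradicting rigidity. Taking $C'=C\cup\{u_m\}$ (of size $|C|+1$) and rerouting $f$ along the augmenting path decreases the load at $y$ by one and raises the load at $u_m$ by one, so extending by $f'(xy)=y$ yields a valid witness for $C'$ as a capacitated vertex cover of $G'$. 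Combined with direction~(i), this establishes the three outcomes stated in the lemma; the genuinely delicate step is the augmenting-path analysis that pins down exactly one new vertex being added to $C$.
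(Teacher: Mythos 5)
Your proof is correct, but it takes a genuinely different route from the paper. The paper proves the lemma by induction on $|E|$: in the hard case (the neighbor $z$ of the pendant lies in the cover and is saturated in every witness), it argues for the existence of an edge $zq$ that some other cover assigns to $q$, then builds an auxiliary graph from $G-z$ by attaching capacity-$0$ pendants to the vertices that absorbed $z$'s edges, and applies the induction hypothesis to that gadget graph. You instead phrase everything through the standard edge-to-endpoint flow network and run a single augmenting-path argument: since $\cvcv(G',U)<\infty$ forces the max flow in $N'$ to be $|E|+1$, the flow induced by an optimal witness of $G$ admits an augmenting path $s\to xy\to y\to e_1\to u_1\to\cdots\to u_m\to t$, the interior vertices necessarily lie in $C$, and rigid saturation of $y$ forces $u_m\notin C$ (otherwise rerouting the $y$-to-$t$ segment would desaturate $y$ within $C$), so $C\cup\{u_m\}$ with the rerouted assignment covers $G'$; together with the trivial monotonicity $\cvcv(G,U)\le\cvcv(G',U)$ this yields the trichotomy. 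I checked the residual-graph structure and the rerouting bookkeeping (loads at $y$, the interior $u_i$, and $u_m$), and they are sound; the only steps left implicit are routine ones, namely that the augmenting path may be taken simple and that saturation of the arc $y\to t$ rules out the degenerate path $s\to xy\to y\to t$, so $m\ge 1$. Your approach buys a non-inductive, self-contained argument that makes the exchange explicit (exactly one new cover vertex per added pendant edge), and it transparently extends to adding a single edge, which is how the lemma is later applied in Lemma~\ref{lem:cvcsize}; the paper's induction avoids invoking flow machinery and stays within the bare combinatorial definition, at the price of the more delicate pendant-gadget construction and a case analysis whose existence claims (the cover $C'$ freeing an edge at $z$) are essentially the same exchange phenomenon your augmenting path certifies directly.
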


\begin{proof}
We use induction on $|E|$. For $|E|=0$, the lemma is easily seen to be true, as $\cvcv(G,U)=0$ and $\cvcv(G',U)\in \{0,1,\infty\}$.

For the inductive step, let $\{z\}=N(x)$, $c=\cvcv(G,U)$ and $c'=\cvcv(G',U)$. If $c=\infty$ then clearly also $c'=\infty$, so we may assume that $c\neq \infty$; let $C\subseteq U$ be a capacitated vertex cover of $G$ of cardinality $c$. We distinguish the following cases:
\begin{enumerate}
\item if $z\not \in U$, then $c'=\infty$;
\item if $c(z)=0$, then $c'=\infty$;
\item otherwise, if $z\not \in C$, then $C'=C\cup \{z\}$ is a capacitated vertex cover of $G'$ and thus $c\leq c'\leq c+1$.
\end{enumerate}
The only case that remains is $z\in C$. It could occur that $C$ itself witnesses $c'=c$, i.e., it is possible to allocate edges to $C$ so that $z$ has free capacity for the new edge $xz$. If $z$ does not have free capacity to accommodate $xz$ within $C$ but $c'\neq \infty$, then there must exist a capacitated vertex cover $C'$ which ``frees up'' at least one edge which is mapped to $z$ in $C$; specifically, there must exist an edge $e=zq$ which is allocated to $z$ in $C$ and to $q$ in $C'$. Notice that if $q\not \in C$, we can add $q$ to $C$ and allocate $qz$ to $q$, hence allowing $xz$ to be allocated to $z$ and the lemma would hold. Thus we may assume $q\in C$.

We now construct a capacitated graph $H$ from $G$, the capacitated vertex cover $C$ and an arbitrary witness $f$ of $C$ as follows. $H$ is obtained from $G-z$ by attaching a new pendant vertex (with capacity $0$) to every $v\in V(H)$ such that $v\in N(z)$ and $f(vz)=v$. Since at least one neighbor of $z$, specifically $q$, will not receive a new pendant vertex in this construction, it follows that $|E(H)|<|E(G)|$.
Now we apply the inductive assumption on the capacitated graph $H$, the capacitated graph $H'$ obtained from $H$ by adding a pendant vertex $z'$ adjacent to $q$, and $U_H=U\setminus \{z\}$. Observe that $c=\cvcv(H,U_H)+1$ (since one can add $z$ to any capacitated vertex cover of size $\cvcv(H,U_H)$ and obtain a capacitated vertex cover for $G$). Furthermore, $C'\cup (N(z)\cap U)\setminus \{z\}$ witnesses $\cvcv(H',U_H)\neq \infty$, and so there are two final possibilities to consider:
\begin{enumerate}
\item $\cvcv(H',U_H)=\cvcv(H,U_H)$, in which case $c'=c$ since we can allocate the edge $zq$ to $q$ in $G$ without increasing the size of the vertex cover. 
\item $\cvcv(H',U_H)=\cvcv(H,U_H)+1$, in which case $c'=c+1=\cvcv(H',U_H)+1$ since we can allocate the edge $zq$ to $q$ in $G$ at a cost of $1$ additional vertex in the cover. 
\end{enumerate}
To conclude the proof, we make explicit the construction of a capacitated vertex cover demonstrating the above claims. We start with a capacitated vertex cover $K$ of $H'$ witnessing $\cvcv(H',U_H)=\cvcv(H,U_H)$ (or $\cvcv(H',U_H)=\cvcv(H,U_H)+1$) and add to it the vertex $z$. As for the allocation of edges, we retain the same allocation of all edges not incident to $z$ as in $K$, allocate $zq$ to $q$ (this is made possible by the construction of $H'$) and use $f$ to allocate all the remaining edges incident to $z$ (this is made possible by the pendant vertices added during the construction of $H$). By ``saving'' on the edge $zq$, it is now possible to accommodate the new edge $xz$ into $z$.
\end{proof}

\begin{lemma}%[$\star$]
\label{lem:cvcsize}
Let $k$ be the width of a nice tree-cut decomposition $(T,\mc{X})$ of $G$ and let~$t$ be any node of $T$. Then $\beta_t(H)\in [k]\cup \{\infty\}$ for every $H\in \cQ_t$.
\end{lemma}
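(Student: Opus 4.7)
The plan is to reduce the problem to repeated applications of Lemma~\ref{lem:cvcsizeaux}. The key quantitative observation is that the extra edges $E'$ in the definition of $\cQ_t$ all lie in $K_t$, so $|E'|\le |K_t|=\adh(t)\le k$. Hence every $H\in\cQ_t$ is obtained from $G[Y_t]$ by adding at most $k$ ``external'' edges together with their endpoints in $N(Y_t)$.

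The first step is a reduction in which all newly added vertices are forced to be pendants. Since any capacitated vertex cover $C\subseteq Y_t$ of $H$ cannot use vertices of $N(Y_t)$, each edge $uv\in E'$ with $u\in Y_t$, $v\in N(Y_t)$ must be covered and allocated to its $Y_t$-endpoint $u$. Therefore, if a vertex $v\in N(Y_t)\cap V(H)$ happens to be incident to several edges of $E'$, splitting $v$ into one fresh pendant vertex per incident edge leaves the set of valid capacitated vertex covers of $H$ from $Y_t$ and all their allocation costs unchanged. Denote the resulting graph $H'$; then $H'$ is obtained from $G[Y_t]$ by attaching exactly $|E'|$ pendant vertices, and $\cvcv(H',Y_t)=\cvcv(H,Y_t)$.

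The second step is an iteration. Build $H'$ from $G[Y_t]$ by attaching pendants one at a time, obtaining a sequence $G[Y_t]=H'_0,H'_1,\ldots,H'_{|E'|}=H'$. Applying Lemma~\ref{lem:cvcsizeaux} with $U=Y_t$ at each step and arguing by induction on $i$, every transition either preserves $\cvcv(H'_i,Y_t)$, increments it by one, or sends it to $\infty$. Since $\cvcv(H'_0,Y_t)=a_t$ and there are $|E'|\le k$ steps, we conclude
\[
\cvcv(H,Y_t)=\cvcv(H',Y_t)\in\{a_t,a_t+1,\ldots,a_t+|E'|\}\cup\{\infty\}\subseteq\{a_t,\ldots,a_t+k\}\cup\{\infty\},
\]
so $\beta_t(H)=\cvcv(H,Y_t)-a_t\in[k]\cup\{\infty\}$. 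The degenerate case $a_t=\infty$ is handled directly by the absorbing convention $\infty-\infty=\infty$.

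The main obstacle is the pendant-replacement reduction in the first step: Lemma~\ref{lem:cvcsizeaux} is phrased only for the addition of a single pendant vertex, whereas an arbitrary $H\in\cQ_t$ may contain vertices of $N(Y_t)$ of unbounded degree, so the lemma cannot be applied to $H$ directly. Once the reduction from $H$ to $H'$ is justified, the bound follows immediately by iterating the lemma a bounded number of times.
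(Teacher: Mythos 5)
Your proof is correct and follows essentially the same route as the paper: the paper also observes that, since the cover is confined to $Y_t$, adding an edge $ab$ with $b\in N(Y_t)$ has exactly the effect of attaching a pendant to $a$, and then applies Lemma~\ref{lem:cvcsizeaux} inductively over the at most $\adh(t)\leq k$ edges of $E'\subseteq K_t$. Your only cosmetic difference is performing the pendant replacement for all of $E'$ at once before iterating, rather than edge by edge during the induction.
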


\begin{proof}
We actually prove a slightly stronger claim: that for every $E'\subseteq K_t$, $\beta_t(H)\in [|E'|]\cup {\infty}$. Notice that $|K_t|=\adh(t)$ and hence $|E'|\leq k$. We proceed by induction; for $|E'|=0$, the claim holds by definition. 

For the inductive step, let $E_1'=E_0'\cup \{e\}$ and $e=ab$ where $a$ is a vertex in $Y_t$ and $b$ is a neighbor of $a$ in $N(Y_t)$, 
and let $H_0$ and $H_1$ be the graphs  $(Y_t\cup N(Y_t), E_t\cup E'_0)$ and $(Y_t\cup N(Y_t), E_t\cup E'_1)$ respectively. 
  The inductive claim then follows directly from Lemma~\ref{lem:cvcsizeaux} by making the pendant vertex $x$ adjacent to $a$ in $H_0$; indeed, observe that adding an edge between $a$ and $b$ has precisely the same effect on the capacitated vertex set as adding a new pendant vertex to $a$ (since we restrict the capacitated vertex set to $Y_t$).
\end{proof}

Since the graphs $H\in \cQ_t$ that appear in the domain of $\beta_t$ can be uniquely represented by the choice of $E'\subseteq K_t$, we can store $\beta_t$ as a subset of $2^{K_t}\times ([k]\cup \{\infty\})$.

%efficient indexing, in the records $\cD(t)=(a_t,\beta_t)$ we will 
%
%define the size $||\beta_t(H)||$ of $\beta_t(H)$ as $2^{|K_t|}$.

\subsubsection{Initialization and Termination}
\begin{lemma}%[$\star$]
\label{lem:cvcinit}
Let $t$ be a leaf in a nice tree-cut decomposition $(T,\mc{X})$ of a capacitated graph $G$, and let $k$ be the width of $(T,\mc{X})$. Then $\cD(t)$ can be computed in time $2^{\bigoh(k\cdot \log k)}$.
\end{lemma}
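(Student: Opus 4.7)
The plan is to compute both components of $\cD(t) = (a_t, \beta_t)$ by brute-force enumeration, exploiting the fact that for a leaf $t$ distinct from the root, we have $Y_t = X_t$ together with $|X_t| \leq \tor(t) \leq k$ and $|K_t| = \adh(t) \leq k$ from the width bound on the decomposition; the degenerate case where $t$ is the root of a single-node decomposition is trivial since then $|V(G)| \leq k$ and $\beta_t$ has a single entry.

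First I would compute $a_t = \cvcv(G[X_t], X_t)$ as follows: iterate over all $2^{|X_t|} \leq 2^k$ subsets $C \subseteq X_t$, and for each one test whether $C$ is a capacitated vertex cover of $G[X_t]$. The test reduces to asking whether there exists an assignment $f \colon E(G[X_t]) \to C$ sending every edge to one of its endpoints in $C$ while respecting the capacities $c$; this is a standard bipartite $b$-matching feasibility problem solvable in $\mathrm{poly}(k)$ time by a max-flow computation on an auxiliary network with edges on one side and vertices of $C$ (each with capacity $c(v)$) on the other. The value $a_t$ is then the smallest cardinality of a valid $C$, or $\infty$ if none exists.

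For $\beta_t$, recall from Lemma~\ref{lem:cvcsize} that each value lies in $[k] \cup \{\infty\}$ and every $H \in \cQ_t$ is specified by a subset $E' \subseteq K_t$, so $\beta_t$ has at most $2^{|K_t|} \leq 2^k$ entries. For each such $E'$, I would compute $\cvcv(H, X_t)$ on the graph $H = (Y_t \cup N(Y_t), E_t \cup E')$ by repeating the enumeration over $C \subseteq X_t$ and the same $b$-matching feasibility test on the augmented edge set, then record $\beta_t(H) = \cvcv(H, X_t) - a_t$. External edges in $E'$, having their second endpoint outside $X_t$, automatically force their endpoint in $X_t$ into $C$ since the matching must still saturate them on the $X_t$-side.

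The total running time is at most $2^{|K_t|} \cdot 2^{|X_t|} \leq 2^{2k}$ feasibility checks, each of $\mathrm{poly}(k)$ cost, comfortably within the claimed $2^{O(k \log k)}$ bound. No real obstacle is anticipated beyond setting up the matching network carefully to accommodate both internal edges in $E_t$ (assignable to either endpoint) and external edges in $E'$ (assignable only to their $X_t$-endpoint); the key structural point is simply that the width bound forces both $|X_t|$ and $|K_t|$ to be at most $k$, which is what makes the brute-force enumeration fit inside FPT time.
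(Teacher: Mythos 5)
Your proposal is correct and follows essentially the same route as the paper: exploit $Y_t = X_t$ with $|X_t|\leq \tor(t)\leq k$ and $|K_t|=\adh(t)\leq k$, compute $a_t$ by an FPT computation on the at most $k$ vertices of the leaf, and then recompute $\cvcv(H,Y_t)$ for each of the at most $2^k$ graphs $H\in\cQ_t$. The only cosmetic difference is that the paper invokes the known $2^{\bigoh(k\log k)}$ algorithm for \textsc{Capacitated Vertex Cover} as a black box (handling the external edges by zeroing the capacities of vertices outside $Y_t$), whereas you make this explicit via subset enumeration plus a flow/$b$-matching feasibility check, which is equivalent and equally within the claimed bound.
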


\begin{proof}
$a_t=\cvcv(G[Y_t],Y_t)$ can be computed in time $2^{\bigoh(k\cdot \log k)}$ by known results~\cite{DomLokstanovSaurabhVillanger08}, since $|Y_t|\leq k$. Furthermore, $|K_t|\leq k$ since $\adh(t) \leq k$, and so $|\cQ_t|\leq 2^k$. What remains is to compute $\beta_t$ by determining $\cvcv(H,Y_t)$ for each $H$ in $\cQ_t$; for each $H$ this can be done by using the same $2^{\bigoh(k\cdot \log k)}$ algorithm for capacitated vertex cover while setting the capacities of vertices outside of $Y_t$ to $0$. In this way we can compute $\beta_t$ in time $2^k\cdot 2^{\bigoh(k\cdot \log k)}$, from which the lemma follows.
\end{proof}

\begin{observation}
\label{obs:cvcend}
Let $(G,d)$ be an instance of \cvc~and let $r$ be the root of a nice tree-cut decomposition of $G$. Then $(G,d)$ is a yes-instance if and only if $a_r\leq d$.
\end{observation}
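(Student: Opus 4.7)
The plan is to simply unwind the definitions and observe that at the root $r$ the table entry $a_r$ coincides with the optimum value of the decision problem.

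First I would note that since $r$ is the root of the tree-cut decomposition $(T,\mc{X})$, the set $Y_r = \bigcup_{b\in V(T_r)} X_b$ equals $\bigcup_{b \in V(T)} X_b$, which in turn equals $V(G)$ because $\mc{X}$ is a near-partition of $V(G)$. Consequently $G_r = G[Y_r] = G$, and by the definition $a_t = \cvcv(G[Y_t], Y_t)$ applied at $t = r$, we obtain $a_r = \cvcv(G, V(G))$.

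Next I would recall that $\cvcv(G, V(G))$ is precisely the minimum cardinality of a capacitated vertex cover $C \subseteq V(G)$ of $G$ (or $\infty$ if none exists). The constraint $C \subseteq V(G)$ is vacuous, so $\cvcv(G, V(G))$ is just the optimal capacitated vertex cover size of $G$. Hence $a_r \leq d$ holds iff there exists a capacitated vertex cover of $G$ of size at most $d$, which is by definition equivalent to $(G,d)$ being a yes-instance of \cvc.

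There is no real obstacle here: the observation is a direct consequence of the definitions of $Y_r$, $a_r$, and $\cvcv$, and the statement serves only to identify the terminating condition for the dynamic program once $\cD(r)$ has been computed.
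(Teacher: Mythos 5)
Your proof is correct and matches the paper's treatment: the paper states this as an observation without any proof precisely because it follows immediately from the definitions, and your unwinding ($Y_r = V(G)$ by the near-partition property, hence $a_r = \cvcv(G,V(G))$, which is the optimal capacitated vertex cover size) is exactly the intended argument.
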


\subsubsection{Inductive Step}

Our next and final goal is to show how to compute $\cD(t)$ of a node~$t$ once we have $\cD(t')$ for  each child $t'$ of $t$. We formalize this problem below.

\begin{center}
\vspace{0.3cm} 
\begin{boxedminipage}[t]{0.99\textwidth}
\begin{quote}
  \textsc{CVC Join}\\ \nopagebreak
  \emph{Instance}: A \CVC~instance consisting of a capacitated graph $G$ and an integer $d$, a non-leaf node $t$ of a width-$k$ nice tree-cut decomposition $(T,\mc{X})$ of $G$, and $\cD(t')$ for each child $t'$ of $t$.\\
    \emph{Parameter:} $k$. \\ \nopagebreak
  \emph{Task}: Compute $\cD(t)$. \nopagebreak
\end{quote}
\end{boxedminipage}
\end{center}

We use a two-step approach to solve \textsc{CVC Join}. First, we reduce the problem to a simplified version, which we call \textsc{Reduced CVC Join} and which has the following properties: $A_t$ is empty, $\adh(t)=0$, and $G[X_t]$ is edgeless. Recall that $B_t$ is the set of all thin children $t'$ of node $t$ with $N(Y_{t'})\subseteq X_t$ and $A_t$ denotes the set of 
the  children of $t$ not in $B_t$. 

For the following lemma, we remark that the linear dependency on $n$ is merely caused
by the fact that the instances of \textsc{Reduced CVC Join} obtained in the Turing reduction are of size $\bigoh(n)$. It is conceivable that this linear factor could be avoided with the use of a carefully designed data structure.

\begin{lemma}%[$\star$]
\label{lem:cvcreduce}
There is an FPT Turing reduction from \textsc{CVC Join} to $2^{\bigoh(k^2)}$ instances of \textsc{Reduced CVC Join} which runs in time $2^{\bigoh(k^2)}\cdot n$. 
%$2^{\bigoh(k^2)}\cdot (|B_t|+1)$. 
\end{lemma}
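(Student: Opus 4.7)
The plan is to exhaustively enumerate all ``complex'' choices at node $t$ so that the remaining instance involves only the (possibly many) simple thin children in $B_t$. By Lemma~\ref{lem:Asmall} we have $|A_t|\le 2k+1$; together with $|X_t|\le k$ (since $\tor(t)\le k$) and $\adh(t)\le k$, this bounds the total number of guesses by $2^{\bigoh(k^2)}$. To compute $\cD(t)=(a_t,\beta_t)$ I would process each $E'\subseteq K_t$ separately: the choice $E'=\emptyset$ will yield $a_t$, while every other $E'$ will yield $\beta_t(H)$ for the corresponding $H\in\cQ_t$.

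Fix $E'\subseteq K_t$. The branching enumerates (a)~a subset $S\subseteq X_t$ of vertices of $X_t$ chosen for the cover; (b)~an orientation of each edge in $G[X_t]$ towards one of its endpoints in $S$; and (c)~for every $t'\in A_t$, an orientation of each edge in $\cut(e(t'))$ towards one of its endpoints, thereby distinguishing the set $F_{t'}\subseteq\cut(e(t'))$ of edges directed into $Y_{t'}$. A branch is discarded unless (i)~every edge in $G[X_t]$ is directed to a vertex of $S$; (ii)~each edge of $E'$ has its $X_t$-endpoint in $S$, since it must be covered from within $Y_t$; (iii)~for every $t'\in A_t$, the value $\beta_{t'}(H_{t'})$ with $H_{t'}=(Y_{t'}\cup N(Y_{t'}),E_{t'}\cup F_{t'})$ stored in $\cD(t')$ is finite; and (iv)~each $v\in S$ has enough residual capacity to absorb all edges directed into it plus the $E'$-edges incident to it. The number of surviving branches per $E'$ is at most $2^{\bigoh(k^2)}$.

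For each such branch I would emit the \textsc{Reduced CVC Join} instance obtained by discarding all $A_t$-subtrees and all vertices outside $Y_t$, deleting every edge of $G[X_t]$, setting the capacity of each $v\in X_t\setminus S$ to $0$ and of each $v\in S$ to its residual capacity, and decrementing the budget by $|S|+\sum_{t'\in A_t}\bigl(a_{t'}+\beta_{t'}(H_{t'})\bigr)$; the resulting node $t$ then has $A_t=\emptyset$, $\adh(t)=0$, and $G[X_t]$ edgeless, as required. The data $\cD(t)$ is then assembled by combining, for each $E'$, the minimum over surviving branches of the oracle's answer plus the cost $|S|+\sum_{t'\in A_t}(a_{t'}+\beta_{t'}(H_{t'}))$, offset by $a_t$ in the case of $\beta_t$. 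The main obstacle will be correctness: I must verify that every optimal cover compatible with a given $E'$ corresponds to exactly one surviving branch, and conversely that every surviving branch can be extended via the oracle's answer to a cover of the matching size. The running-time bound $2^{\bigoh(k^2)}\cdot n$ then follows because each of the $2^{\bigoh(k^2)}$ branches (summed over the $2^{\bigoh(k)}$ choices of $E'$) produces an instance of size $\bigoh(n)$ in time $\bigoh(n)$.
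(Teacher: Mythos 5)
Your overall strategy is the same as the paper's: enumerate the boundedly many ``complex'' decisions at $t$ (which edges among $X_t$, the $A_t$-cut edges, and the boundary edges are covered by which endpoint), read off the cost of the $A_t$-side from the children's tables via the completions $H_{t'}$, and delegate the $B_t$-side to a \textsc{Reduced CVC Join} oracle. However, there is a genuine accounting error in how you combine the guessed set $S\subseteq X_t$ with the oracle's answer. In your reduced instance the vertices of $S$ remain present with their residual capacities, so the oracle is free to use them to cover edges between $X_t$ and the $B_t$-children --- and when it does, it counts them in its returned cover size. You then add $|S|$ on top of the oracle's answer, so every vertex of $S$ that the oracle actually uses is paid for twice. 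Concretely, take $X_t=\{v\}$ with $c(v)=m$ and $m\geq 2$ children in $B_t$, each a single vertex adjacent to $v$, with $A_t=\emptyset$ and $E'=\emptyset$: the true value is $\cvcv(G[Y_t],Y_t)=1$, but your branch $S=\{v\}$ returns $1+|S|=2$ and your branch $S=\emptyset$ returns $m$, so you compute $a_t=2$. The correct combination would be ``oracle's answer $+|S|+\sum_{t'\in A_t}(a_{t'}+\beta_{t'}(H_{t'}))$ minus the number of $S$-vertices inside the oracle's cover'', a quantity the oracle does not report. The paper sidesteps exactly this: vertices that must be in the cover are \emph{forced} into it inside the reduced instance (by raising their capacity by one and attaching a capacity-$0$ pendant), so they are paid exactly once by the oracle, and the $A_t$-cost is likewise carried \emph{inside} the instance via an artificial zero-adhesion child $t''$ whose table stores $\acost(f)$; nothing is added externally. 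Your construction needs this (or an equivalent) fix before the correctness argument you defer can go through.

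Two smaller points. First, your pruning condition~(ii) is imprecise: an edge of $E'$ need not have an endpoint in $X_t$ at all --- its $Y_t$-endpoint may lie in $Y_{t'}$ for some $t'\in A_t$ (though not in a $B_t$-child, by niceness), and such an edge must instead be forced into the completion $H_{t'}$, not charged to $S$. Second, note that in the paper's scheme the guess of the exact intersection $S$ of the cover with $X_t$ is not needed and is in fact deliberately avoided: only the vertices forced by the edge-assignment $f$ are pinned down, and the decision about the remaining $X_t$-vertices (whether they join the cover to absorb $B_t$-edges) is left to the ILP inside \textsc{Reduced CVC Join}, which is where that trade-off is optimized. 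Your extra guess of $S$ is not harmful for the $2^{\bigoh(k^2)}$ bound, but it does not remove the need to let the oracle charge $S$-vertices consistently, which is precisely where your version breaks.
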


\begin{proof}
Let $E_x=\SB ab\in E \SM a,b\in X_t \SE$ and $E_1=\SB ab\in E \SM \exists t_a\in A_t: a\in Y_{t_a}, b\in (Y_t\setminus Y_{t_a})\SE$; in other words, $E_1$ contains edges in $Y_t$ which contribute to the adhesion of some node of $A_t$. Let $E'$ denote the set of edges with only one endpoint in $Y_t$. By Lemma~\ref{lem:Asmall} and the bound of $k$ on the width of $(T,\mc{X})$, it follows that $|E_1|\leq 2k^2+k$ and $|E_x|\leq k^2$. Let $\cF$ contain all mappings of edges from $E_x\cup E_1\cup E'$ to one of their endpoints; formally $\cF=\SB f:(E_x\cup E_1\cup E')\rightarrow V\SM f(ab)=a\text{ or } f(ab)=b\SE$. Notice that $|\cF|\leq 2^{\bigoh(k^2)}$.

    For any $f\in \cF$ and $t'\in A_t\cup \{t\}$, we define the \emph{completion} $U_{f,t'}$ as the graph $G[Y_{t'}]$ together with the vertices $N(Y_{t'})$ and each edge $e$ such that $f(e)\in Y_{t'}$; informally, $f$ designates how we want to cover our edges, and the completion is the relevant subproblem of covering edges in such a way within $Y_{t'}$. Let $X(f)=\SB v\in X_t \SM \exists e: f(e)=v \SE$.
For each $f$, we compute $\acost(f)=\sum_{t'\in A_t}\beta_{t'}(U_{f,t'})+a_{t'}$; informally, $\acost$ contains the minimum cost of a capacitated vertex cover in $A_t$ complying with $f$ (obtained from information stored in the data tables for $A_t$), while $X(f)$ contains vertices of $X_t$ which must be in the capacitated vertex cover for this choice of $f$.

We now construct an instance $I_f$ of \textsc{Reduced CVC Join} for each $f$, as follows.
  We remove all nodes in $A_t$, remove all edges with an endpoint outside of $Y_t$, remove all edges with both endpoints in $X_t$. We add a single child $t''$ with $\adh(t'')=0$ and $\cD(t'')=\{\acost(f),\{G[Y_{t''}]\mapsto 0\}\}$ (to carry over information of the cost of all forgotten nodes in $A_t$), and for each $v\in X(f)$ we increase the capacity of $v$ by $1$ and add a pendant vertex $v'$ adjacent to $v$ with capacity $0$ (ensuring that $v$ must be taken into the vertex cover in $I_f$). Then for each $v\in X(f)$ we reduce its capacity by the number of edges mapped to $v$ by $f$ (if this would reduce its capacity to negative values, we set its capacity to $0$ and attach a $0$-capacity pendant vertex to $v$; this ensures the non-existence of a capacitated vertex cover for $I_f$). We adapt the tree-cut decomposition accordingly, by adding the new pendant vertices into new bags in $B_t$. Notice that $I_f$ now has the desired properties of \textsc{Reduced CVC Join}. 

Since its adhesion is zero, the solution of $I_f$ contains a single tuple $(a_{I_f},\{U_{f,t}\mapsto 0\})$. It remains to show how one can use this tuple to obtain the records for the original instance of \textsc{CVC Join}. For each graph $H$ in $\cQ_t$, we apply brute-force enumeration over $\cF$ to compute the set $\cF_H$ of all elements of $\cF$ such that $U_{f,t}\cong H$ (intuitively, this corresponds to identifying all the ways for covering the missing edges). We proceed by selecting a function $f\in \cF_H$ such that $a_{I_f}$ is minimized, and denote this $a_{I_f}$ by $\cost(H)$. For $H=G[Y_t]$ we then set $a_t=\cost(H)$, and we construct $\beta_t$ by setting, for each $H$ in $\cQ_t$ and each $U_{f,t}$ , $\beta_t:H\mapsto (\cost(H)-a_t)$.

To argue the claimed running time, observe that one can compute the values of $\acost(f)$ for all $f\in \cF$ in time $2^{\bigoh(k^2)}$, and that the instance $I_f$ for each $f\in \cF$ obtained by trivially modifying the graph and the tree-cut decomposition can be constructed in $\bigoh(n)$ time. 
%
%\textcolor{orange}{One can decide the value of $\beta_{t'}(U_{f,t'})$ by looking up the table $\cD(t')$ and querying the value of $\beta_{t'}$ for the subset $\{e\in Y_{t'}:f(e)\in Y_{t'}\}$ of $K_{t'}$. 
%Therefore one can compute the values of $\acost(f)$ for all $f\in \cF$ in time $2^{\bigoh(k^2)}$. 
%Constructing the instance $I_f$ for each $f\in \cF$ by modifying the graph and the tree-cut decomposition take $k^{O(1)}$-time. 
%Hence, we obtains solutions to $I_f$ for all $f\in \cF$ in time $2^{\bigoh(k^2)}$. 
%Now  there are $2^k$ distinct $H$ in $\cQ_t$ and for each $H$, the value $\cost(H)$ can be decided 
%by inspecting the values $a_{I_f}=\acost(f)$ for all $f\in \cF_H$. As $\cF_H$ over all $H\in \cQ_t$ partitions $\cF$, one can decide that values $\cost(H)$ for all $H\in \cQ_t$  in time $2^{O(k^2)}$. 
%This justifies the claimed running time.}

%Since its adhesion is zero, the solution of $I_f$ contains a single tuple $(a_{I_f},\{U_{f,t}\mapsto 0\})$. For each graph $H$ in $\cQ_t$, we can compute the set $\cF_H$ of all elements of $\cF$ such that $U_{f,t}\cong H$, and then find a function $f\in \cF_H$ such that $a_{I_f}$ is minimized; we denote this $a_{I_f}$ by $\cost(H)$.
%  For $H=G[Y_t]$ we then set $a_t=\cost(H)$, and we construct $\beta_t$ by setting, for each $H$ in $\cQ_t$, $\beta_t:H\mapsto (\cost(H)-a_t)$.
  
We argue the correctness of our reduction by arguing the correctness of the computed value $a_t$; the same argument then applies analogously also to the correctness of the construction of $\beta_t$. Assume that the value $a_t$ computed above is greater than $\cvcv(G[Y_t],Y_t)=a'$. Then there exists a capacitated vertex cover $C$ of $G[Y_t]$ of cardinality $a'$ and a mapping of edges $f_C$ which witnesses $C$. Let $f\in \cF$ be the (unique) restriction of $f_C$ to the set of edges mapped by functions in $\cF$, and for each $t'\in A_t$ we let $C_{t'}$ be the subset of $C$ which intersects $Y_{t'}$. By the correctness of the data tables for $A_t$, it holds that $\sum_{t'\in A_t} |C_{t'}|=\acost(f)$. Furthermore, each $v\in X(f)$ must occur in $C$ by $f_C$, and also in any solution to the instance $I_f$ due to the added pendant vertex. We would hence necessarily conclude that $a_{I_f}>C\setminus (X(f)\cup \bigcup_{t'\in A_t} C_{t'})$, which however contradicts the correctness of the solution to $I_f$.

On the other hand, assume that $a_t<a'$. Then we can construct a capacitated vertex cover $C$ of $G[Y_t]$ of cardinality $a_t$ from $X(f), f,$ and the partial vertex covers which witness the correctness of the solution to $I_f$.
\end{proof}

%\begin{proof}[Sketch] The bound on $|A_t|$ and $|X_t|$ allows us to efficiently branch over all possible edges inside $X_t$ and between $X_t$ and $A_t$, as well as over all possible capacitated vertex cover intersections with $X_t$. This allows us to preprocess the information from $A_t$ and these edges in advance, hence resulting in an instance of \textsc{Reduced CVC Join}. \qed
%\end{proof}

\begin{lemma}%[$\star$]
\label{lem:redcvcfpt}
There exists an algorithm which solves \textsc{Reduced CVC Join} in time $k^{\bigoh(k^2)}\cdot (|B_t|+1)$.
\end{lemma}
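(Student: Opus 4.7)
The plan is to exploit the structural simplifications provided by the Reduced CVC Join assumptions. Because $\adh(t)=0$, we have $\cQ_t=\{G[Y_t]\}$ and $\beta_t$ is trivial, so $\cD(t)$ reduces to computing the single value $a_t=\cvcv(G[Y_t],Y_t)$. Because $G[X_t]$ is edgeless and $A_t=\emptyset$, every edge of $G[Y_t]$ is either an edge of some $G[Y_{t'}]$ for $t'\in B_t$ (already accounted for in the stored $a_{t'}$ and $\beta_{t'}$), or a cut edge in $K_{t'}$ with exactly one endpoint in $X_t$. Consequently, a capacitated vertex cover $C\subseteq Y_t$ is fully specified by $C_0:=C\cap X_t$ together with, for each $t'\in B_t$, a choice $E'_{t'}\subseteq K_{t'}$ of cut edges that $t'$ covers from its side, subject to every edge $vw\in K_{t'}\setminus E'_{t'}$ with $v\in X_t$ having $v\in C_0$ and the total load at each $v\in C_0$ not exceeding $c(v)$.

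I would enumerate all $2^{|X_t|}\leq 2^k$ choices of $C_0$, and for each such $C_0$ solve the remaining combinatorial problem: pick, for each child $t'$, a choice $E'_{t'}$ minimizing $\sum_{t'\in B_t}\beta_{t'}(E'_{t'})$ subject to the capacity constraints on $C_0$. The crucial observation is that children in $B_t$ can be grouped by \emph{type}: two children share a type if they have the same signature (the multiset of $X_t$-endpoints of their at most two cut edges, giving $\bigoh(k^2)$ possibilities) and the same cost profile $\beta_{t'}$ (a function on at most four subsets with values in $[k]\cup\{\infty\}$, giving $k^{\bigoh(1)}$ possibilities). A single linear scan of $B_t$ computes the count $N_\tau$ of each type $\tau$ in $\bigoh(|B_t|\cdot k^{\bigoh(1)})$ time, producing the $|B_t|+1$ factor of the final runtime.

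Given $C_0$ and the type counts, I would formulate an integer linear program with one variable $x_{\tau,E'}\geq 0$ for each type $\tau$ and each choice $E'$ of a subset of $\tau$'s (at most two) cut edges such that the complement of $E'$ leaves no edge allocated to a vertex outside $C_0$. The constraints are $\sum_{E'}x_{\tau,E'}=N_\tau$ for every $\tau$, together with, for each $v\in C_0$, an inequality bounding $\sum_{\tau,E'}(\text{number of edges in }K_\tau\setminus E'\text{ incident to }v)\cdot x_{\tau,E'}$ by $c(v)$. The objective minimizes $\sum_{\tau,E'}\beta_\tau(E')\cdot x_{\tau,E'}$. Since the number of variables is bounded by a function of $k$, applying Theorem~\ref{thm:pilp} yields its optimum, and $a_t$ is obtained as the minimum over $C_0$ of $|C_0|+\sum_{t'\in B_t}a_{t'}+(\text{ILP optimum})$.

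The main obstacle will be controlling the ILP's variable count tightly enough to match the claimed $k^{\bigoh(k^2)}$ bound: applying Theorem~\ref{thm:pilp} naively with $p=k^{\bigoh(1)}$ variables yields only $k^{\bigoh(k^{\bigoh(1)})}$, so some further structural insight (for instance, folding the choice of $C_0$ into the ILP via $\bigoh(k)$ auxiliary binary variables rather than enumerating externally, and merging types whose cost profiles coincide on the Pareto-optimal choices) is needed to bring the effective variable count down to $\bigoh(k^2)$, at which point Theorem~\ref{thm:pilp} delivers the desired $k^{\bigoh(k^2)}$ running time.
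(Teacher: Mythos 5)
Your overall strategy is exactly the paper's: branch over $C_0=C\cap X_t$ ($2^k$ choices), group the children in $B_t$ into equivalence classes determined by their neighborhood in $X_t$ and their cost profile, and solve one ILP per branch whose variables count how many children of each class cover a given subset of their cut edges, with capacity constraints at the vertices of $X_t$ and objective $\sum\beta_{t'}(\cdot)$; then $a_t$ is recovered as $|C_0|+\sum_{t'\in B_t}a_{t'}+(\text{ILP optimum})$, minimized over $C_0$. However, the proof as written does not establish the claimed running time, and you say so yourself: with your counting of cost profiles as functions into $[k]\cup\{\infty\}$ you only get $k^{\bigoh(1)}$ many types with an unspecified exponent, hence $k^{\bigoh(k^{\bigoh(1)})}$ from Theorem~\ref{thm:pilp}, and the ``further structural insight'' you then reach for (folding $C_0$ into the ILP with auxiliary binary variables, merging types along Pareto-optimal choices) is not needed and does not address the actual issue.

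The missing idea is simply that every child in $B_t$ is \emph{thin}: $\adh(t')\leq 2$, so $|K_{t'}|\leq 2$, hence $|\cQ_{t'}|\leq 4$, and by Lemma~\ref{lem:cvcsizeaux} (equivalently, the refined form of Lemma~\ref{lem:cvcsize}, which gives $\beta_{t'}(H)\in[|E'|]\cup\{\infty\}$) the values of $\beta_{t'}$ lie in $[2]\cup\{\infty\}$. Thus the number of possible cost profiles per child is an absolute constant, the number of possible neighborhoods $N(Y_{t'})\subseteq X_t$ is $\bigoh(k^2)$ (sets of size at most two), and so the number of equivalence classes, and with it the number of ILP variables, is $\bigoh(k^2)$. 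Theorem~\ref{thm:pilp} then gives $k^{\bigoh(k^2)}\cdot|B_t|$ per branch, and the external enumeration of the $2^k$ sets $C_0$ is harmless since $2^k\cdot k^{\bigoh(k^2)}=k^{\bigoh(k^2)}$; there is no need to internalize $C_0$ into the ILP. With this one counting refinement your argument coincides with the paper's proof and yields the stated bound $k^{\bigoh(k^2)}\cdot(|B_t|+1)$.
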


\begin{proof}
Since $K_t=\emptyset$, it suffices to compute $\cvcv(G[Y_t],Y_t)$ and output $\cD(t)=\{\cvcv(G[Y_t],Y_t),G[Y_t]\mapsto 0\}$. We branch over all the at most $2^k$ possible sets $X'\subseteq X_t$, representing possible intersections of the capacitated vertex cover with $X_t$. For each such $X'$, we construct an ILP formulation which computes the minimum capacitated vertex cover in $G[Y_t]$ which intersects with $X_t$ in $X'$. We begin by having a constant $c_x$ contain the capacity of $x$ for every $x \in X'$, and let $c_x=0$ for $x \in X_t\setminus X'$.

We define a relation $\equiv$ on $B_t$. For $t_1,t_2\in B_t$, we say $t_1\equiv t_2$ if (i) $N(Y_{t_1})=N(Y_{t_2})$, and (ii)  there exists a bijection $\phi$ from $\cQ_{t_1}$ to $\cQ_{t_2}$ which satisfies the following two conditions for every $H_1\in \cQ_{t_1}$ and $H_2\in \cQ_{t_2}$ with $H_2=\phi(H_1)$.
\begin{enumerate}
\item $deg_{H_1}(x)=deg_{H_2}(x)$ for every $x\in N(Y_{t_1})$
\item $\beta_{t_1}(H_1)=\beta_{t_2}(H_2)$.
\end{enumerate}

It is easy to see that $\equiv$ defines an equivalence relation. We denote by $[\equiv]$ the set of equivalence classes of $\equiv$, and note that $|[\equiv]|\leq \bigoh(k^2)$. Indeed, there are only $\bigoh(k^2)$ subsets of $X_t$ containing at most two vertices. Moreover, Lemma~\ref{lem:cvcsizeaux} guarantees $\beta_t(H) \in [2]\cup \{\infty\}$ for $H\in \cQ_t$ for any $t \in B_t$. Hence, the number of ways that $H\in \cQ_t$ can be mapped to $\Nat_0$ under $\beta_t$ is bounded by a constant and thus, $|[\equiv]|\leq \bigoh(k^2)$.

Given an equivalence class $\nu \in [\equiv]$, let us fix a node $t_\nu\in \nu$ and let $H_\nu$ be the graph $(Y_{t_\nu}\cup N(Y_{t_\nu}), E_{t_\nu}\cup K_{t_\nu})$. For $H\in \cQ_{t_\nu}$, we create a variable $z_H$. This variable shall denote the number of nodes $t'$ in $\nu$ such that among the edges in $K_{t'}$, exactly $deg_H(x)$ many edges can be covered by $Y_{t'}$  for each $x \in N(Y_{t'})$ while incurring the cost $\beta_{t_\nu}(H)$. Our ILP instance is as follows, for which an informal explanation is added in the subsequent paragraphs.
\begin{equation*}
\begin{array}{ll@{}ll}
\text{minimize}  & \displaystyle \sum_{H \in \cQ_{t_\nu}, t_\nu\in \nu, \nu\in [\equiv]} \beta_{t_\nu}(H)\cdot z_H+|X'| &\\
\text{subject to}& \displaystyle \sum_{H \in \cQ_{t_\nu}, t_\nu\in \nu} z_H= |\nu|  &\forall \nu\in [\equiv]\\
			& \displaystyle \sum_{H\in \cQ_{t_\nu}} (deg_{H_\nu}(x)-deg_H(x))\cdot z_H \leq c_x  \qquad &\forall x\in X_t
\end{array}
\end{equation*}
where $z_H$ is a non-negative integer variable for every $H\in \cQ_{t_\nu}$, $\nu\in [\equiv]$.

%
%\begin{enumerate}
%\item for each $C\in [\equiv]$: $\sum_{H \in \cQ_{t_c}} z_H= |C|$,
%\item for each $x\in X_t$: $\sum_{H\in \cQ_{t_c}} (deg_{H_c}(x)-deg_H(x))\cdot z_H \leq c(x)$,
%\item for each $C\in [\equiv]$: $z_H\geq 0$ for every $H\in \cQ_{t_c}$.
%\end{enumerate}
Notice that due to the definition of $\equiv$, any two nodes $t_1,t_2\in \nu$ covering the same number of edges incident with $x \in N(Y_{t_\nu})$ will incur the same cost. This justifies expressing the cost $\cvcv(G[Y_t],Y_t)$ with an objective function as above. Since $|\cQ_{t_\nu}|\leq 4$ and we create the variables $z_H$ for each $H \in \cQ_{t_\nu}$ for each equivalence class $\nu\in [\equiv]$, the total number of integer variables created is in $\bigoh(k^2)$ and so an optimal solution can be obtained by Theorem~\ref{thm:pilp} in time $k^{\bigoh(k^2)}\cdot |B_t|$.

Overall, the formulation relies on grouping interchangeable children into equivalence classes, and uses variables to denote the number of children in each class for which we use the same assignment of edges between $X_t$ and $Y_{t'}$. For instance, if there is only a single edge to a specific $x\in X_t$, variable $z_{H_\alpha}$ (with $H_\alpha$ being the graph containing the edge incident to $x$) contains the number of children where the edge is mapped to $x$ while $z_{H_\beta}$ (with $H_\beta$ being the graph without the edge incident to $x$) contains the number of children where this edge is mapped to $Y_{t'}$. Constraint $1$ ensures that the variables indeed capture a partition of children in $i$, while constraint $2$ ensures that the capacities in $X_t$ are not exceeded. The instance minimizes the increase in cardinality of capacitated vertex covers over all $Y_{t'}$ by multiplying the number of children with their respective value of $\beta$ based on the chosen assignment of edges.

Let $\cI$ be the minimum value of  $\sum_{H \in \cQ_{t_\nu}, \nu\in [\equiv]} \beta_{t_\nu}(H)\cdot z_H+|X'|$ over all ILP instances described above; if no solution exists or 
if a solution is larger than the preset (large) constant representing $\beta_{t'}(H)=\infty$, then we set $\cI=\infty$. We argue that $\cvcv(G[Y_t],Y_t)=\sum_{t'\in B_t}a_{t'}+\cI$, from which the proof follows.

%Before proceeding, we provide some intuition for the described ILP formulation. The base idea relies on Lemma~\ref{lem:cvcsize}, which says that the cost difference between covering only the edges inside $Y_{t'}$ and a few additional edges leading out of $Y_{t'}$ is at most the number of additional edges (in this case at most $2$). Hence any capacitated vertex cover for each $Y_{t'}$ must use at least $a_{t'}$ vertices; the only question left to decide is which edges to assign to $X_t$ (this is limited by the capacity of vertices in $X_t$) and which to $Y_{t'}$ (so as to minimize the additional ``cost'' we have to pay in $Y_{t'}$). The formulation relies on grouping interchangeable children into equivalence classes, and uses variables to denote the number of children in each class for which we use the same assignment of edges between $X_t$ and $Y_{t'}$. For instance, if there is only a single edge to a specific $x\in X_t$, variable $i_1$ may contain the number of children where the edge is mapped to $x$ while $i_2$ contains the number of children where this edge is mapped to $Y_{t'}$. Constraints $1$ and $2$ ensure that the variables indeed capture a partition of children in f$i$, while constraint $3$ ensures that the capacities in $X_t$ are not exceeded. The instance minimizes the increase in cardinality of capacitated vertex covers over all $Y_{t'}$ by multiplying the number of children with their respective value of $\beta$ based on the chosen assignment of edges.

First, consider for a contradiction that $\cvcv(G[Y_t],Y_t)<\sum_{t'\in B_t}a_{t'}+\cI$. Then there exists a capacitated vertex cover in $G[Y_t]$ of cardinality $\cvcv(G[Y_t])$. Let $X'$ be its intersection with $X_t$ and let $f$ be its witness function. Let $U_{f,t'}$ be the graph on the vertex set $Y_{t'} \cup N(Y_{t'})$ with an edge set $E_{t'} \cup \{e\in K_{t'}: f(e)\in Y_{t'} \}$. Note that $U_{f,t'}\in \cQ_{t'}$.
For each equivalence class $\nu\in [\equiv]$, we again partition $\nu$ into sets so that $t_1,t_2\in \nu$ belong to the same set if and only if $\phi(U_{f,t_1})=U_{f,t_2}$. Here, $\phi$ is the bijection between any two nodes $t_1,t_2 \in \nu$ ensuring the equivalence of $t_1$ and $t_2$. By fixing a node $t_\nu\in \nu$, each set can be uniquely 'represented' by some $H\in \cQ_{t_\nu}$ due to the bijective mapping. For each $H\in \cQ_{t_\nu}$, let $\tilde{z}_H$ denote the cardinality of the set represented by $H$. One can check that $z_H=\tilde{z}_H$, $\forall H\in \cQ_{t_\nu}, \forall \nu\in [\equiv]$, is a feasible solution for the above ILP yielding an objective value $\cvcv(G[Y_t],Y_t)$, contradicting the optimality of $\cI$. 

On the other hand, consider the case $\cvcv(G[Y_t],Y_t)>\sum_{t'\in B_t}a_{t'}+\cI$. Then one could construct a better capacitated vertex cover for $G[Y_t]$, by having a capacitated vertex cover intersect $X_t$ in the $X'$ used to obtain $\cI$ and using a witness function $f$ which maps edges between $X_t$ and $B_t$ based on the partition of the equivalence classes discovered by the ILP formulation in order to reach $\cI$. Again, one can check that the resulting capacitated vertex cover would have cardinality exactly $\sum_{H \in \cQ_{t_\nu}, \nu\in [\equiv]} \beta(H)\cdot z_H+|X'|$.
\end{proof}

By combining Lemma~\ref{lem:cvcreduce} with Lemma~\ref{lem:redcvcfpt}, we obtain:

\begin{corollary}
\label{cor:cvcjoin}
There exists an algorithm which solves \textsc{CVC Join} in time $k^{\bigoh(k^2)}\cdot n$.
\end{corollary}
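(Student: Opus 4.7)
The plan is a direct two-stage chaining of the preceding lemmas, and the only real work is bookkeeping on the running times. First I would invoke Lemma~\ref{lem:cvcreduce} to convert the given \textsc{CVC Join} instance into a collection of at most $2^{\bigoh(k^2)}$ instances of \textsc{Reduced CVC Join}; this step itself costs $2^{\bigoh(k^2)}\cdot n$ time, and each produced instance has size $\bigoh(n)$ with the same parameter $k$ (up to constants).

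Next, for each of the $2^{\bigoh(k^2)}$ reduced instances, I would call the algorithm of Lemma~\ref{lem:redcvcfpt}, which computes the corresponding reduced data table in time $k^{\bigoh(k^2)}\cdot (|B_t|+1)$. Since the tree-cut decomposition may be assumed to have at most $2|V(G)|$ nodes by Theorem~\ref{thm:decompsize}, we have $|B_t|+1\leq 2n+1$, so each call runs in $k^{\bigoh(k^2)}\cdot n$ time. Multiplying by the number of instances gives a total of $2^{\bigoh(k^2)}\cdot k^{\bigoh(k^2)}\cdot n = k^{\bigoh(k^2)}\cdot n$.

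Finally, I would observe that the post-processing described inside the proof of Lemma~\ref{lem:cvcreduce} (scanning the $2^{\bigoh(k^2)}$ outputs $a_{I_f}$ and assembling $a_t$ and $\beta_t$ on the domain $\cQ_t$, which has size at most $2^k$) is already absorbed into the $2^{\bigoh(k^2)}\cdot n$ bound of the reduction. Adding the reduction cost and the cost of solving all reduced instances yields the claimed overall bound of $k^{\bigoh(k^2)}\cdot n$. There is no substantive obstacle here since all the heavy lifting — the enumeration of edge allocations across $A_t\cup X_t$, and the ILP formulation that copes with the potentially huge set $B_t$ — has already been carried out in Lemmas~\ref{lem:cvcreduce} and \ref{lem:redcvcfpt}; the corollary is essentially the composition of their running times.
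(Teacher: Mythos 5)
Your proposal is correct and follows exactly the route the paper intends: the corollary is stated as an immediate combination of Lemma~\ref{lem:cvcreduce} and Lemma~\ref{lem:redcvcfpt}, with the running times composed as you describe ($2^{\bigoh(k^2)}$ reduced instances, each of size $\bigoh(n)$ and hence $|B_t|+1=\bigoh(n)$, solved in $k^{\bigoh(k^2)}\cdot n$ time each). Your bookkeeping, including absorbing the post-processing into the reduction cost, matches the paper's implicit argument.
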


\begin{theorem}
\label{thm:CVCalg}
\CVC~can be solved in time $k^{\bigoh(k^2)}\cdot n^2$.
\end{theorem}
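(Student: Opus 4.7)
The plan is to assemble the ingredients already established into a single leaf-to-root dynamic programming procedure and then bound the total running time. First, I would invoke Theorem~\ref{thm:decompsize} to convert the given width-$k$ tree-cut decomposition $(T,\mc{X})$ into a nice width-$k$ decomposition with at most $2n$ nodes in $\bigoh(n^2)$ time; this guarantees both the structural properties needed by Lemma~\ref{lem:Asmall} and the linear bound on the number of nodes, which is crucial for controlling the total work.

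Next I would process $T$ in post-order. At each leaf $t$, the data table $\cD(t) = (a_t, \beta_t)$ is computed directly via Lemma~\ref{lem:cvcinit} in time $2^{\bigoh(k \log k)}$. At each internal node $t$, all children $t'$ already have $\cD(t')$ available, so we are exactly in the setting of \textsc{CVC Join}, and Corollary~\ref{cor:cvcjoin} produces $\cD(t)$ in time $k^{\bigoh(k^2)} \cdot n$. Once $\cD(r)$ is computed at the root, Observation~\ref{obs:cvcend} lets us answer the decision question by a single comparison $a_r \leq d$.

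For the running time, since the nice decomposition has at most $2n$ nodes and each node is processed in time at most $k^{\bigoh(k^2)} \cdot n$ (the bound from Corollary~\ref{cor:cvcjoin} dominates the leaf bound from Lemma~\ref{lem:cvcinit}), the total dynamic programming cost is $k^{\bigoh(k^2)} \cdot n^2$. Adding the $\bigoh(n^2)$ cost of computing the nice decomposition yields the claimed overall bound $k^{\bigoh(k^2)} \cdot n^2$.

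There is no substantial obstacle left at this stage: all the technical difficulty has been packaged into the preceding lemmas, in particular the structural bound $|A_t|\le 2k+1$ from Lemma~\ref{lem:Asmall}, the size bound on $\beta_t$ from Lemma~\ref{lem:cvcsize}, the Turing reduction in Lemma~\ref{lem:cvcreduce}, and the ILP-based solution to \textsc{Reduced CVC Join} in Lemma~\ref{lem:redcvcfpt}. The only care required is to verify the correctness of the recursion, i.e.\ that $a_t$ and $\beta_t$ as computed by the inductive step indeed match their definitions as minimum capacitated-vertex-cover costs on $G[Y_t]$ (respectively on the augmented graphs in $\cQ_t$); this is exactly what the correctness argument in the proof of Lemma~\ref{lem:cvcreduce} establishes, so the final theorem follows immediately by combining the three framework steps.
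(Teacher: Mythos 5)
Your proposal is correct and matches the paper's own proof essentially verbatim: transform the decomposition via Theorem~\ref{thm:decompsize}, initialize leaves with Lemma~\ref{lem:cvcinit}, propagate tables bottom-up using Corollary~\ref{cor:cvcjoin}, and conclude at the root via Observation~\ref{obs:cvcend}, with the same $k^{\bigoh(k^2)}\cdot n$ per-node bound over $\bigoh(n)$ nodes. No gaps to report.
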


\begin{proof}
%\begin{proofoftheorem}{\ref{thm:CVCalg}}
We use Theorem~\ref{thm:decompsize} to transform $(T,\mc{X})$ into a nice tree-cut decomposition with at most $2n$ nodes. We then use Lemma~\ref{lem:cvcinit} to compute $\cD(t)$ for each leaf $t$ of $T$, and proceed by computing $\cD(t)$ for nodes in a bottom-to-top fashion by Corollary~\ref{cor:cvcjoin}. 
The total running time per node is at most $n\cdot k^{\bigoh(k^2)}$, and there are $\bigoh(n)$ many nodes. 
%The total running time is dominated by $\sum_{t\in V(T)} (k^{\bigoh(k^2)}\cdot n^2)$, which is bounded by $n^2\cdot k^{\bigoh(k^2)}$. 
Once we obtain $\cD(r)$, we can correctly output by Observation~\ref{obs:cvcend}. %\qed
%\end{proofoftheorem}
\end{proof}

\subsection{Imbalance}
\label{sub:imb}

The \textsc{Imbalance} problem was introduced by Biedl et
al.~\cite{BiedlChanGHW05}. It was shown to be equivalent to \textsc{Graph Cleaning} \cite{GaspersMessingerNP09}, and was studied in the parameterized setting  \cite{LokshtanovMisraSaurabh13,FellowsLokshtanovMisraRS08}. 
The problem is FPT when parameterized
by $\mathbf{degtw}$~\cite{LokshtanovMisraSaurabh13}. In this subsection we
prove that \textsc{Imbalance} remains FPT even when parameterized by
the more general tree-cut width.

  Given a linear order $R$ of vertices in a graph $G$, let $\lhd_R(v)$ and $\rhd_R(v)$ denote the number of neighbors of $v$ which occur, respectively, before (``to the left of'') $v$ in $R$ and after (``to the right of'') $v$ in $R$. The \emph{imbalance} of a vertex $v$, denoted $\imbv_R(v)$, is then defined as the absolute value of $\rhd_R(v)-\lhd_R(v)$, and the imbalance of $R$, denoted $\imbv_R$, is equal to $\sum_{v\in V(G)}\imbv_R(v)$.

\begin{center}
  \vspace{0.3cm} 
  \begin{boxedminipage}[t]{0.99\textwidth}
\begin{quote}
  tcw-\textsc{Imbalance} (\imb)\\ \nopagebreak
  \emph{Instance}: An $n$-vertex graph $G=(V,E)$ with $|V|=n$, and a width-$k$ tree-cut decomposition $(T,\mc{X})$ of $G$, and an integer $d$.  \\ \nopagebreak
  \emph{Parameter}: $k$. \\ \nopagebreak
  \emph{Task}: Decide whether there exists a linear order $R$ of $V$ such that $\imbv_R\leq d$.
\nopagebreak
\end{quote}
\end{boxedminipage}
\end{center}
%\vspace{-0.5cm}
%We prove the following theorem.
\subsubsection{Data Table}

Let $A\subseteq B$ be sets and let $f_A, f_B$ be linear orders of $A,B$ respectively. We say that $f_A$ is a \emph{linear suborder} of $f_B$ if the elements of $A$ occur in the same order in $f_A$ as in $f_B$ (similarly, $f_B$ is a \emph{linear superorder} of $f_A$). The information we remember in our data tables can be informally summarized as follows. First, we remember the minimum imbalance which can be achieved by any linear order in $Y_t$. Second, for each linear order $f$ of vertices which have neighbors outside of $Y_t$ \emph{and} for a restriction on the imbalance on these vertices, we remember how much the imbalance grows when considering only linear superorders of $f$ which satisfy these restrictions. The crucial ingredient is that the restrictions mentioned above are ``weak'' and we only care about linear superorders of $f$ which do not increase over the optimum ``too much''; this allows the second, crucial part of our data tables to remain bounded in $k$.

For brevity, for $v\in Y_t$ we let $\adh_t(v)$ denote $|N_{V\setminus Y_t}(v)|$, i.e., the number of neighbors of $v$ outside $Y_t$. Let $f$ be a linear order of $\partial(Y_t)$ and let $\tau$ be a mapping such that $\tau(v)\in \{-\infty,-\adh_t(v),-\adh_t(v)+1,\dots,\adh_t(v),\infty \}$ for every $v\in \partial(Y_t)$. We then call a tuple of the form $(f,\tau)$ an \emph{extract} (of $Y_t$ or simply put, of $t$), and let $\cL_t$ denote the set of all extracts (for nodes with adhesion at most $k$) of $t$; when the node $t$ is clear from the context, we will use $\cL$ as shorthand for $\cL_t$. The extract $\alpha=(f,\tau)$ is realized in $Y_t$ (by $R$) if there exists a linear order $R$ of $Y_t$ such that
\begin{enumerate}
\item $R$ is a linear superorder of $f$, and
\item for each $v\in \partial(Y_t)$:
\begin{itemize}
\item if $\tau(v)\in \mathbb{Z}$ then $\imbv_R(v)=\tau(v)$,
\item if $\tau(v)=-\infty$ then $\rhd_R(v)-\lhd_R(v)\leq -\adh_t(v)-1$,
\item if $\tau(v)=\infty$ then $\rhd_R(v)-\lhd_R(v)\geq \adh_t(v)+1$.
\end{itemize}
\end{enumerate}

The cost of a realized extract $\alpha$, denoted $c(\alpha)$, is the minimum value of $\sum_{v\in Y_t}\imbv_R(v)$ over all $R$ which realize $\alpha$ (notice that edges with only one endpoint in $Y_t$ do not contribute to $c(\alpha)$). If $\alpha$ is not realized in $Y_t$, we let $c(\alpha)=\infty$. We store the following information in our data table: the cost of a minimum extract realized in $Y_t$, and the cost of every extract whose cost is not much larger than the minimum cost. We formalize below; let $e_t$ denote the number of edges with one endpoint in $Y_t$.

\begin{definition}
$\cD(t)=(a_t,\beta_t)$ where $a_t=\min_{\alpha\in \cL}{c(\alpha})$ and $\beta_t:\cL\rightarrow \Nat_0\cup \{\infty\}$ such that $\beta_t(\alpha)=c(\alpha)-a_t$ if $c(\alpha)-a_t\leq 4e_t$ and $\beta_t(\alpha)=\infty$ otherwise.
\end{definition}

Notice that we are deliberately discarding information about the cost of extracts whose cost exceeds the optimum by over $4e_t$. We justify this below.

\begin{lemma}
\label{lem:imbdiscard}
Let $G=(V,E)$ be a graph and let $t$ be a node in a width-$k$ tree-cut decomposition $(T,\mc{X})$ of $G$. Let $\alpha_1, \alpha_2$ be two extracts of $Y_t$ such that $c(\alpha_1)>4e_t+c(\alpha_2)$. Then for any linear order $R_1$ of $Y_t$ which realizes $\alpha_1$ and any linear superorder $R$ (over $V$) of $R_1$, it holds that there exists a linear order $R'$ over $V$ such that $\imbv_{R'}<\imbv_R$. 
\end{lemma}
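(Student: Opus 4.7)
The plan is to construct $R'$ from $R$ by a slot-preserving substitution. Pick a linear order $S$ of $Y_t$ that realizes $\alpha_2$ with $\sum_{v\in Y_t}\imbv_S(v)=c(\alpha_2)$; if $c(\alpha_2)=\infty$, the hypothesis forces $c(\alpha_1)=\infty$ as well, whence no $R_1$ realizing $\alpha_1$ exists and the statement is vacuous. Define $R'$ by keeping each vertex of $V\setminus Y_t$ at exactly the position it occupies in $R$ and listing the vertices of $Y_t$ in the remaining ``$Y_t$-slots'' in the order prescribed by $S$. I then bound $\imbv_R-\imbv_{R'}$ from below by splitting the total imbalance into contributions from $Y_t$ and from $V\setminus Y_t$ and estimating each separately.

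For the $V\setminus Y_t$-part, each $w\in V\setminus Y_t$ keeps its position in $R'$, as do all other $V\setminus Y_t$-vertices, so $\rhd_R^{V\setminus Y_t}(w)-\lhd_R^{V\setminus Y_t}(w)$ is unchanged. The counts of $Y_t$-slots on either side of $w$ are also preserved; only the identities of the $Y_t$-vertices filling those slots differ. Hence the signed quantity $\rhd^{Y_t}(w)-\lhd^{Y_t}(w)$ can shift by at most $2|N_{Y_t}(w)|$, and $1$-Lipschitzness of $|\cdot|$ gives $|\imbv_{R'}(w)-\imbv_R(w)|\le 2|N_{Y_t}(w)|$. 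Summing and using $\sum_{w\in V\setminus Y_t}|N_{Y_t}(w)|=e_t$,
\[
\sum_{w\in V\setminus Y_t}\imbv_{R'}(w)\le\sum_{w\in V\setminus Y_t}\imbv_R(w)+2e_t.
\]

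For the $Y_t$-part, I will use the pointwise bound $|\imbv_Q(v)-\imbv_{Q|_{Y_t}}(v)|\le\adh_t(v)$ valid for any linear order $Q$ of $V$ and any $v\in Y_t$, which follows because the external neighbors of $v$ shift the signed imbalance $\rhd(v)-\lhd(v)$ by at most $\adh_t(v)$ and $|\cdot|$ is $1$-Lipschitz. Summing over $v\in Y_t$ with $\sum_{v\in Y_t}\adh_t(v)=e_t$ yields $\bigl|\sum_{v\in Y_t}\imbv_Q(v)-\sum_{v\in Y_t}\imbv_{Q|_{Y_t}}(v)\bigr|\le e_t$. Applied to $Q=R$ with $R|_{Y_t}=R_1$, minimality of $c(\alpha_1)$ gives $\sum_{v\in Y_t}\imbv_R(v)\ge c(\alpha_1)-e_t$; applied to $Q=R'$ with $R'|_{Y_t}=S$, the equality $\sum_{v\in Y_t}\imbv_S(v)=c(\alpha_2)$ gives $\sum_{v\in Y_t}\imbv_{R'}(v)\le c(\alpha_2)+e_t$.

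Combining all three estimates,
\[
\imbv_R-\imbv_{R'}\ge \bigl(c(\alpha_1)-e_t\bigr)-\bigl(c(\alpha_2)+e_t\bigr)-2e_t \;=\; c(\alpha_1)-c(\alpha_2)-4e_t \;>\;0
\]
by hypothesis, so $\imbv_{R'}<\imbv_R$. The main obstacle is that permuting the $Y_t$-portion of $R$ can, in principle, disrupt the imbalance of external vertices; the slot-preserving substitution is what caps this damage at $2e_t$ on the $V\setminus Y_t$-side, while the $e_t$ slack on each of the $R$- and $R'$-sides of the $Y_t$-estimate absorbs the border adjustments, and the four copies of $e_t$ combine to give exactly the threshold in the statement.
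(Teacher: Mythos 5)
Your proof is correct and follows essentially the same strategy as the paper's: replace the $Y_t$-portion of $R$ by a minimum-cost linear order realizing $\alpha_2$ while keeping the relative order of $V\setminus Y_t$, and charge the resulting change in imbalance to the $e_t$ crossing edges with total slack $4e_t$. The only difference is cosmetic --- the paper builds $R'$ by concatenating the new $Y_t$-order with the old order of $V\setminus Y_t$, whereas you substitute it in place (preserving the interleaving pattern) and distribute the $4e_t$ slack slightly differently; both constructions give exactly the threshold in the statement.
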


\begin{proof}
Consider any linear order $R_2$ (over $Y_t$) which realizes $\alpha_2$ of cost $c(\alpha_2)$; since $c(\alpha_2)\neq \infty$, there must exist at least one such $R_2$. Let $R$ be any linear order constructed above; clearly, there exists a unique linear order $R^*$ over $V\setminus Y_t$ which is a linear suborder of $R$. Consider the linear order $R'$ obtained by the concatenation of $R_2$ and $R^*$. Let $I^*$ denote the imbalance of $R^*$ in $G[V\setminus Y_t]$, and recall that $c(\alpha_2)$ denotes the imbalance of $R'$ in $G[Y_t]$. Since the addition of $e_t$ edges can only increase or decrease the imbalance by at most $2e_t$, we obtain that $\imbv_{R'}\leq I^*+c(\alpha_2)+2e_t$ and $\imbv_R\geq I^*+c(\alpha_1)-2e_t$. 
Altogether we conclude $\imbv_{R'}\leq I^*+c(\alpha_2)+2e_t< I^* + c(\alpha_1)-2e_t\leq \imbv_R$.
\end{proof}

The following observation establishes a bound on the size of our records. Moreover, it will be useful to note that for a fixed value of the adhesion, one can efficiently enumerate all the functions that may appear in the records (this will facilitate the processing of thin nodes).

\begin{observation}
\label{obs:imbsmall}
The cardinality of $\cL$ is bounded by $k^{\bigoh(k)}$. Moreover, for each node $t$ with adhesion $\kappa$, the number of possible functions $\beta_t$ is bounded by $\kappa^{\kappa^{\bigoh(\kappa)}}$, and these may be enumerated in the same time.
\end{observation}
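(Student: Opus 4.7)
The plan is pure counting in two stages, with no genuine obstacle beyond bookkeeping. For the first claim, recall that an extract $(f,\tau) \in \cL$ is specified by a linear order $f$ of $\partial(Y_t)$ together with, for each $v\in\partial(Y_t)$, an assignment $\tau(v)$ drawn from the $2\adh_t(v)+3 \leq 2k+3$ values $\{-\infty,-\adh_t(v),\ldots,\adh_t(v),\infty\}$. Since $|\partial(Y_t)| \leq \adh(t) \leq k$, there are at most $k!$ choices for $f$ and at most $(2k+3)^k$ choices for $\tau$, so $|\cL| \leq k!\cdot(2k+3)^k = k^{\bigoh(k)}$.

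For the second claim, fix $t$ with $\adh(t) = \kappa$. The key observation is that $e_t = \adh(t) = \kappa$, because $e_t$ is precisely the cardinality of $\cut(e(t))$. Hence, by the definition of $\beta_t$, the function takes values only in the set $\{0, 1, \ldots, 4\kappa\} \cup \{\infty\}$, which has just $4\kappa + 2$ elements. Applying the first bound with $k = \kappa$ gives $|\cL_t| \leq \kappa^{\bigoh(\kappa)}$, and therefore the number of possible maps $\beta_t \colon \cL_t \to \{0,\ldots,4\kappa,\infty\}$ is at most $(4\kappa+2)^{\kappa^{\bigoh(\kappa)}} = \kappa^{\kappa^{\bigoh(\kappa)}}$.

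For enumeration: we list all extracts by iterating over orderings of each possible subset of size at most $\kappa$ and, for each such order, sweeping through all $(2\kappa+3)^\kappa$ choices of $\tau$, which takes $\kappa^{\bigoh(\kappa)}$ time. We then enumerate candidate functions $\beta_t$ by assigning one of the $4\kappa+2$ admissible outputs to each extract; this fits within the bound $\kappa^{\kappa^{\bigoh(\kappa)}}$ obtained above. The entire argument rests on the deliberate truncation at $4e_t$ in the definition of $\beta_t$ (justified by Lemma~\ref{lem:imbdiscard}): without it the output range would be unbounded, but with it the counting collapses to a function of $\kappa$ alone.
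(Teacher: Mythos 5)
Your counting argument is correct and is exactly the straightforward reasoning the paper leaves implicit (the observation is stated without proof): $|\partial(Y_t)|\leq\adh(t)\leq k$ gives at most $k!\,(2k+3)^k=k^{\bigoh(k)}$ extracts, and since $e_t=\adh(t)=\kappa$ the truncation in the definition of $\beta_t$ restricts its range to $[4\kappa]\cup\{\infty\}$, yielding $(4\kappa+2)^{\kappa^{\bigoh(\kappa)}}=\kappa^{\kappa^{\bigoh(\kappa)}}$ functions, enumerable in the same time. No gaps; your explicit observation that $e_t$ equals the adhesion is precisely the point that makes the bound depend on $\kappa$ alone.
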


\subsubsection{Initialization and Termination}
\begin{lemma}%[$\star$]
\label{lem:imbinit}
Let $t$ be a leaf in a nice tree-cut decomposition $(T,\mc{X})$ of a graph $G$, and let $k$ be the width of $(T,\mc{X})$. Then $\cD(t)$ can be computed in time $k^{\bigoh(k)}$.
\end{lemma}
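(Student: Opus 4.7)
The plan is to enumerate every linear order of $Y_t$ and compute the associated cost and extract by brute force. At a leaf $t$ we have $Y_t = X_t$ (since $T_t$ has no other nodes), and the width bound gives $|X_t| \leq \tor(t) \leq k$, so there are at most $|Y_t|! \leq k! = k^{\bigoh(k)}$ linear orders $R$ of $Y_t$ to consider.

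For each such $R$, I would first compute $\sum_{v \in Y_t} \imbv_R(v)$ in $G[Y_t]$ in time polynomial in $k$: the at most $\bigoh(k^2)$ edges of $G[Y_t]$ each update the left/right neighbor counts of their endpoints in constant time. Second, I would read off the unique extract $\alpha = (f,\tau) \in \cL_t$ that $R$ realizes: $f$ is the restriction of $R$ to $\partial(Y_t)$, and for each border vertex $v$, $\tau(v)$ equals $\rhd_R(v) - \lhd_R(v)$ when this quantity lies in $[-\adh_t(v),\adh_t(v)]$, and is set to $-\infty$ or $\infty$ outside that window, matching the realization conditions in the definition. Both computations are polynomial in $k$, so the total work across all enumerated orders is $k^{\bigoh(k)} \cdot \mathrm{poly}(k) = k^{\bigoh(k)}$.

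After this pass, for every extract $\alpha$ encountered I would set $c(\alpha)$ to be the minimum imbalance recorded over all $R$ realizing $\alpha$ (and $c(\alpha) = \infty$ for any $\alpha \in \cL_t$ never realized). Then $a_t = \min_{\alpha \in \cL_t} c(\alpha)$ is read off directly, and $\beta_t(\alpha) = c(\alpha) - a_t$ whenever this difference is at most $4e_t$ and $\beta_t(\alpha) = \infty$ otherwise. By Observation~\ref{obs:imbsmall} we have $|\cL_t| \leq k^{\bigoh(k)}$, so the final bookkeeping pass fits within the target budget as well. The only mild subtlety is verifying that the map $R \mapsto \alpha$ is well defined and computable in polynomial time from $R$; this is immediate from the definition by inspecting each border vertex's left/right neighbor counts in $R$. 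Combining these steps yields the claimed $k^{\bigoh(k)}$ running time.
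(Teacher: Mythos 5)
Your proposal is correct and follows essentially the same route as the paper's (much terser) proof: enumerate all at most $k!$ linear orders of $Y_t=X_t$, compute each order's imbalance in $G[Y_t]$, and read off the realized extract to assemble $a_t$ and $\beta_t$. The extra details you supply (uniqueness of the extract realized by a given $R$, and the final bookkeeping over $\cL_t$ via Observation~\ref{obs:imbsmall}) are consistent with the paper's definitions and only make explicit what the paper leaves implicit.
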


\begin{proof}
We can branch over all at most $k^k$ linear orders of $Y_t$, and for each we can compute its imbalance in $G[Y_t]$ in time $\bigoh(k)$. This already gives sufficient information to construct $\cD(t)$.
\end{proof}

\begin{observation}
\label{obs:imbend}
Let $(G,d)$ be an instance of \imb~and let $r$ be the root of a nice tree-cut decomposition of $G$. Then $(G,d)$ is a yes-instance if and only if $a_r\leq d$.
\end{observation}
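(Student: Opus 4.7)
The plan is to observe that at the root $r$ of a rooted tree-cut decomposition we have $Y_r = V(G)$, so $\partial(Y_r) = \emptyset$ and also $\adh_r(v) = 0$ for every $v$. In particular, every entry of the mapping $\tau$ at the root must lie in the trivial set $\{-\infty, 0, \infty\}$ restricted to $v \in \partial(Y_r) = \emptyset$. Hence $\cL_r$ contains exactly one extract $\alpha_0 = (f_0, \tau_0)$, namely the pair consisting of the empty linear order on $\emptyset$ together with the empty mapping.

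Next, I would check that any linear order $R$ of $V(G)$ realizes $\alpha_0$. Condition (1) in the definition of realization is vacuous because the empty linear order is a suborder of every linear order, and condition (2) is a universal quantification over the empty set $\partial(Y_r)$, hence also vacuous. Consequently $c(\alpha_0)$ is well-defined and equals the minimum of $\sum_{v \in V(G)} \imbv_R(v) = \imbv_R$ taken over \emph{all} linear orders $R$ of $V(G)$.

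Finally, the definition $a_r = \min_{\alpha \in \cL_r} c(\alpha)$ with $\cL_r = \{\alpha_0\}$ yields $a_r = c(\alpha_0) = \min_R \imbv_R$. Therefore there exists a linear order $R$ of $V$ with $\imbv_R \leq d$ if and only if $a_r \leq d$, which is exactly the claim.

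There is no real obstacle to this argument: it is purely a matter of unfolding the definitions at the root, where the ``interface'' $\partial(Y_r)$ collapses to the empty set and the data table reduces to storing a single cost, namely the optimum imbalance of $G$.
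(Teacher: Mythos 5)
Your proposal is correct and matches what the paper intends: the observation is left unproved there precisely because it follows by unfolding the definitions at the root, where $Y_r=V(G)$, $\partial(Y_r)=\emptyset$, the unique (empty) extract is realized by every linear order, and hence $a_r$ equals the minimum imbalance over all orders of $V(G)$. Your argument is exactly this unfolding, so nothing further is needed.
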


\subsubsection{Inductive Step}
What remains is to show how to compute $\cD(t)$ for a node $t$ once $\cD(t')$ is known for each child $t'$ of $t$. We formalize this problem below.

\begin{center}
  \vspace{0.3cm} 
  \begin{boxedminipage}[t]{0.99\textwidth}
\begin{quote}
  \textsc{IMB Join}\\ \nopagebreak
  \emph{Instance}: A \imb~instance $(G,d)$, a non-leaf node $t$ of a width-$k$ nice tree-cut decomposition $(T,\mc{X})$ of $G$, and $\cD(t')$ for each child $t'$ of $t$.\\
    \emph{Parameter:} $k$. \\ \nopagebreak
  \emph{Task}: Compute $\cD(t)$. \nopagebreak
\end{quote}
\end{boxedminipage}
\end{center}

Once again, we use the two-step approach of first reducing to a
``simpler'' problem and then applying a suitable ILP encoding. 
We call the problem we reduce to \textsc{Reduced IMB Join}.

\begin{center}
  \vspace{0.3cm} 
  \begin{boxedminipage}[t]{0.99\textwidth}
\begin{quote}
  \textsc{Reduced IMB Join}\\ \nopagebreak
  \emph{Instance}: A \imb~instance consisting of a graph $G$ and an integer $d$, a root $t$ of a nice tree-cut decomposition $(T,\mc{X})$ of $G$ such that $A_t=\emptyset$ and $G[X_t]$ is edgeless, $\cD(t')$ for each child $t'$ of $t$, a linear order $f$ of $X_t$, a mapping $\omega:X_t\rightarrow \mathbb{Z}$, and a set $\zeta$ of linear constraints on the value of $\rhd(v)-\lhd(v)$ for a subset of $X_t$.\\
    \emph{Parameter:} $k=|X_t|$. \\ \nopagebreak
  \emph{Task}: Compute the minimum value of $\big(\sum_{v\in Y_t\setminus X_t}\imbv_R(v)\big)+\big(\sum_{v\in X_t} |\rhd_R(v)-\lhd_R(v)+\omega(v)|\big)$ over all linear superorders $R$ of $f$ over $Y_t$ which satisfy $\zeta$, or correctly determine that no such $R$ exists. \nopagebreak
\end{quote}
\end{boxedminipage}
\end{center}

\begin{lemma}%[$\star$]
\label{lem:imbreduce}
There is an FPT Turing reduction from \textsc{IMB Join} to $k^{\bigoh(k^2)}$ instances of \textsc{Reduced IMB Join} which runs in time $k^{\bigoh(k^2)}\cdot n$. 
\end{lemma}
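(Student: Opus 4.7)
The plan is to reduce the \textsc{IMB Join} instance by enumerating, for each guess, two pieces of information: (i)~a linear order $\sigma$ on the set $Z_t = X_t \cup \bigcup_{t' \in A_t} \partial(Y_{t'})$, and (ii)~for each $t' \in A_t$, an extract $\alpha_{t'} = (f_{t'}, \tau_{t'}) \in \cL_{t'}$ with $\beta_{t'}(\alpha_{t'}) \neq \infty$, subject to the consistency that $\sigma$ restricted to $\partial(Y_{t'})$ coincides with $f_{t'}$. Since $|X_t| \le \tor(t) \le k$, Lemma~\ref{lem:Asmall} gives $|A_t| \le 2k+1$, and $|\partial(Y_{t'})| \le \adh(t') \le k$ for every $t'$, the set $Z_t$ has size $\bigoh(k^2)$ and there are $k^{\bigoh(k^2)}$ candidate orders $\sigma$; by Observation~\ref{obs:imbsmall} there are at most $k^{\bigoh(k^2)}$ choices for the tuple of extracts. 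Thus the total number of guesses is $k^{\bigoh(k^2)}$, each of which triggers a single call to \textsc{Reduced IMB Join}.

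For every guess I would construct a reduced instance by deleting every vertex of $\bigcup_{t' \in A_t} Y_{t'}$ and every edge of $G[X_t]$, restricting the tree-cut decomposition correspondingly (so that only $B_t$-children remain under $t$), and setting the linear order $f$ to be $\sigma$ restricted to $X_t$. For each $v \in X_t$, I would set $\omega(v)$ to be the signed contribution to $\rhd(v)-\lhd(v)$ coming from exactly those edges that were deleted, namely $v$'s edges to $X_t$ (determined by $f$) and $v$'s edges to $\bigcup_{t' \in A_t} \partial(Y_{t'})$ (determined by $\sigma$). The key structural observation that makes $\omega$ well-defined is that, by niceness of $(T,\mc{X})$, every $t'' \in B_t$ satisfies $N(Y_{t''}) \subseteq X_t$, so no vertex of a $B_t$-subtree has a neighbor in any $A_t$-subtree; all interaction between $A_t$ and $B_t$ is mediated through $X_t$, and hence $\omega(v)$ captures exactly those contributions to $v$'s signed imbalance that the reduced instance no longer sees.

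To pick out a specific target extract of $Y_t$, I would additionally enumerate a target value $\tau_0^*(v)$ at each $v \in X_t \cap \partial(Y_t)$ (only $k^{\bigoh(k)}$ possibilities, so the overall count of $k^{\bigoh(k^2)}$ is preserved) and encode into $\zeta$ the constraint $\rhd(v)-\lhd(v)+\omega(v) = \tau_0^*(v)$, with the obvious inequality reinterpretation when $\tau_0^*(v) = \pm\infty$. The target at each border vertex $u \in \partial(Y_t) \cap Y_{t'}$ with $t' \in A_t$ needs no separate variable: its value is forced to be $\tau_{t'}(u) + \Delta_u$, where $\Delta_u$ is the signed contribution from $u$'s neighbors in $Z_t \setminus \{u\}$, readable directly from $\sigma$; guesses whose implied values fall outside the permitted domain of $\tau_0^*$ are discarded as inconsistent.

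Finally, for each surviving guess the total cost of realizing the corresponding target extract $\alpha$ of $Y_t$ is the value returned by \textsc{Reduced IMB Join}, plus a fixed offset computable from the guess: namely $\sum_{t' \in A_t}(a_{t'} + \beta_{t'}(\alpha_{t'}))$ (the cost accrued inside each $A_t$-subtree), plus a correction $|\tau_{t'}(u)+\Delta_u| - |\tau_{t'}(u)|$ for every $u \in \partial(Y_{t'})$, which replaces the within-$Y_{t'}$ imbalance of the $A_t$-border bundled into $c(\alpha_{t'})$ by the actual within-$Y_t$ imbalance. Minimizing over all guesses yielding the same $\alpha$ gives $c(\alpha)$, and $\cD(t)$ follows by setting $a_t = \min_\alpha c(\alpha)$ and $\beta_t(\alpha) = c(\alpha) - a_t$ (truncated to $\infty$ when the difference exceeds $4e_t$, as permitted by Lemma~\ref{lem:imbdiscard}). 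The main technical difficulty lies in the bookkeeping that correctly synthesizes the guessed orders and extracts into the parameters $\omega$, $\zeta$ and the offset --- in particular the careful handling of the $\pm\infty$ entries of $\tau$ --- but once this is set up, the running time follows: $k^{\bigoh(k^2)}$ guesses each producing an instance of size $\bigoh(n)$ in $\bigoh(n)$ time, yielding $k^{\bigoh(k^2)} \cdot n$ in total.
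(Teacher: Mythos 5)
Your proposal follows essentially the same route as the paper's proof: guess a linear order on $Z_t=X_t\cup\bigcup_{t'\in A_t}\partial(Y_{t'})$ together with a compatible extract for each child in $A_t$, build the reduced instance by deleting the $A_t$-subtrees and the edges inside $X_t$ while encoding their contribution into $f$, $\omega$ and $\zeta$, and recover $c(\alpha)$ as the oracle value plus the additive offset $\sum_{t'\in A_t}(a_{t'}+\beta_{t'}(\alpha_{t'}))$ plus a per-edge $\pm1$ correction, which is exactly the quantity the paper computes incrementally as $K$. The only slip is in your definition of $\Delta_u$, which should range over $u$'s neighbors in $Z_t\setminus Y_{t'}$ rather than $Z_t\setminus\{u\}$ (edges inside $Y_{t'}$ are already bundled into $\tau_{t'}(u)$); this is a cosmetic fix that does not change the argument.
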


\begin{proof}
Our goal is to use the data contained in each $\cD(t')$ to preprocess all information in the nodes $A_t$, leaving us with a set of \textsc{Reduced IMB Join} instances.
We branch over the at most $k^{\bigoh(k)}$ extracts (of $t$) in $\cL$ . The remainder of the proof shows how to compute $c(\alpha)$ for each extract $\alpha=(f,\tau)\in \cL$ using an oracle which solves \textsc{Reduced IMB Join}. Indeed, it follows from the definition of $\cD(t)$ that this information is sufficient to output $\cD(t)$ (in $\bigoh(k)$ time).

Let $Z=\SB v\in Y_t\SM v\in X_t \text{ or } \exists t'\in A_t: v\in \partial(Y_{t'}) \SE$; by Lemma~\ref{lem:Asmall} and the bound on the width of $(T,\mc{X})$, it follows that $|Z|\leq 2k^2+2k$. Let $\cJ$ then denote the set containing all the at most $k^{\bigoh(k^2)}$ linear orders of $Z$. Next, we prune $\cJ$ to make sure it is compatible with our $\alpha$; specifically, we prune $\cJ$ by removing any linear order which is not a linear superorder of $f$. We now branch over $\cJ$; let $j$ be a fixed linear order in~$\cJ$.

For each $t'\in A_t$ we compute the set $\delta(t')$ of extracts (of $t'$) which are ``compatible'' with our chosen extract $\alpha=(f,\tau)$ (of $t$) and our chosen $j$. If $Y_{t'}\cap \partial(Y_t)=\emptyset$, then $\delta(t')=\cL$. However, if $Y_{t'}\cap \partial(Y_t)=:Q\neq \emptyset$ then we let $\delta(t')$ contain an extract $(f',\tau')\in \cL$ if and only if the following holds. 
For every $v\in Q$ such that $\tau(v)=\infty$, it holds that either $\tau'(v)=\infty$ or $\tau'(v)+\rhd_j(v)-\lhd_j(v)\geq \adh_t(v)+1$. For every $v\in Q:\tau(v)=-\infty$ it holds that either $\tau'(v)=-\infty$ or $\tau'(v)+\rhd_j(v)-\lhd_j(v)\leq -\adh_t(v)-1$. For every $v\in Q$ such that $\tau(v)=i$ where $i\neq \infty$, it holds that $i=\tau'(v)+\rhd_j(v)-\lhd_j(v)$.

Finally, we branch over all choices of compatible extracts for each $t'\in A_t$. Specifically, we branch over all choices of $\lambda$, where $\lambda(t')=\alpha'\in \delta(t')$ for 
every $t'\in A_t$. For each $\lambda$, we compute the total imbalance $\imbv_\lambda$ in $A_t$, specifically the sum of the imbalances of vertices in $Y_{t'}$ for every $t'\in A_t$ excluding edges which have an endpoint outside of $Y_t$. This can be computed as follows: $\imbv_\lambda=\sum_{t'\in A_t}(a_{t'}+\beta(\lambda(t')))+K$, where $K$ is the total adjustment of the imbalance caused by edges between $X_t$ and some $Y_{t'}$ (note that $K$ may be negative). $K$ can be computed by the following simple procedure. Initialize with $K:=0$ and $\tau'':=\tau'$, and for each edge $ab$ where $a\in X_t$ and $b\in Y_{t'}$ such that $\lambda(t')=(f',\tau')$:
\begin{itemize}
\item  if $\tau''(b)>0$ and $j(a)<j(b)$ then $K:=K-1$ and $\tau''(b):=\tau''(b)-1$;
\item  if $\tau''(b)\geq 0$ and $j(b)<j(a)$ then $K:=K+1$ and $\tau''(b):=\tau''(b)+1$;
\item  if $\tau''(b)\leq 0$ and $j(a)<j(b)$ then $K:=K+1$ and $\tau''(b):=\tau''(b)-1$;
\item  if $\tau''(b)< 0$ and $j(b)<j(a)$ then $K:=K-1$ and $\tau''(b):=\tau''(b)+1$.
\end{itemize}

The same procedure as the one above can be used to also compute the value $\rhd_j(v)-\lhd_j(v)$ for each $v\in X_t$, which we store as $\omega(v)$. From $\tau$, we obtain a set of constraints $\zeta$ which will guarantee that the solution to \textsc{Reduced IMB Join} will be compatible with $\alpha$. Specifically, for each $v\in X_t\cap \partial(Y_t)$ such that $\tau(v)=\infty$, we set $\zeta(v)\geq \adh_t(v)+1$; if $\tau(v)=-\infty$, then we set $\zeta(v)\leq -\adh_t(v)-1$; and if $\tau(v)\in \mathbb{Z}$, then we set $\zeta(v)=\tau(v)$. As for $f$, we obtain it as the unique linear suborder of $j$ restricted to $X_t$. This completes the construction of our \textsc{Reduced IMB Join} instance $I_{\alpha,j,\lambda}$.

In total, we have branched over $k^{\bigoh(k)}$ extracts for $t$, $k^{\bigoh(k^2)}$ linear orders of $Z$, and $k^{\bigoh(k)}$ choices of $\lambda$, resulting in a total of $k^{\bigoh(k^2)}$ instances. What remains now is to show how the solution of each instance $I_{\alpha,j,\lambda}$ can be used to solve \textsc{IMB Join}. Let $i_{\alpha,j,\lambda}$ denote the output of $I_{\alpha,j,\lambda}$. Then we set $c(\alpha)$ to the minimum value of $i_{\alpha,j,\lambda}+\imbv_\lambda$ over all choices of $\alpha, j, \lambda$.

In conclusion, we argue correctness. For a contradiction, assume that there exist some $\alpha,j,\lambda$ such that $i_{\alpha,j,\lambda}+\imbv_\lambda<c(\alpha)$. By our construction, there then exists a linear order $R_1$ of $X_t\cup \bigcup_{t'\in A_t} v\in Y_{t'}$ such that the imbalance of all vertices in $\bigcup_{t'\in A_t}$ (induced by edges of $G[Y_t]$ with at least one endpoint in $\bigcup_{t'\in A_t} v\in Y_{t'}$) is equal to $\imbv_\lambda$, and furthermore this $R_1$ satisfies the ``requirements'' of the extract $\alpha$ on the imbalance of vertices in $\partial(Y_t)\setminus X_t$. By the construction of the instance $I_{\alpha,j,\lambda}$, there also exists a linear order $R_2$ of $X_t\cup \bigcup_{t'\in B_t} v\in Y_{t'}$ where the order of vertices in $X_t$ is the same as in $R_1$, and hence it is possible to merge $R_1$ and $R_2$ into a linear order $R$ over $Y_t$ by using vertices in $X_t$ as ``anchoring points'' ($R$ preserves the order of the anchoring points and the order of vertices within $R_1$ and of vertices within $R_2$, and the order between a non-anchoring vertex in $R_1$ and a non-anchoring vertex in $R_2$ is irrelevant). One can straightforwardly verify that the linear constraints $\zeta$ in $I_{\alpha,j,\lambda}$ ensure that $R$ realizes the extract $\alpha$. The cost of $R$ is then equal to the $\imbv_\lambda$ plus the imbalance of vertices in $X_t$ and the imbalance of vertices in $\bigcup_{t'\in B_t}Y_{t'}$. The mapping $\omega$ transfers the information on the imbalance of vertices in $X_t$ caused by edges in $G[X_t\cup \bigcup_{t'\in A_t}Y_{t'}]$ into $I_{\alpha,j,\lambda}$, and from there on one can verify that the imbalance of $R$ in vertices in $X_t\cup \bigcup_{t'\in B_t}Y_{t'}$ sums up to $i_{\alpha,j,\lambda}$. Hence this linear order $R$ contradicts the value of $c(\alpha)$.

On the other hand, assume $i_{\alpha,j,\lambda}+\imbv_\lambda>c(\alpha)$ for all $j,\lambda$. Let $R$ be a linear order which realizes $\alpha$; by reversing the merging procedure outlined in the previous paragraph, one can decompose $R$ into $R_1$ (over $X_t \bigcup_{t'\in A_t} v\in Y_{t'}$) and $R_2$ (over $X_t\cup \bigcup_{t'\in B_t} v\in Y_{t'}$). Let $j$ be the unique linear suborder of $R_1$ over $Z$, and $\lambda$ the unique tuple of extracts capturing $R_1$ on individual children $t'\in A_t$. Then, by our assumption on the size of $c(\alpha)$, either the imbalance of $R_1$ over $\bigcup_{t'\in A_t} v\in Y_{t'}$ is greater than $\imbv_\lambda$, or the imbalance of $R_2$ over $X_t\cup \bigcup_{t'\in B_t} v\in Y_{t'}$ (additionally counting edges between $X_t$ and $\bigcup_{t'\in A_t}$) is greater than $i_{\alpha,j,\lambda}$. However, the first can be ruled out by the computation of $\imbv_\lambda$ from $\lambda, j, \alpha$, while the second is impossible since one can plug in $R_2$ into $I_{\alpha,j,\lambda}$ to achieve a solution value of $i_{\alpha,j,\lambda}$.

We conclude that $i_{\alpha,j,\lambda}+\imbv_\lambda=c(\alpha)$, and hence by iterating over all values of $\alpha$ and given the correct value of $i_{\alpha,j,\lambda}$, it is possible to construct $\cD(t)$ in time $k^{\bigoh(k^2)}\cdot |V(G)|$.
\end{proof}

\begin{lemma}%[$\star$]
\label{lem:redimbFPT}
There exists an algorithm which solves \textsc{Reduced IMB Join} in time $k^{\bigoh(k^4)}\cdot (|B_t|+1)$.
\end{lemma}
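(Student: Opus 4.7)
I will follow the two-step strategy of Lemma~\ref{lem:redcvcfpt}: partition the children of $B_t$ into a small number of equivalence classes, then encode the decision as an ILP whose integer variables count how many children of each class use each of a small collection of ``options'', and solve the ILP via Theorem~\ref{thm:pilp}. Because $A_t=\emptyset$ and $G[X_t]$ is edgeless, every vertex of $Y_t\setminus X_t$ lies in some $Y_{t'}$ with $t'\in B_t$, and the only edges leaving $Y_{t'}$ reach $X_t$. Consequently, a linear superorder $R$ of $f$ on $Y_t$ is determined up to cost by an extract realized on each $Y_{t'}$ together with the gap of the order $f$ (one of the $k+1$ slots before, between, or after the $X_t$-vertices) into which each border vertex of $\partial(Y_{t'})$ is placed; interior vertices of $Y_{t'}$ are irrelevant because they have no edge to $X_t$.

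I declare $t_1,t_2\in B_t$ equivalent if there is a bijection $\phi\colon \partial(Y_{t_1})\to \partial(Y_{t_2})$ that preserves the $X_t$-neighborhood of every border vertex and extends to a bijection of extracts under which $\beta_{t_1}$ and $\beta_{t_2}$ agree. Since $\adh(t')\leq 2$ for $t'\in B_t$, the border $\partial(Y_{t'})$ has at most two vertices with at most two edges into $X_t$ in total, so there are $\bigoh(k^2)$ distinct $X_t$-neighborhood configurations; by Observation~\ref{obs:imbsmall} applied with $\kappa\leq 2$, the number of possible $\beta$-tables for a thin child is bounded by a constant. Hence $|[\equiv]|=\bigoh(k^2)$, and each child's class can be identified in constant time. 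For a fixed representative $t_\nu$ of each class $\nu$, the set $\Sigma_\nu$ of options contains $\bigoh(1)$ extract choices times $\bigoh(k^2)$ ways to place the at-most-two border vertices into the $k+1$ gaps consistent with the internal order $f'$, so $|\Sigma_\nu|=\bigoh(k^2)$. For every option $\sigma\in\Sigma_\nu$ I precompute the per-child contribution $\chi(\sigma,u)\in\{-1,0,+1\}$ to $\rhd_R(u)-\lhd_R(u)$ for each $u\in X_t$, and the constant $\cost(\sigma)=c(\alpha')+\sum_{v\in \partial(Y_{t_\nu})}\bigl(\mathrm{adj}(\tau'(v),\delta_v(\sigma))-|\tau'(v)|\bigr)$ giving the total imbalance inside $Y_{t'}$ under option $\sigma$. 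Here $\mathrm{adj}(\tau,\delta)=|\tau+\delta|$ for finite $\tau$; for $\tau=\pm\infty$ one verifies using $|\delta|\leq \adh_{t'}(v)$ that the global imbalance of the corresponding border vertex stays on the same side of zero as $\tau$, so the adjustment is still a closed-form constant depending only on the option.

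The main obstacle is that the objective contains the absolute values $|\rhd_R(v)-\lhd_R(v)+\omega(v)|$ for $v\in X_t$, which are not linear in the counting variables. I handle this by branching over the $2^k$ sign patterns $s\colon X_t\to\{+1,-1\}$ and solving one ILP per pattern. For a fixed $s$, the ILP has nonnegative integer variables $z_{\nu,\sigma}$ for $\nu\in [\equiv]$ and $\sigma\in\Sigma_\nu$; a packing constraint $\sum_{\sigma\in \Sigma_\nu}z_{\nu,\sigma}=|\nu|$ per class; a sign-consistency constraint $s_v\bigl(\omega(v)+\sum_{\nu,\sigma} z_{\nu,\sigma}\chi(\sigma,v)\bigr)\geq 0$ for each $v\in X_t$; the linear constraints from $\zeta$ on $\omega(v)+\sum_{\nu,\sigma} z_{\nu,\sigma}\chi(\sigma,v)$; and the linear objective $\sum_{\nu,\sigma}z_{\nu,\sigma}\cost(\sigma)+\sum_{v\in X_t}s_v\bigl(\omega(v)+\sum_{\nu,\sigma}z_{\nu,\sigma}\chi(\sigma,v)\bigr)$. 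The total number of variables is $\sum_\nu|\Sigma_\nu|=\bigoh(k^4)$, so Theorem~\ref{thm:pilp} solves each ILP in time $k^{\bigoh(k^4)}$; the $2^k$ branching factor is absorbed, and the $\bigoh(|B_t|+1)$ overhead accounts for computing class sizes and writing the ILP input, yielding the stated $k^{\bigoh(k^4)}\cdot(|B_t|+1)$ bound. Correctness mirrors the argument in Lemma~\ref{lem:redcvcfpt}: a feasible linear superorder $R$ induces a valid $z$-assignment whose objective value equals the cost of $R$, and conversely an optimal ILP solution can be realized by concatenating, for each child $t'$, a witness linear order of its chosen extract at the selected gap positions of $f$.
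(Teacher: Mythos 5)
Your proposal is correct and follows essentially the same route as the paper's proof: branch over the $2^k$ sign patterns on $X_t$ to linearize the absolute values, group the thin children of $B_t$ into $\bigoh(k^2)$ equivalence classes determined by their $X_t$-neighborhoods and $\beta$-tables, introduce $\bigoh(k^4)$ integer variables counting children per (class, extract, gap-placement) option with partition, $\zeta$, and sign-consistency constraints, and solve each ILP via Theorem~\ref{thm:pilp}. The only nitpick is that a child whose two adhesion edges both go to the same $u\in X_t$ contributes up to $\pm 2$ (not just $\pm 1$) to $\rhd_R(u)-\lhd_R(u)$, but since $\chi$ remains a bounded precomputable constant per option this does not affect the construction or the running-time bound.
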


\begin{proof}
We again use an ILP formulation, but this time we first need to get rid of the absolute values over $\rhd_R(v)-\lhd_R(v)+\omega(v)$. Since the number of vertices $v$ in $X_t$ is bounded by $k$, we can exhaustively branch over whether each $\rhd_R(v)-\lhd_R(v)+\omega(v)$ ends up being negative or non-negative, and force this choice in our ILP instance by suitable constraints. Formally, let us branch over all the at most $2^k$ possible functions $\phi:v\in X_t\rightarrow \{1,-1\}$. For each fixed $\phi$, we construct an ILP instance $I_\phi$ which outputs the minimum value of $\big(\sum_{v\in Y_t\setminus X_t}\imbv_R(v)\big)+\big(\sum_{v\in X_t} \phi(v)\cdot( \rhd_R(v)-\lhd_R(v)+\omega(v))\big)$ (over all $R$ which satisfy the required conditions); let this value be $i_\phi$. Then the solution of \textsc{Reduced IMB Join} can be correctly computed as $\min_{\phi:v\in X_t\rightarrow \{1,-1\}}i_{\phi}$.

Before proceeding, we mention a few important observations. First, each vertex $v\in Y_{t'\in B_t}$ can be placed in one of the at most $k+1$ intervals in the linear order between the placement of vertices in $X_t$. Second, since there are no edges between any $Y_{t'\in B_t}$ and $Y_{t''\in B_t}$, the order among vertices in different children of $t$ cannot change the value of $i_{\phi}$; only the order between the neighbors of $X_t$ and $X_t$ itself matters. Third, for each $t'\in B_t$, we know that the minimum sum of imbalances in $Y_{t'}$ over all linear orders is $a_{t'}$, and by Lemma~\ref{lem:imbdiscard} it suffices to only consider extracts $\gamma$ of $Y_{t'}$ such that $\beta_t(\gamma)\neq \infty$. 

We now describe the construction of an ILP instance which outputs $i_{\phi}$, as well as its relation to a linear order $R$ so as to facilitate the correctness argument. Let $\cL'_0,\cL'_1,\cL'_2$ denote the set of all extracts of nodes $t'$ where $|\partial(Y_{t'})|=0,1,2$, respectively. Observe that while each of the sets $\cL'_1$, $\cL'_2$ may contain a number of extracts that cannot be bounded by any function of our parameter (since each node formally has its own, unique extract), we may identify and define equivalences that group the extracts of nodes which can affect their neighborhood in exactly the same way. In particular, analogously as in the proof of Lemma~\ref{lem:redcvcfpt} we say that extract $(f_a,\tau_a)\in \cL'_i$ of a node $a$ is \emph{equivalent} to extract $(f_b,\tau_b)\in \cL'_i$ of a node $b$, $i\in \{1,2\}$, if there exists a bijective mapping $\phi$   
from $\partial(Y_a)$ to $\partial(Y_b)$ such that (1) $N(v)\setminus Y_a = N(\phi(v))\setminus Y_b$ for every $v\in \partial(Y_a)$,  
(2) $u<_{f_a} v$ if and only if $\phi(u) <_{f_b} \phi(v)$ for every $u,v\in \partial(Y_a)$, and 
(3) $\tau_a(v)=\tau_b(\phi(v))$ for every $v\in \partial(Y_a)$. In other words, two extracts are equivalent if there is a canonical renaming function 
between vertices in the boundary of $Y_a$ and $Y_b$  which preserves the neighborhoods outside and transforms the extract $(f_a,\tau_a)$ into $(f_b,\tau_b)$. 
%we say that extract $(f_a,\tau_a)\in \cL'_i$ of a node $a$ is \emph{equivalent} to extract $(f_b,\tau_b)\in \cL'_i$ of a node $b$, $i\in \{1,2\}$, if there exists a renaming function from $Y_a$ to $Y_b$ which (1) preserves the neighborhoods outside of $Y_a\cup Y_b$ and (2) transforms $(f_a,\tau_a)$ into $(f_b,\tau_b)$. 
We can now set $\cL_0=\cL'_0$, while $\cL_1\subseteq \cL'_1$ and $\cL_2\subseteq \cL'_2$ are obtained by keeping precisely one representative extract for each equivalence class that appears in $\cL_1$ and $\cL_2$, respectively (in other words, all but one arbitrarily selected extract in each equivalence class is deleted). It can be observed that the size of each such $\cL_j$ is upper-bounded by $\bigoh(k^2)$.

Next, we let the set $S_0$ contain all the constantly many mappings $\beta_0:\cL_0\rightarrow [0]\cup\{\infty\}$. Similarly, we let $S_{1,1}$ contain all $\beta_1:\cL_1\rightarrow [4]\cup\{\infty\}$ (extracts in $S_1$ can appear in nodes of adhesion $1$ or $2$), and we let $S_{1,2}$ contain all $\beta_1:\cL_1\rightarrow [8]\cup\{\infty\}$. For the last case, we let $S_2$ contain all $\beta_2:\cL_2\rightarrow [8]\cup\{\infty\}$. We let:
%******
%We now describe the construction of an ILP instance which outputs $i_{\phi}$, as well as its relation to a linear order $R$ so as to facilitate the correctness argument. Let $\cL_0,\cL_1,\cL_2$ denote the set of all extracts of nodes $t'$ where $|\partial(Y_{t'})|=0,1,2$, respectively; notice that each such $\cL_j$ has constant cardinality. Next, we let the set $S_0$ contain all the constantly many mappings $\beta_0:\cL_0\rightarrow [0]\cup\{\infty\}$. Similarly, we let $S_{1,1}$ contain all $\beta_1:\cL_1\rightarrow [4]\cup\{\infty\}$ (extracts in $S_1$ can appear in nodes of adhesion $1$ or $2$), and we let $S_{1,2}$ contain all $\beta_1:\cL_1\rightarrow [8]\cup\{\infty\}$. For the last case, we let $S_{2,1}$ as well as all $S_{2,1}$ contain all $\beta_2:\cL_2\rightarrow [8]\cup\{\infty\}$. We let:
\begin{itemize}
\item $\&_{1,1}=S_{1,1}\times X_t$, 
\item $\&_{1,2}=S_{1,2}\times X_t\times X_t$, 
\item $\&_{2,1}=S_2\times X_t$, 
\item $\&_{2,2}=S_2\times X_t\times X_t$.
\end{itemize}
 The set $\&=S_0\cup \&_{1,1}\cup \&_{1,2}\cup \&_{2,2}\cup \&_{2,1}$ will serve as the set of all possible ``types'' of nodes $t'\in B_t$, while the set $S=S_0\cup S_{1,1}\cup S_{1,2}\cup S_2$ will contain all possible mappings $\beta$. For each $\beta\in S_{1,2}\cup S_2$, nodes $t'$ have two neighbors in $X_t$; these will in general not be symmetric with respect to $\beta$, and hence we distinguish them by denoting one as $N_1(t')$ and the other as $N_2(t')$. Before proceeding, note that $|\&|\in \bigoh(k^2)$.

For each $\pi\in \&$, we introduce variables which capture information on the distribution of the (at most two) vertices in $\partial(Y_{t'})$ among the $k+1$ intervals of $f$. We formalize for the most general case of $2$ vertices in $\partial(Y_{t'})$ and $2$ neighbors in $X_t$, i.e., for $\&_{2,2}$; the remaining three cases are simplifications of this one and require less variables.
For each $\pi=(\beta,x_1,x_2)\in \&_{2,2}$, we define $Q_{\pi}=\SB t'\in B_t\SM \exists a_{t'}: \cD(t')=(a_{t'},\beta), N_1(t')=x_1, N_2(t')=x_2 \SE$. We let $q_\pi=|Q_{\pi}|$, i.e., $q_\pi$ is the number of nodes of type $\pi$. For a node $t'$, let $N(x_1)\cap \partial(Y_{t'})$ be denoted $y_1$ and analogously $N(x_2)\cap \partial(Y_{t'})=y_2$. Now we introduce (for this $\pi$) $(k+1)\cdot(k+2)$ new variables: 
\begin{itemize}
\item for each $0\leq i, j\leq k$, if $i\neq j$ we introduce a variable $g_i^{h_j^\pi}$, which captures the number of nodes $t'$ of type $\pi$ such that $y_1$ occurs after exactly $i$ vertices of $X_t$ in $R$ and $y_2$ occurs after exactly $j$ vertices of $X_t$ in $R$, and
\item if $i=j$ we introduce two new variables $g_i^{<h^\pi_j}$ and $g_i^{>h^\pi_j}$, which capture the number of pairs of vertices such that both $y_1$ and $y_2$ occur after exactly $i$ vertices of $X_t$ and $y_1$ occurs, respectively, before or after $y_2$.
\end{itemize}

In total, we have introduced $\bigoh(k^4)$ variables. Next, we insert the following constraints:
\begin{enumerate}
\item For each $\pi\in \&$, we make sure that the variables indeed correspond to a partition of the set $Q_{\pi}$. Specifically, we make sure that each variable is greater or equal to $0$, and that the sum of all variables associated with $\pi$ is equal to $q_\pi$.
\item We add a constraint for each linear constraint in $\zeta$ on the value of $\rhd(v)-\lhd(v)$, where $v\in X_t$. This value of $\rhd(v)-\lhd(v)$ can be expressed by a linear combination over our variables, since each variable is associated with a fixed position (left or right) from each vertex in $X_t$ and a type, which in turn contains information on whether the variable is adjacent or not to each $v\in X_t$.
\item We make sure that we only consider parameter values consistent with $\phi$. Specifically, if $\phi(v)=1$ (non-negative) then we need to ensure that $\rhd_R(v)-\lhd_R(v)+\omega(v)\geq 0$, while in the other case we need to ensure that $\rhd_R(v)-\lhd_R(v)+\omega(v)< 0$. Each of these conditions can be expressed as a linear constraint over our variables, similarly as when encoding $\zeta$.
\end{enumerate}

The constructed ILP instance can be solved in time at most $k^{\bigoh(k^4)}\cdot (|B_t|+1)$ by Theorem~\ref{thm:pilp}.

Let $a$ denote $\sum_{t'\in A_t}a_{t'}$. For each variable $var$ corresponding to a placement of a type $\pi$ in the at most $2$ intervals, one can compute from $\beta\in \pi$ and the order between $y_1,y_2,x_1,x_2$ a value $\cost(var)$ in $[8]\cup \infty$ which expresses the additional imbalance in $Y_{t'}$ caused by this particular placement. Then our instance will minimize the expression $val(\phi)=a+\sum_{v\in X_t} \phi\cdot (\rhd_R(v)-\lhd_R(v)+\omega(v))+ \sum_{var}\cost(var)$. After branching through all possible values of $\phi$, we output the minimum over all computed $val(\phi)$.
\end{proof}

\begin{corollary}
\label{cor:imbjoin}
There exists an algorithm which solves \textsc{IMB Join} in time $k^{\bigoh(k^4)}\cdot n$.
\end{corollary}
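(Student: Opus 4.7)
The plan is to combine Lemma~\ref{lem:imbreduce} with Lemma~\ref{lem:redimbFPT} in a straightforward manner. First, I would feed the given \textsc{IMB Join} instance into Lemma~\ref{lem:imbreduce}, which in time $k^{\bigoh(k^2)} \cdot n$ produces at most $k^{\bigoh(k^2)}$ instances of \textsc{Reduced IMB Join}. By construction, each such instance is obtained by minor modifications of the ambient graph and the given nice tree-cut decomposition, so the set $B_t$ of thin children at the root of each produced instance has cardinality at most $|B_t| \leq n$ in the original graph.

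Next, I would solve each of the produced \textsc{Reduced IMB Join} instances using Lemma~\ref{lem:redimbFPT}, which takes time $k^{\bigoh(k^4)} \cdot (|B_t|+1) \leq k^{\bigoh(k^4)} \cdot (n+1)$ per instance. Multiplying by the number $k^{\bigoh(k^2)}$ of oracle calls, the total time spent answering the oracle queries is bounded by $k^{\bigoh(k^2)} \cdot k^{\bigoh(k^4)} \cdot n = k^{\bigoh(k^4)} \cdot n$.

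Finally, I would plug these answers back into the Turing reduction of Lemma~\ref{lem:imbreduce}: that lemma already shows how to assemble the values $c(\alpha)$ for all extracts $\alpha \in \cL$, and hence $\cD(t)=(a_t,\beta_t)$, from the oracle answers, within the $k^{\bigoh(k^2)} \cdot n$ reduction budget. Adding the two contributions yields the claimed overall running time of $k^{\bigoh(k^4)} \cdot n$. There is no genuine obstacle here: the statement is the direct composition of the two preceding lemmas, and the only point worth checking is that the $(|B_t|+1)$ factor contributed by each \textsc{Reduced IMB Join} call is absorbed into the linear-in-$n$ factor, which holds since $|B_t|$ is at most the number of vertices of~$G$.
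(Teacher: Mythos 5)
Your proposal is correct and matches the paper's approach exactly: the corollary is obtained as the direct composition of the Turing reduction of Lemma~\ref{lem:imbreduce} with the algorithm of Lemma~\ref{lem:redimbFPT}, with the $k^{\bigoh(k^2)}$ number of oracle calls absorbed into $k^{\bigoh(k^4)}$ and the $(|B_t|+1)$ factor absorbed into the linear factor in $n$, just as you argue.
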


Now the proof of the theorem below is then analogous to the proof of Theorem~\ref{thm:CVCalg}.

\begin{theorem}%[$\star$]
\label{thm:IMBalg}
\imb~can be solved in time $k^{\bigoh(k^4)}\cdot n^2$.
\end{theorem}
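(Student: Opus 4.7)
The plan is to follow the three-stage dynamic programming framework developed at the beginning of Section~\ref{sec:algorithms}, using the problem-specific components already established for \imb. First, I would apply Theorem~\ref{thm:decompsize} to convert the input tree-cut decomposition $(T,\mc{X})$ into a nice tree-cut decomposition of the same width $k$ with at most $2n$ nodes; this conversion runs in quadratic time and does not affect correctness since tree-cut width is preserved.

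Next, I would compute the data tables $\cD(t)$ in a leaf-to-root sweep. For each leaf $t$, Lemma~\ref{lem:imbinit} produces $\cD(t)$ in time $k^{\bigoh(k)}$ by enumerating all linear orders of $Y_t$. For each internal node $t$, once $\cD(t')$ has been computed for every child $t'$ of $t$, I would invoke the algorithm for \textsc{IMB Join} given by Corollary~\ref{cor:imbjoin}, which computes $\cD(t)$ in time $k^{\bigoh(k^4)}\cdot n$. Summing over the $\bigoh(n)$ nodes of the nice tree-cut decomposition yields a total running time of $k^{\bigoh(k^4)}\cdot n^2$.

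Finally, once $\cD(r)=(a_r,\beta_r)$ is known at the root $r$, Observation~\ref{obs:imbend} lets me answer the decision question by simply comparing $a_r$ against $d$ and returning \emph{yes} iff $a_r\leq d$. The correctness of the whole procedure reduces to the correctness of each individual stage: the initialization is by direct enumeration, the inductive step is justified by Lemma~\ref{lem:imbreduce} together with Lemma~\ref{lem:redimbFPT} (packaged as Corollary~\ref{cor:imbjoin}), and the termination step is immediate from the definition of $\cD(r)$.

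I do not expect a genuine obstacle here, as all of the problem-specific work has already been done in the preceding lemmas; the argument is essentially bookkeeping and mirrors the structure of the proof of Theorem~\ref{thm:CVCalg}. The only thing to be mildly careful about is ensuring that the $\bigoh(n)$ factor from Lemma~\ref{lem:imbreduce} multiplies with the $\bigoh(n)$ bound on the number of nodes of the nice decomposition to give the claimed $n^2$ dependence, rather than a larger polynomial; this follows directly because each call to \textsc{IMB Join} at a node $t$ is charged to that single node.
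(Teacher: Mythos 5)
Your proposal is correct and follows exactly the same route as the paper's proof: convert to a nice tree-cut decomposition via Theorem~\ref{thm:decompsize}, initialize leaves with Lemma~\ref{lem:imbinit}, process internal nodes bottom-up with Corollary~\ref{cor:imbjoin}, and conclude at the root via Observation~\ref{obs:imbend}, with the same $k^{\bigoh(k^4)}\cdot n$ per-node cost over $\bigoh(n)$ nodes giving $k^{\bigoh(k^4)}\cdot n^2$. Nothing is missing.
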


\begin{proof}
We use Theorem~\ref{thm:decompsize} to transform $(T,\mc{X})$ into a nice tree-cut decomposition with at most $2n$ nodes. We then use Lemma~\ref{lem:imbinit} to compute $\cD(t)$ for each leaf $t$ of $T$, and proceed by computing $\cD(t)$ for nodes in a bottom-to-top fashion by Corollary~\ref{cor:imbjoin}. The total running time per node is dominated by $n\cdot k^{\bigoh(k^4)}$ and there are at most $2n$ nodes. Once we obtain $\cD(r)$, we can correctly output by Observation~\ref{obs:imbend}.
\end{proof}

\subsection{Capacitated Dominating Set}
\label{sub:cds}
\textsc{Capacitated Dominating Set} is a generalization of the classical \textsc{Dominating Set} problem by the addition of vertex capacities. Aside from its applications (discussed for instance in~\cite{KuhnMoscibroda10}), \textsc{Capacitated Dominating Set} has been targeted by parameterized complexity research in the past also because it is a useful tool for obtaining W-hardness results~\cite{FominGolovachLokshtanovSaurabh09,FialaGolovachKratochvil11,BodlaenderLokshtanovPenninkx09}.
It is known to be W[1]-hard when parameterized by treewidth~\cite{DomLokstanovSaurabhVillanger08}.

Let $G=(V,E)$ be a capacitated graph with a capacity function $c:V(G) \rightarrow \Nat_0$. We say that $D\subseteq V(G)$ is a {\em capacitated dominating set} of $G$ if there exists a mapping $f:V\setminus D \rightarrow D$ which maps every vertex to one of its neighbors so that the total number of vertices mapped by $f$ to any $v\in D$ does not exceed $c(v)$. 
Such mapping $f$ is said to {\em witness} $D$ and $f$ is a witness function of $D$. We formally define the problem below.

%\vspace{-0.4cm}
\begin{center}
%  \vspace{-0.3cm} 
  \begin{boxedminipage}[t]{0.99\textwidth}  
\begin{quote}
  tcw-\textsc{Capacitated Dominating Set} (\cds)\\ \nopagebreak
  \emph{Instance}: A capacitated graph $G$ on $n$ vertices together with a width-$k$ tree-cut decomposition $(T,\mc{X})$ of $G$, and an integer $d$.  \\ \nopagebreak
  \emph{Parameter}: $k$. \\ \nopagebreak
  \emph{Task}: Decide whether there exists a capacitated dominating set $D$ of $G$ containing at most $d$ vertices. \nopagebreak
\end{quote}
\end{boxedminipage}
\end{center}

%Before proceeding, we introduce a few notions that will be useful for our results. 
%For a fixed witness function $f$ of $D$, the {\em used capacity function} $u^f$ assigns $u^f(v)=|f^{-1}(v)|$, i.e. the cardinality of the preimage of $v$ under $f$, if $v\in D$, and $u^f(v)=0$ otherwise. The {\em residual capacity function} $\kappa^f$ is defined as $\kappa^f(v)=c(v)-u^f(v)$ for every $v\in D$, and $\kappa^f(v)=0$ otherwise. 
For a vertex $v\in \partial(Y_t)$, recall that $\adh_t(v)=|N(v)\setminus Y_t|$ and that $\adh(t)=\sum_{v\in\partial(Y_t)}\adh_t(v)$.
%For $X,Y\subseteq V$ and two functions $\kappa:X\rightarrow \Nat_0$ and $\kappa':Y\rightarrow \Nat_0$, we write $\kappa \leq \kappa'$ if $\kappa(x)\leq \kappa'(x)$ for every $x\in X\cap Y$. 

%****\todo[inline]{Removed old proof from here}

We now give a high-level description of our algorithm. In a ``snapshot'' of a capacitated dominating set at a node $t$ of the decomposition, a vertex in $\partial(Y_t)$ may have two possible states: {\em active} or {\em passive}. The interpretation is as follows: a vertex $v\in \partial(Y_t)$ is
\begin{itemize}
\item active if it will either be in the dominating set or will be dominated by a vertex in $Y_t$, and
\item passive if it will be dominated by a vertex outside of $Y_t$.
\end{itemize}

%To simplify our exposition, we will consider a vertex to have state \textsc{IN} if it is in the dominating set and has a residual capacity of at least $1$, while vertices in the dominating set but with a residual capacity of $0$ will be considered to have the state \textsc{ED} (intuitively, the distinction between these two states is based on whether a vertex can still be used to dominate another vertex outside of $Y_t$, or not).

We will use a data table $\cD(t)$ where each entry stores information about the states of vertices in $\partial(Y_t)$, i.e., the vertices which have neighbors outside of $Y_t$.  Moreover, the snapshot also contains information about the residual capacity of each vertex in $\partial(Y_t)$ that is \emph{active}; in particular, each such vertex $v$ in $\partial(Y_t)$ may have up to $\adh_t(v)\leq k$ neighbors outside of $Y_t$, and hence it is important to distinguish whether the residual capacity of $v$ is at least $\adh_t(v)$ (meaning that $v$ can dominate all of its neighbors outside of $Y_t$), or precisely some number $\ell$ in $[\adh_t(v)-1]$ (meaning that $v$ can only dominate up to $\ell$ of its neighbors outside of $Y_t$)\footnote{Notice that active vertices that are not in the dominating set automatically receive a residual capacity of $0$, but are otherwise not distinguished from vertices in the dominating set.}. We store the information about these residual capacities in the form of an ``offset''. For each such snapshot, we will keep information about the minimum-size capacitated dominating set that corresponds to that snapshot. We formalize below.

\newcommand{\act}{\emph{active}}
\newcommand{\off}{\emph{offset}}

\subsubsection{Data Table}
In this subsection we define the table $\cD(t)$ at node $t$, but before that we will formalize the notion of a \emph{snapshot} (for $t$):

\begin{definition}
A snapshot is a tuple $(\emph{active},\emph{offset})$ where $\emph{active}\subseteq \partial(Y_t)$ and $\emph{offset}$ maps each $v\in \emph{active}$ to an element of $[\adh_t(v)]$.
\end{definition}

For a node $t\in V(t)$, let the \emph{representation} of a snapshot $\sigma=(\emph{active},\emph{offset})$ be the graph $G_\sigma$ obtained from $G[Y_t]$ by 1) deleting all vertices in $\partial(Y_t)\setminus \emph{active}$ and 2) attaching to each vertex $v\in \emph{active}$ precisely $\emph{offset}(v)$ many new pendant vertices; we will assume that each new pendant vertex created in this way has a capacity of $0$ and refer to these new pendant vertices as \emph{auxiliary vertices}. The  \emph{cost} of the snapshot $\sigma$, denoted $\emph{cost}(\sigma)$, is then the minimum size of a capacitated dominated set  over all capacitated dominating sets of $G_\sigma$; any capacitated dominating set of $G_\sigma$ contains no auxiliary vertex. We call such a minimum capacitated dominating set a $\sigma$-CDS, and if no such capacitated dominating set exists then we set $\emph{cost}(\sigma)=\bot$. The \emph{base cost} of $t$, denoted $a_t$, is then defined as the cost of the snapshot $(\emptyset,\emptyset)$.

The final ingredient we need before defining our data table $\cD(t)$ is a bound on the gap between the cost of an arbitrary snapshot and the base cost.

\begin{lemma}
\label{lem:cdsbounds}
For each snapshot $\sigma=(\emph{active},\emph{offset})$ it holds that either $\emph{cost}(\sigma)=\bot$, or $a_t \leq \emph{cost}(\sigma)\leq a_t+2\adh(t)$.
\end{lemma}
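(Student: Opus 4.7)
The plan is to establish each inequality by an explicit constructive transformation between capacitated dominating sets of the two graphs, using the fact that $G_{(\emptyset,\emptyset)}$ and $G_\sigma$ agree on the interior subgraph $G[Y_t\setminus\partial(Y_t)]$ and differ only in whether the boundary vertices of $Y_t$ and their attached auxiliary pendants appear.

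For the upper bound $\emph{cost}(\sigma)\le a_t+2\adh(t)$, I would take an optimal CDS $D_0$ of $G_{(\emptyset,\emptyset)}$ with witness $f_0$ and form $D:=D_0\cup\emph{active}$. On the interior, $f_0$ remains a valid witness in $G_\sigma$ because the interior subgraphs coincide, and each auxiliary pendant of $a\in\emph{active}$ is routed to $a\in D$; the capacity constraint at $a$ requires $c(a)\ge\emph{offset}(a)$, without which no $\sigma$-CDS exists at all and $\emph{cost}(\sigma)=\bot$, rendering the bound vacuous. Bounding $|\emph{active}|\le|\partial(Y_t)|\le\adh(t)$ yields $\emph{cost}(\sigma)\le a_t+\adh(t)\le a_t+2\adh(t)$.

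For the lower bound $a_t\le\emph{cost}(\sigma)$, I would take an optimal $\sigma$-CDS $D$ with witness $f$ and construct a CDS $D'$ of $G_{(\emptyset,\emptyset)}$. Starting from $D':=D\cap(Y_t\setminus\partial(Y_t))$, for each $a\in\emph{active}\cap D$ let $I_a:=f^{-1}(a)\cap(Y_t\setminus\partial(Y_t))$ denote the interior vertices previously witnessed by $a$. The plan is to exchange $a$ with an interior representative of $I_a$ and reroute the remaining vertices of $I_a$ through interior edges of $G[Y_t]$, controlling the net change in $|D'|$ via an exchange argument on $f$.

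The main obstacle is the lower bound, because a high-capacity active vertex $a$ may have many interior dependents (with $|I_a|$ up to $c(a)-\emph{offset}(a)$) and a naive local replacement may fail to re-cover them all. Closing the argument requires a global exchange that traces how $I_a$-witnesses can be re-established through interior edges of $G[Y_t]$, leveraging the constraint that at most $\adh(t)$ edges cross the cut at $t$; the key technical step is a charging scheme that amortizes each stranded interior vertex against one of $a$'s $\adh_t(a)$ external edges, with each cut edge absorbing only a bounded number of charges, so that the total size of $D'$ does not exceed $|D|$ and the stated inequality follows.
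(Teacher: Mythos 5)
Your upper-bound argument is correct and is essentially a streamlined version of what the paper does: instead of the paper's two-step induction (activating one vertex, or raising one offset by one unit, each step costing at most one additional vertex), you add all of \emph{active} to an optimal base solution at once and route every auxiliary pendant to its active neighbour, correctly noting that $c(a)\ge \emph{offset}(a)$ is forced whenever $\mathrm{cost}(\sigma)\neq\bot$; this even yields the slightly sharper bound $\mathrm{cost}(\sigma)\le a_t+|\emph{active}|\le a_t+\adh(t)$.

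The lower bound, however, is a genuine gap: you give only a plan, and the plan cannot be completed as described. The number of interior vertices that an active vertex $a$ witnesses in an optimal $\sigma$-CDS is bounded only by $c(a)$, which is not bounded by any function of $\adh(t)$, while your target inequality $|D'|\le|D|$ has no slack at all; charging ``stranded'' interior vertices to the at most $\adh_t(a)\le\adh(t)$ cut edges at $a$ therefore cannot balance, no matter how the charges are distributed. Concretely, let $G[Y_t]$ be a star whose centre $v$ lies in $\partial(Y_t)$, has one neighbour outside $Y_t$ and capacity $n$, and whose $n$ leaves are interior: for $\sigma=(\{v\},\{v\mapsto 0\})$ the single vertex $v$ is a $\sigma$-CDS, whereas in $G_{(\emptyset,\emptyset)}$ the vertex $v$ is deleted and all $n$ now-isolated leaves must belong to every capacitated dominating set. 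So no size-preserving transformation of the kind you are attempting exists, and the obstacle you flag is fatal to this route rather than a technical nuisance. It is also instructive to compare with the paper: there the first inequality is dispatched by the single observation that a $\sigma$-CDS already meets every requirement of the base instance, i.e.\ by monotonicity rather than by any exchange; that observation is sound only if one reads the base snapshot as \emph{exempting} the vertices of $\partial(Y_t)$ from domination while leaving them available in the graph, not as deleting them as the literal definition of $G_{(\emptyset,\emptyset)}$ states (under the literal deletion reading the star above contradicts the inequality itself). So the missing ingredient is not a cleverer charging scheme but the monotonicity viewpoint, together with a careful treatment of what the base instance is.
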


\begin{proof}
For the first inequality, it suffices to observe that every $\sigma$-CDS is also an $(\emptyset,\emptyset)$-CDS. 
We prove the second inequality by induction on $|\act|+\sum_{v\in \act}\off(v)$. 
The proof is a two-step induction, where the first induction step covers the addition of a vertex and the second induction step covers an increase of the offset. 
For the base case $(\emptyset,\emptyset)$, 
we observe that the claim clearly holds for $|\emph{active}|=0$ by the definition of $a_t$.

For the first inductive step, assume that the second inequality holds for some $\sigma=(\act,\off)$, and let us consider $\sigma'=(\act\cup\{v\},\off\cup\{v\mapsto 0\})$. Then either there exists a $\sigma'$-CDS of the same size as a $\sigma$-CDS (notably, when there exists a $\sigma$-CDS containing $v$), or for an arbitrary $\sigma$-CDS $X$ we have that $X\cup \{v\}$ is a $\sigma'$-CDS. In other words, adding a vertex to $\act$ (with an initial $\off$ of $0$) only increases the cost of the snapshot by at most $1$.

For the second inductive step, assume that the second inequality holds for some $\sigma=(\act,\off)$, and let us consider $\sigma'=(\act,\off')$ where $\off'(v)=\off(v)+1$ for one vertex $v\in \act$ and $\off'(w)=\off(w)$ for every other vertex $w\in \act$. We now distinguish two cases. If $|N_{G_{\sigma'}}(v)\setminus Y_t|>c(v)$ then clearly no $\sigma'$-CDS can exist, i.e., $\emph{cost}(\sigma')=\bot$. Otherwise we are left with three subcases that are similar to the first induction step: 
\begin{itemize}
\item either a $\sigma$-CDS is also a $\sigma'$-CDS, or 
\item we can take an arbitrary $\sigma$-CDS $X$ not containing $v$ and observe that $X\cup \{v\}$ is a $\sigma'$-CDS, or 
\item we can take an arbitrary $\sigma$-CDS $X$ containing $v$ and observe that since $X$ is not a $\sigma'$-CDS, $v$ must be used to dominate some vertex, say $z$. We then observe that $X\cup \{z\}$ is a $\sigma'$-CDS.
\end{itemize}

To summarize, we have shown that for every snapshot $\sigma$, increasing the $\off$ or $\act$ will only increase the size of a $\sigma$-CDS by $1$; since both $|\act|$ and the sum of offsets is upper-bounded by $\adh(t)$, the lemma follows.
\end{proof}

We can now define our data table:
\begin{definition}
\label{def:CDStable}
$\cD(t)=(a_t,\beta_t)$ where $\beta_t$ maps each snapshot of $t$ to a value in $[2\adh(t)]\cup\{\bot\}$ such that for each snapshot $\sigma$ it holds that $\emph{cost}(\sigma)=a_t+\beta_t(\sigma)$ (where $\bot$ acts as an absorbing element).
\end{definition}

The correctness of Definition~\ref{def:CDStable} follows from Lemma~\ref{lem:cdsbounds}. 
%Observe that $|\beta_t|\in |V(G)|+2^{alertenv}<>\end{alertenv}O(k)}$.

\subsubsection{Initialization and Termination}
\begin{lemma}%[$\star$]
\label{lem:cdsinit}
Let $t$ be a leaf in a nice tree-cut decomposition $(T,\mc{X})$ of a graph $G$, and let $k$ be the width of $(T,\mc{X})$. Then $\cD(t)$ can be computed in time $2^{\bigoh(k^2)}$.
\end{lemma}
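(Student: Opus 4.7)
The plan is to solve this essentially by brute force, exploiting the fact that at a leaf $t$ of a nice tree-cut decomposition both $Y_t$ and the representation graph $G_\sigma$ are of size bounded by a function of $k$.

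First I would observe that since $t$ is a leaf, $Y_t = X_t$, and since $\tor(t) \leq k$ we have $|X_t| \leq k$. Moreover $|\partial(Y_t)| \leq \adh(t) \leq k$ and $\sum_{v \in \partial(Y_t)} \adh_t(v) = \adh(t) \leq k$. I would then count the total number of snapshots: a snapshot $(\act, \off)$ is determined by the choice of $\act \subseteq \partial(Y_t)$ and of offset values in $[\adh_t(v)]$ for each $v \in \act$, so the number of snapshots is at most
\[
\sum_{S \subseteq \partial(Y_t)} \prod_{v \in S} (\adh_t(v) + 1) \;\leq\; 2^k (k+1)^k \;=\; 2^{\bigoh(k \log k)}.
\]

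Next, for each snapshot $\sigma = (\act, \off)$ I would build the representation $G_\sigma$ explicitly. Its vertex set consists of $X_t$ minus $\partial(Y_t) \setminus \act$, plus $\sum_{v \in \act} \off(v) \leq \adh(t) \leq k$ auxiliary pendant vertices, so $|V(G_\sigma)| \leq 2k$. To compute $\cost(\sigma)$, I would enumerate all $2^{|X_t|} \leq 2^k$ candidate dominating sets $D \subseteq X_t$ (auxiliary vertices have capacity $0$ and therefore cannot lie in $D$), and for each $D$ verify in polynomial time whether it is a capacitated dominating set of $G_\sigma$: finding a witness $f : V(G_\sigma) \setminus D \to D$ is a classic bipartite $b$-matching problem, which can be decided by a single max-flow computation of size polynomial in $k$. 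The minimum $|D|$ over all valid subsets yields $\cost(\sigma)$, or $\cost(\sigma) = \bot$ if no valid $D$ exists. After computing $a_t = \cost(\emptyset, \emptyset)$ and $\cost(\sigma)$ for every snapshot $\sigma$, I would assemble $\cD(t) = (a_t, \beta_t)$ by setting $\beta_t(\sigma) = \cost(\sigma) - a_t$ (respectively $\bot$), which by Lemma~\ref{lem:cdsbounds} lies in $[2\adh(t)] \cup \{\bot\}$ as required by Definition~\ref{def:CDStable}.

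Multiplying, the total running time is
\[
2^{\bigoh(k \log k)} \cdot 2^{\bigoh(k)} \cdot k^{\bigoh(1)} \;=\; 2^{\bigoh(k \log k)} \;\subseteq\; 2^{\bigoh(k^2)},
\]
as claimed. There is no real obstacle here; the proof is a size estimate combined with the routine reduction of capacitated domination checking to bipartite matching, and the generous $2^{\bigoh(k^2)}$ bound leaves plenty of slack.
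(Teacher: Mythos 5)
Your proposal is correct and follows essentially the same route as the paper: enumerate all snapshots of the leaf (the paper bounds these by $2^k\cdot 2^k$) and solve each resulting small instance on $G_\sigma$ by exhaustive search, which the paper simply calls brute force in time $2^{\bigoh(k^2)}$. Your only deviation is an implementation refinement---checking each candidate $D\subseteq X_t$ via a bipartite $b$-matching/flow feasibility test---which tightens the bound to $2^{\bigoh(k\log k)}$, comfortably within the claimed $2^{\bigoh(k^2)}$.
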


\begin{proof}
We can branch over all at most $2^k\cdot 2^k$-many snapshots of $t$, and for each snapshot $\sigma$ a $\sigma$-CDS can be computed by brute force in time at most $2^{\bigoh(k^2)}$.
\end{proof}

For the next observation, it will be useful to note that the root can always be assumed to contain an empty bag (otherwise, one may attach a new root with this property on top of the original one).

\begin{observation}
\label{obs:cdsend}
Let $(G,d)$ be an instance of \cds~and let $r$ be the root of a nice tree-cut decomposition of $G$ such that $X_r=\emptyset$ and $\cD(r)=(a_r,\beta_r)$. Then $(G,d)$ is a yes-instance if and only if $a_r\leq d$.
\end{observation}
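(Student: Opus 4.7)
The plan is to unfold the definitions at the root and observe that the base cost $a_r$ is precisely the minimum size of a capacitated dominating set of $G$. Since $r$ is the root of the tree-cut decomposition, we have $Y_r = V(G)$, and hence $\partial(Y_r) = N(V(G) \setminus Y_r) = N(\emptyset) = \emptyset$. Consequently the only snapshot available at $r$ is the trivial one $\sigma_0 = (\emptyset,\emptyset)$, and $a_r = \emph{cost}(\sigma_0)$ by the definition of the base cost.

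Next, I would examine the representation graph $G_{\sigma_0}$. By construction, $G_{\sigma_0}$ is obtained from $G[Y_r]$ by deleting the vertices in $\partial(Y_r) \setminus \emph{active}$ (which is empty) and attaching $\emph{offset}(v)$ auxiliary pendants to each vertex in $\emph{active}$ (of which there are none). Since additionally $X_r = \emptyset$ ensures no ``root bag'' artifacts interfere, we get $G_{\sigma_0} = G[Y_r] = G$. Therefore $a_r$ equals the minimum cardinality of a capacitated dominating set of $G$ itself, with $a_r = \bot$ interpreted as no capacitated dominating set existing (in which case no value $d$ can satisfy $a_r \le d$).

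Having identified $a_r$ with the optimal CDS value of $G$, the equivalence is immediate: $(G,d)$ is a yes-instance of \cds{} if and only if $G$ admits a capacitated dominating set of size at most $d$, if and only if $a_r \le d$. There is no real obstacle here — the statement is essentially a sanity check that the data table at the root collapses, under the chosen conventions ($X_r = \emptyset$), to the quantity we ultimately want. The only mild subtlety worth spelling out in the write-up is that the condition $X_r = \emptyset$ (which can always be enforced by attaching an empty bag as a new root, as noted just before the observation) is what guarantees $\partial(Y_r) = \emptyset$ and hence that no nontrivial snapshot needs to be considered at the root.
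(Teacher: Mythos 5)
Your proof is correct and matches the intended justification: the paper states this as an unproved observation precisely because, at the root, $Y_r=V(G)$ forces $\partial(Y_r)=\emptyset$, so the only snapshot is $(\emptyset,\emptyset)$, its representation is $G$ itself, and $a_r$ is the optimum CDS size (with $\bot$ meaning none exists). One small correction to your closing remark: $\partial(Y_r)=\emptyset$ is guaranteed by $Y_r=V(G)$ alone (as your own derivation $N(V(G)\setminus Y_r)=N(\emptyset)=\emptyset$ shows), not by the hypothesis $X_r=\emptyset$, which plays no essential role in this equivalence.
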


\subsubsection{Inductive Step}
What remains is to show how to compute $\cD(t)$ for a node $t$ once $\cD(t')$ is known for each child $t'$ of $t$. We formalize this problem below.

\begin{center}
%  \vspace{-0.3cm} 
  \begin{boxedminipage}[t]{0.99\textwidth}
\begin{quote}
  \textsc{CDS Join}\\ \nopagebreak
  \emph{Instance}: A \cds~instance $(G,d)$, a non-leaf node $t$ of a width-$k$ nice tree-cut decomposition $(T,\mc{X})$ of $G$, and $\cD(t')$ for each child $t'$ of $t$.\\
    \emph{Parameter:} $k$. \\ \nopagebreak
  \emph{Task}: Compute $\cD(t)$. \nopagebreak
\end{quote}
\end{boxedminipage}
\end{center}

For a third time, we will use the two-step approach of first reducing to a
``simpler'' problem and then applying a suitable ILP encoding. 
We call the problem we reduce to \textsc{Reduced CDS Join}.

\begin{center}
  \vspace{0.3cm}
   \begin{boxedminipage}[t]{0.99\textwidth}
\begin{quote}
  \textsc{Reduced CDS Join}\\ \nopagebreak
  \emph{Instance}: A \cds~instance $(G,d)$, a non-leaf root $t$ of a width-$k$ nice tree-cut decomposition $(T,\mc{X})$ of $G$ such that $A_t=\emptyset$ and $G[X_t]$ is edgeless, $\cD(t')$ for each child $t'$ of $t$, and auxiliary sets $S\subseteq X_t$, $S'\subseteq V(G)$ with the property that $\forall s'\in S' \exists q\in B_t: Y_q=\{s'\}$.\\
    \emph{Parameter:} $k$. \\ \nopagebreak
  \emph{Task}: Determine whether $G$ admits a capacitated dominating set $D$ of size at most $d$ such that $D\cap X_t=S$ and $D\cap S'=\emptyset$. \nopagebreak
  \end{quote}
\end{boxedminipage}
\end{center}

\begin{lemma}%[$\star$]
\label{lem:cdsreduce}
There is an FPT Turing reduction from \textsc{CDS Join} to $2^{\bigoh(k^2)}$ instances of \textsc{Reduced CDS Join} which runs in time $2^{\bigoh(k^2)}\cdot n\cdot d$. 
\end{lemma}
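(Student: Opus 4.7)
The plan is to follow the same pattern as the proof of Lemma~\ref{lem:cvcreduce}: branch exhaustively over the ``complex'' information at node $t$ (what happens in $X_t$ and in the children $A_t$), and delegate the remainder of the subproblem to a \textsc{Reduced CDS Join} instance whose decomposition satisfies $A_t = \emptyset$ and $G[X_t]$ edgeless.

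First I would iterate over all snapshots $\sigma$ of $t$; by Definition~\ref{def:CDStable} there are only $2^{O(k\log k)}$ of these, and computing $\cost(\sigma)$ for every $\sigma$ gives $\cD(t)$ via Lemma~\ref{lem:cdsbounds}. Fix such a $\sigma$. By Lemma~\ref{lem:Asmall}, $|A_t|\le 2k+1$ and $|X_t|\le k$. For each $\sigma$ I would then guess: (i) the subset $S\subseteq X_t$ meant to be $D\cap X_t$; (ii) for every $t'\in A_t$ a snapshot $\sigma_{t'}$ of $t'$; and (iii) for every edge of $G[X_t]$ and every vertex of $X_t\setminus S$, the neighbour responsible for dominating it (a vertex of $X_t$, a vertex in some $Y_{t'}$ with $t'\in A_t$, a vertex in some $Y_{t'}$ with $t'\in B_t$, or a vertex outside $Y_t$, the last only if the vertex is in $\partial(Y_t)$ and declared passive by $\sigma$). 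Since each $t'\in A_t$ admits at most $2^{O(k\log k)}$ snapshots, the total number of configurations is $2^{O(k^2)}$.

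For each configuration I would carry out a consistency check: the chosen $\sigma_{t'}$ must agree with $\sigma$ on the boundary vertices shared between $\partial(Y_{t'})$ and $\partial(Y_t)$; the committed dominator choices in $X_t$ must respect the capacities of the chosen dominators and the offset values prescribed by $\sigma$; and the counts of vertices from $\bigcup_{t'\in A_t}Y_{t'}$ claimed to be dominated by each $v\in X_t$ must match the residual capacity of $v$ consistent with $\sigma$. If the configuration is consistent, I would compute the partial cost $c_0 = |S| + \sum_{t'\in A_t}(a_{t'}+\beta_{t'}(\sigma_{t'}))$ and then assemble a \textsc{Reduced CDS Join} instance as follows: delete $\bigcup_{t'\in A_t}Y_{t'}$ from $G$ and the corresponding subtrees from $T$; delete all edges of $G[X_t]$; reduce the capacity of each $v\in X_t$ by the number of vertices $v$ has already been committed to dominate; for each $v\in\emph{active}$ attach $\emph{offset}(v)$ new zero-capacity pendant vertices, each placed in a fresh singleton child of $t$ within $B_t$, and add them to the auxiliary set $S'$; pass $S$ as the required value of $D\cap X_t$.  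The resulting tree-cut decomposition still has width at most $k$ and satisfies the structural requirements of \textsc{Reduced CDS Join}.

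To turn the decision oracle into an optimization, for each configuration I would query the \textsc{Reduced CDS Join} oracle with budgets $d'=0,1,\dots,d-c_0$ until the first ``yes''; this spends $O(n\cdot d)$ time per configuration (instance construction is $O(n)$, and we need at most $d$ queries), and across $2^{O(k^2)}$ configurations yields the claimed $2^{O(k^2)}\cdot n\cdot d$ total time. Setting $\cost(\sigma)$ to the minimum of $c_0+d'$ over all accepting configurations (or $\bot$ if none accept) and assembling $a_t=\min_\sigma\cost(\sigma)$ and $\beta_t$ from Definition~\ref{def:CDStable} finishes the computation. I expect the main obstacle to be the bookkeeping in the consistency check of step (iii) above: correctly accounting for a vertex $v\in X_t\cap\partial(Y_t)$ that is simultaneously on the boundary of $Y_t$ (with a state dictated by $\sigma$) and may be dominated by, or may dominate, vertices in several $Y_{t'}$ with $t'\in A_t$, while ensuring that the residual capacity and pendant-gadget construction for the \textsc{Reduced CDS Join} instance faithfully captures exactly the remaining domination demand from $B_t$.
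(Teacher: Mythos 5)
Your proposal follows the same overall architecture as the paper's proof: loop over the $2^{\bigoh(k)}$ snapshots $\sigma$ of $t$, guess $S=D\cap X_t$ and a snapshot $\sigma_{t'}$ for each of the at most $2k+1$ children in $A_t$ (Lemma~\ref{lem:Asmall}), fix the domination interactions involving $X_t$, delete the $A_t$-subtrees and the edges inside $X_t$, adjust capacities, encode $\off$ via zero-capacity pendants placed into $S'$, and recover the optimum by at most $d$ budget queries to the \textsc{Reduced CDS Join} oracle. This is exactly the paper's plan, including the final optimization-via-decision step.

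However, there is a genuine gap in how the $A_t$-children are detached. Your branching step (iii) assigns dominators only to vertices of $X_t\setminus S$ (and to the edges of $G[X_t]$). It never decides who dominates the boundary vertices of the $A_t$-children that their guessed snapshots leave undominated inside their own subgraphs: every $v\in\partial(Y_p)$, $p\in A_t$, with $v\notin\act'$ (including vertices of $\partial(Y_p)\cap\partial(Y_t)$ that $\sigma$ declares \act\ but $\sigma_p$ does not) must be dominated by a neighbor in $Y_t\setminus Y_p$ --- a vertex of $S$ (consuming capacity) or a boundary vertex of another $A_t$-child with spare \off. Since you delete $\bigcup_{p\in A_t}Y_p$ before calling the oracle, this domination cannot be resolved later; it has to be fixed inside the reduction, and the resulting capacity decrements on $S$ (and offset consumption on other children) must be reflected in the instance you build. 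Your consistency check mentions ``counts of vertices from $\bigcup_{t'\in A_t}Y_{t'}$ claimed to be dominated by each $v\in X_t$,'' but nothing in your configuration actually produces such claims, and mere counts do not suffice because adjacency and per-vertex capacities matter. Without this step the computed $\emph{cost}(\sigma)$ can be too small (configurations with genuinely undominated vertices, or with over-generous residual capacities handed to the \textsc{Reduced CDS Join} instance, get accepted). The paper closes this by explicitly brute-forcing, for each such passive boundary vertex, which of its at most $k$ outside neighbors dominates it (decrementing that neighbor's capacity), and symmetrically letting each boundary vertex with leftover offset choose which outside neighbors it dominates (deleting them); because the crossing edges of each child number at most $\adh(p)\le k$ and $|A_t|\le 2k+1$, adding this branching keeps the total number of oracle instances within $2^{\bigoh(k^2)}$, so your claimed bounds survive once the step is added.
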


\begin{proof}
In order to compute $\cD(t)$ for a \textsc{CDS Join} instance, it suffices to be able to compute the $\emph{cost}$ of each snapshot of $t$. Since the number of snapshots of $t$ is upper-bounded by $2^{\bigoh(k)}$, we can loop over each snapshot in this time and reduce the problem of computing the cost of each individual snapshot to \textsc{Reduced CDS Join}. So let us consider an arbitrary snapshot $\sigma=(\act,\off)$ of $t$.

As our first step, we remove all vertices outside of $Y_t$ (thus turning $t$ into a root). We then branch over each subset $S\subseteq X_t$ with the aim of identifying the intersection between a $\sigma$-CDS and $X_t$; to this end, we only consider subsets $S$ which are ``compatible'' with $\act$, notably by requiring that $S\cap \partial(Y_t)\subseteq \act$ (i.e., vertices in $S$ that lie on the boundary cannot be passive). Moreover, each vertex $x$ in $X_t\setminus (\partial(Y_t)\cup S)$ must be dominated by a neighbor in $Y_t$, and we also branch to determine whether $x$ will be dominated by a specific vertex in $S$ or rather by a vertex in $Y_{t'}$ for some unspecified child $t'$ of $t$. If $x$ is dominated by some $x'\in S$, then we delete $x$ and reduce the capacity of $x'$ by $1$. Afterwards, we delete all edges with both endpoints in $X_t$, since we have at this point predetermined the precise dominating relations within $X_t$. For a fixed $\sigma$, these branching steps require us to consider at most $2^{\bigoh(k^2)}$ many subcases. Observe that in the current branch, every vertex in $X_t\setminus (\partial(Y_t)\cup S)$ now must be dominated by a neighbor in $Y_t$ and that every vertex in $\partial(Y_t)\setminus \act$ must be left undominated.

Our next and crucial step is aimed at dealing with children in $A_t$. In particular, for each $p\in A_t$, we branch over each snapshot $\sigma_p=(\act',\off')$ of $p$ and check whether $\sigma_p$ is compatible with $S$ and $\sigma$: 
\begin{itemize}
\item for each $v\in \partial(Y_t)\cap \partial(Y_p)$ it must hold that if $v\not \in \act$ then $v\not \in \act'$\footnote{On the other hand, it may happen that $v\in \act\setminus \act'$, since $v$ might be dominated by a vertex in $Y_t\setminus Y_p$.}, and
\item if $v\in \act\cap \act'$ then $\off(v)\leq \off'(v)$.
\end{itemize}
As a subcase, we now deal with vertices $v\in \partial(Y_t)\cap \partial(Y_p)$ such that $v\in \act'\setminus \act$. The interpretation here is that $v$ cannot be a dominating vertex (since $v\not \in \act'$), but must be dominated from a neighbor in $Y_t\setminus Y_p$. Since there are at most $k$ such neighbors of $v$, we brute-force to determine which neighbor dominates $v$ and reduce that neighbor's capacity by $1$.

Next, for each $v\in \partial(Y_p)\setminus \partial(Y_t)$, we distinguish the following two cases. If $v\not \in \act'$, then $v$ must be dominated by a neighbor in $Y_t\setminus Y_p$; there are at most $k$ such neighbors and we brute-force to determine which one dominates $v$, whereas for each choice we reduce the dominating vertex's capacity by $1$. If $v\in \act'$ and $\off'(v)>0$ then $v$ may be used to dominate some of its (at most $k$) neighbors in $Y_t\setminus Y_p$, and we brute-force to determine which vertices it dominates and delete the newly dominated vertices.
Moreover, for each $v\in \partial(Y_p)\cap \partial(Y_t)$ such that $\off'(v)-\off(v)=\ell$, $v$ can be used to dominate up to $\ell$ of its neighbors in $Y_t\setminus Y_p$; we once again brute-force to determine which and delete these newly dominated vertices	.

Finally, we need to ensure that a solution for our constructed instance of \textsc{Reduced CDS Join} satisfies the requirements specified by the $\off$ component of $\sigma$. To this end, for each $v\in \act$ we construct $\off(v)$-many new pendant vertices, attach these to $v$, and add these to the set $S'$ (ensuring that these $\off(v)$-many vertices must be dominated by $v$). Since these vertices must also be included in the provided nice tree-cut decomposition of $G$, we simply add each such pendant to its own separate (thin) leaf adjacent to $t$.

Let us now consider the set $\cY$ of instances of \textsc{Reduced CDS Join} obtained by branching over one particular snapshot $\sigma$ of $t$ in the original input instance of \textsc{CDS Join}. Assume that the cost of $\sigma$ is $\delta$, as witnessed by a $\sigma$-CDS $Z$ with witness function $f$. Let us now consider the branch where $S=Z\cap X_t$ and the pairs of vertices in a domination relationship inside $X_t$ reflect the witness function $f$. Similarly, for each $p\in A_t$, consider the branch where we identified the snapshot $\sigma_p$ that mimics the behavior of $Z$---in particular by having $\act'$ reflect $Z\cap Y_t$ and $\off'$ reflect $f$. Then, by the correctness of $\cD(p)$ and optimality of $Z$, it follows that $|Z\cap Y_p|=a_p+\beta_p(\sigma_p)$. Moreover, $|Z\cap X_t|=|S|$. The intersection $Z\cap (\cup_{q\in B_t}Y_q)$ then represents a capacitated dominating set for an instance $Y\in \cY$ (obtained in our reduction) of size $\delta-|S|-(\sum_{p\in A_t}a_p+\beta_p(\sigma_p))$.

On the other hand, let us consider that for some snapshot $\sigma$ leading to a set $\cY$ of instances of \textsc{Reduced CDS Join} via branching, there is an instance $Y\in \cY$ which admits a capacitated dominating set $Z'$ satisfying the stated properties for $S$ and $S'$ of cardinality $d'$. After adjusting $Z'$ by taking into account the selection of $S$ and the domination relations on $X_t$ that led to $Y$, and similarly by expanding $Z'$ via the $\sigma_p$-CDS' used in the branching for the individual nodes $p\in A_t$, we can reverse the arguments used in the previous paragraph and show that there is a $\sigma$-CDS in the original instance of \textsc{CDS Join} of size at most $|Z'|+|S|+(\sum_{p\in A_t}a_p+\beta_p(\sigma_p))$.

To conclude, we have shown that the original instance of \textsc{CDS Join} can be solved by computing a minimum capacitated dominating set for each of the at most $2^{\bigoh(k^2)}$ obtained \textsc{Reduced CDS Join} instances, whereas the construction steps for each obtained instance can be carried out in linear time. Such a minimum set can be computed by performing at most $d$ separate calls asking for a capacitated dominating set with the required properties of size at most $d$.
%Let us now consider the obtained instance of \textsc{Reduced CDS Join}. We claim that if it admits a minimum-cardinality capacitated dominating set satisfying the conditions on $S$ and $S'$ of size $\delta$, then the cost of the signature $\sigma$ of $t$ in the original input instance of \textsc{CDS Join} is $|S|+(\sum_{p:p(p)=t}a_p+\beta_p(\sigma_p))+\delta$.
\end{proof}

\begin{lemma}
\label{lem:redcdsfpt}
\textsc{Reduced CDS Join} can be solved in time $2^{\bigoh(k\cdot \log k)}|B_t|$.
\end{lemma}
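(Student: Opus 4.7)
The plan is to follow the three-stage pattern established in Lemmas~\ref{lem:redcvcfpt} and~\ref{lem:redimbFPT}: group interchangeable children of $B_t$ into equivalence classes, encode the remaining optimization as a bounded-variable ILP, and invoke Theorem~\ref{thm:pilp}. The starting point is the structural simplicity of any thin child $t'\in B_t$: since $\adh(t')\leq 2$ and $N(Y_{t'})\subseteq X_t$, we have $|\partial(Y_{t'})|\leq 2$, each boundary vertex has at most two external neighbors in $X_t$, the number of snapshots of $t'$ is constant, and by Lemma~\ref{lem:cdsbounds} the function $\beta_{t'}$ takes values in $[4]\cup\{\bot\}$, so there are only constantly many distinct $\beta_{t'}$ up to a relabeling of the boundary vertices.

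I would next define an equivalence $\equiv$ on $B_t$: $t_1\equiv t_2$ if there is a bijection $\phi:\partial(Y_{t_1})\to\partial(Y_{t_2})$ with $N(v)\setminus Y_{t_1}=N(\phi(v))\setminus Y_{t_2}$ for every $v$, and $\beta_{t_1}=\beta_{t_2}\circ\phi$ under the induced renaming of snapshots. Each class $\nu$ carries only constantly many ``external behaviors'' of its representative $t_\nu$: a snapshot $\sigma=(\act,\off)$ of $t_\nu$ together with, for every $v\in\act$ with $\off(v)>0$, a choice of at most $\off(v)$ external neighbors of $v$ that $v$ dominates, and for every $v\in\partial(Y_{t_\nu})\setminus\act$, a choice of the external dominator $x\in S\cap N(v)$ that is required to cover $v$. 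The ILP then uses one integer variable $z_{\nu,\pi}$ per class--behavior pair, counting the children of $\nu$ that realize $\pi$.

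The constraints are: (i) class-size conservation $\sum_\pi z_{\nu,\pi}=|\nu|$ for every $\nu$; (ii) for each $x\in S$, the total number of passive boundary vertices assigned for domination to $x$ by the chosen behaviors (including pinned contributions from the singleton-children induced by $S'$) is at most $c(x)$; (iii) for each $x\in X_t\setminus S$, the total supply to $x$ (active boundary vertices with positive offset dominating $x$) is at least one; and (iv) any singleton-child $q$ with $Y_q=\{s'\}$ for $s'\in S'$ is pinned to the snapshot $(\emptyset,\emptyset)$, which together with (ii) forces $s'$ to be covered from $S\cap N(s')$. The objective minimizes $\sum_{\nu,\pi}(a_{t_\nu}+\beta_{t_\nu}(\sigma_\pi))\cdot z_{\nu,\pi}$. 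Correctness is then a standard forward--backward argument: any admissible $D$ of $G$ induces a feasible ILP solution of the same cost by counting children per class and per realised behavior and invoking the correctness of each $\cD(t')$, while conversely any feasible ILP solution can be realised in $G$ by picking arbitrary children within each class together with the corresponding $\sigma$-CDS witnesses inside each $G[Y_{t'}]$.

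The main obstacle is controlling the number of ILP variables tightly enough to reach the claimed $2^{\bigoh(k\log k)}$ running time via Theorem~\ref{thm:pilp}: a straightforward grouping already yields $\bigoh(k^2)$ classes (since the external neighborhood of a 2-neighbor child ranges over $\binom{|X_t|}{2}$ pairs), which would only produce a $k^{\bigoh(k^2)}$ bound. To reach $\bigoh(k)$ variables one needs a refinement that exploits the coupling of 2-neighbor children more carefully---either by factoring each such child into two ``half-child'' contributions that attach independently to the ILP columns associated with each of their two endpoints in $X_t$, together with a global parity/coupling check, or by first branching over the $\bigoh(k)$ plausible dominator-classes for each $x\in X_t\setminus S$ (giving $k^{\bigoh(k)}$ branches) and then solving, within each branch, a leaner ILP or min-cost $b$-matching problem with only $\bigoh(k)$ variables.
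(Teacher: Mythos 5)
Your construction is essentially the paper's: the same equivalence $\equiv$ on the children in $B_t$ (a renaming of the at most two boundary vertices that preserves neighborhoods in $X_t$ and the function $\beta$), integer variables counting, for each class, how many children realize each of the constantly many snapshots/behaviors, capacity constraints for the vertices of $S$, the pinning of the $S'$-singletons to the empty snapshot, and a final call to Theorem~\ref{thm:pilp}; your correctness argument is the same forward--backward exchange the paper invokes. The one structural difference is how domination of $X_t\setminus S$ is enforced: you put covering constraints into the ILP, whereas the paper first branches, for each $x\in X_t\setminus S$, on the equivalence class and snapshot of the child whose boundary vertex will dominate $x$ (a $2^{\bigoh(k\cdot\log k)}$ branching factor, removing the chosen child and decreasing $d$) and only then sets up the ILP --- which is precisely the second ``refinement'' you sketch at the end. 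Concerning the running-time worry you raise: it is legitimate, but the paper does not resolve it more rigorously than you do. After its branching, the paper asserts that the residual ILP has only constantly many variables, yet its variables $s_i^j$ (and the auxiliary $x_{a,i}^j$, $x_{b,i}^j$) are still indexed by the $\bigoh(k^2)$ equivalence classes, so what the construction rigorously supports is the same $k^{\bigoh(k^2)}\cdot(|B_t|+1)$ bound your version yields --- exactly the bound obtained for \textsc{Reduced CVC Join} in Lemma~\ref{lem:redcvcfpt} --- and this weaker bound still suffices for the fixed-parameter tractability of \cds\ in Theorem~\ref{thm:CDSPalg}, up to the constants hidden in the exponent there. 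So you should not treat the missing $\bigoh(k)$-variable trick as a gap in your argument: the ``half-child'' factoring you propose is unnecessary (and the global coupling it would require looks hard to make sound), while your branching alternative is exactly the route the paper takes.
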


\begin{proof}
Observe that to solve an instance of \textsc{Reduced CDS Join}, it suffices to identify, for each child $q$ of $t$, a snapshot of a capacitated dominating set for $G[Y_q]$ while taking into account that (1) vertices in $X_t\setminus S$ must be dominated by a neighbor in $Y_q$ for some child $q$ of $t$, and (2) a vertex $x$ in $S$ can be used to dominate $c(x)$-many vertices in the sets $Y_q$. We will resolve point (1) by branching and then use an ILP formulation for point (2).

As our first step, let us define an equivalence $\equiv$ which identifies which children of $t$ ``behave the same way'' as far as the potential interaction between their dominating sets and $X_t$. Formally, given two children $p,q$ of $t$ such that $\cD(p)=(a_p,\beta_p)$ and $\cD(q)=(a_q,\beta_q)$, $p\equiv q$ if and only if there exists a bijective renaming function $\iota:\partial(Y_p)\rightarrow \partial(Y_q)$ such that:
\begin{itemize}
\item for each $v\in \partial(Y_p)$, $N(v)\setminus Y_p=N(\iota(v))\setminus Y_q$ (i.e., neighborhoods in $X_t$ are preserved), and
\item for each snapshot $\sigma_p$ of $p$, $\beta_p(\sigma_p)=\beta_q(\sigma_q)$ where $\sigma_q$ is obtained by applying $\iota$ component-wise on $\sigma_p$ (i.e., offsets of snapshots are preserved).
\end{itemize}

Since the total number of possible snapshots (up to renaming) for thin nodes is upper-bounded by a constant, the number of functions $\beta_q$ for each child $q\in B_t$ is also upper-bounded by a constant. Hence there are at most $\bigoh(k^2)$-many equivalence classes of $\equiv$; let us denote these $F_1,\dots,F_{y}$ where the number of equivalence classes $y$ is in 
$\bigoh(k^2)$. 

We will now perform exhaustive branching to identify, for each $x\in X_t\setminus S$ (i.e., a vertex in $X_t$ which must be dominated by a child of $t$), an equivalence class of $\equiv$ containing a node $p$ such that a vertex in $\partial(Y_p)$ will dominate $x$, along with the used snapshot of $p$. Since $|X_t|\leq k$ and both the number of equivalence classes as well as the number of snapshots are bounded as above, this amounts to a branching factor of at most $\bigoh(k^{2k})=2^{\bigoh(k\cdot \log k)}$. In each branch, we simply check that the selected snapshot of $p$ can indeed dominate $x$ (by checking the value of the $\off$ in the selected snapshot $\sigma_p$); if not then we discard the current branch, and otherwise we remove $p$ and reduce $d$ by $a_p+\beta_p(\sigma_p)$, where $\cD(p)=(a_p,\beta_p)$. During the branching, we take into account the fact that such $p$ can be used to dominate up to $2$ different vertices in $X\setminus S$---in particular, from the second vertex in $X\setminus S$ onward, we allow for a previously deleted child of $t$ to be picked once again. 
%This minor technicality increases the branching factor to $2^{\bigoh(k\cdot \log k)}$.

The above branching ensures that all vertices in $X_t$ are dominated at this point, and it suffices to define the constraints for an ILP that will identify the number of times a snapshot should be used in each equivalence class of $\equiv$ in order to obtain a capacitated dominating set of size at most $d$. Let $\Omega=\sum_{p\in B_t;\cD(p)=(a_p,\beta_p)}a_p$. For the variables of the ILP, we proceed as follows:

First, for each equivalence class $F_i$ containing $\#_i$-many children of $t$, let $\sigma^1,\dots,\sigma^z$ be the snapshots that occur on these children (modulo the bijective renaming function as defined above). For each $i\in [y]$ and $j\in [z]$, the ILP will contain an integer variable $s_i^j$ which captures the number of nodes in $F_i$ for which the sought-after capacitated dominating set $D$ will correspond to the $j$-th snapshot, i.e., for each node $q\in F_i$ the set $D\cap Y_q$ will have the same intersection on $\partial(Y_q)$ as specified by $\act$ and the capacities on $\partial(Y_q)$ will correspond to those given in $\off$. 

Second, let $S=\{x_1,\dots,x_\iota\}$ for $\iota\leq k$. For each each variable $s_i^j$ which corresponds to a snapshot of an equivalence class whose borders all contain a single vertex, say $w$, that is adjacent to two distinct vertices $x_a,x_b\in S$ and which must be dominated from $X_t$ (i.e., $w\not \in \act$), we create variables $x_{a,i}^j$ and $x_{b,i}^j$ which will capture how many of the vertices in the border of the nodes in $F_i$ will be dominated by $x_a$ and by $x_b$, respectively. These are the only variables that appear in the ILP, and hence (unlike in the previous two problems) it only has constantly-many variables. Let $\chi_a$ be a shorthand for $\sum_{x_{a,i}^j\text{ exists and is defined as above}}x_{a,i}^j$.

Before introducing the constraints, for each $x_\ell\in S$ let $Q_\ell$ be the set of all variables $s_i^j$ corresponding to the snapshots of the equivalence classes which contain precisely \emph{one} vertex on the boundary that (1) lies outside of $\act$ and (2) has precisely one neighbor in $X_t$, and that is $x_\ell$; in other words, $Q_\ell$ contains those variables which correspond to children in $B_t$ that will require $x_\ell$ to dominate one vertex. Similarly, let $U_\ell$ be the set of all variables $s_i^j$ corresponding to the snapshots of the equivalence classes which contain precisely \emph{two} vertices on the boundary that (1) both lie outside of $\act$ and (2) both have precisely one neighbor in $X_t$, and that is $x_\ell$; in other words, $U_\ell$ contains those variables which correspond to children in $B_t$ that will require $x_\ell$ to dominate two vertices.

We now introduce the following constraints in the ILP:
\begin{enumerate}
\item Each variable is non-negative;
\item For each equivalence class $F_i$, we ensure that variables $s_i^1,\dots,s_i^{z}$ represent a correct partitioning of the nodes in $F_i$, in particular by requiring $\#_i=\sum_{j\in [z]}s_i^j$;
\item For each vertex $x_\ell\in X_t$, we ensure that $x_\ell$ has sufficient capacity to dominate all the vertices in the borders of children in $B_t$ given the domination requirements of the individual snapshots and the number of times each snapshot occurs in each $F_i$, in particular by requiring $c(x)\geq \big(\sum_{r\in Q_\ell}r\big)+\big(\sum_{u\in U_\ell}2u\big)+\chi_\ell$;
%Q_\ell+2U_\ell+\chi_\ell$;
\item Finally, for each variable $s_i^j$ which corresponds to a snapshot of an equivalence class whose borders all contain a single vertex, say $w$, that is adjacent to two distinct vertices $x_a,x_b\in S$, we ensure that the nodes which will reject snapshot $\sigma_j$ in equivalence class $F_i$ are all dominated by requiring $s_i^j=x_{a,i}^j+x_{b,i}^j$.
\end{enumerate}

Correctness follows via arguments that are analogous to those in the proof of Lemma~\ref{lem:redcvcfpt}. Since the ILP formulation has constant size, it can be solved by Theorem~\ref{thm:pilp} in time at most $|B_t|$ and the lemma follows.
\end{proof}

We now have all the ingredients necessary to establish the fixed-parameter tractability of \cds.

\begin{theorem}
\label{thm:CDSPalg}
\cds~can be solved in time $2^{\bigoh(k^2)}\cdot n^2$.
\end{theorem}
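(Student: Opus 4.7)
The plan is to mimic the structure of the proofs of Theorems~\ref{thm:CVCalg} and~\ref{thm:IMBalg}, instantiating the three-stage leaf-to-root framework with the machinery already developed specifically for \cds.

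First, I would invoke Theorem~\ref{thm:decompsize} on the input tree-cut decomposition $(T,\mc{X})$ to obtain, in quadratic time and without increasing the width, a nice tree-cut decomposition of $G$ with at most $2n$ nodes. As noted before Observation~\ref{obs:cdsend}, I may furthermore assume that $X_r=\emptyset$ at the root $r$: if this is not already the case, a new empty root can be attached on top of the existing one (this keeps the decomposition nice, the width unchanged, and adds only one node).

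Second, I would process the decomposition in a post-order (leaf-to-root) traversal, maintaining the data table $\cD(t)=(a_t,\beta_t)$ from Definition~\ref{def:CDStable} at every node. At each leaf $t$, Lemma~\ref{lem:cdsinit} computes $\cD(t)$ in time $2^{\bigoh(k^2)}$. At each internal node $t$, having $\cD(t')$ for every child $t'$, I need to solve an instance of \textsc{CDS Join}; this is handled by combining Lemma~\ref{lem:cdsreduce} (which Turing-reduces \textsc{CDS Join} to $2^{\bigoh(k^2)}$ instances of \textsc{Reduced CDS Join} in time $2^{\bigoh(k^2)}\cdot n\cdot d$) with Lemma~\ref{lem:redcdsfpt} (which solves each \textsc{Reduced CDS Join} instance in time $2^{\bigoh(k\cdot\log k)}\cdot|B_t|$). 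Since $d\le n$ and $|B_t|\le n$, the per-node cost is dominated by $2^{\bigoh(k^2)}\cdot n$ terms, exactly as in the analogous steps of the CVC and IMB algorithms.

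Finally, once $\cD(r)=(a_r,\beta_r)$ is computed, Observation~\ref{obs:cdsend} tells us that $(G,d)$ is a yes-instance if and only if $a_r\le d$, so we can output the correct answer in constant additional time. Totalling, there are $\bigoh(n)$ nodes in the nice tree-cut decomposition, and each contributes $2^{\bigoh(k^2)}\cdot n$ work, yielding an overall running time of $2^{\bigoh(k^2)}\cdot n^2$ as claimed. There is no genuine obstacle here, since all the nontrivial content---correctness of the snapshot-based table, the bounded gap between base cost and any realized snapshot (Lemma~\ref{lem:cdsbounds}), the branching-plus-ILP reduction, and its ILP resolution---has already been established in the preceding lemmas; the present theorem is simply the orchestration of those components into the generic leaf-to-root framework described at the beginning of Section~\ref{sec:algorithms}.
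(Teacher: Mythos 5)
Your proposal matches the paper's intended argument exactly: the paper's own proof of Theorem~\ref{thm:CDSPalg} is literally the one-line remark that it is analogous to Theorems~\ref{thm:CVCalg} and~\ref{thm:IMBalg}, and your write-up is precisely that orchestration (Theorem~\ref{thm:decompsize} for the nice decomposition with an empty root added if needed, Lemma~\ref{lem:cdsinit} at leaves, Lemmas~\ref{lem:cdsreduce} and~\ref{lem:redcdsfpt} for the join, Observation~\ref{obs:cdsend} at the root). The only gloss is the per-node accounting --- Lemma~\ref{lem:cdsreduce} is stated with running time $2^{\bigoh(k^2)}\cdot n\cdot d$ rather than $2^{\bigoh(k^2)}\cdot n$, so ``dominated by $2^{\bigoh(k^2)}\cdot n$ per node'' needs the $d$ factor addressed --- but this imprecision is inherited from the paper itself, whose theorem statement likewise asserts the $n^2$ bound without comment.
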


\begin{proof}
The proof is analogous to the proof of Theorem~\ref{thm:CVCalg} and Theorem~\ref{thm:IMBalg}.
\end{proof}

\subsection{Proof of Proposition~\ref{prop:twtcw}}

As the final result in this section, we provide a proof for the relationship between treewidth and tree-cut width claimed in Subsection~\ref{sub:Rel}.

To provide some intuition for the proof, let us first recall that for each thin node $t$ of a nice tree-decomposition with parent $p(t)$, $t$ may belong to $B_{p(t)}$ or $A_{p(t)}$ 
depending on whether $N(Y_t)\subseteq X_{p(t)}$ holds or not. 
Let us say that a thin node $t$ is \emph{$B$-thin} (resp. \emph{$A$-thin}) if $t\in B_{p(t)}$ (resp. $t\in A_{p(t)}$). 

In the proof, we show that a nice tree-cut decomposition $(T,\mc{X})$ of width at most $k$ can be converted to a tree-decomposition of width of $2k^2+3k$. 
One intuitive way to obtain a tree-decomposition from $(T,\mc{X})$ would be to, for every edge $xy$ of $G$, add $\{x,y\}$ to all nodes of $T$ 
lying on the unique path connecting $t_x$ and $t_y$, namely the nodes whose bags contain $x$ and $y.$ 
Put equivalently, for each vertex $x$ of $G$, one can find a minimal subtree $T(x)$ of $T$ whose bags collectively cover $N[x]$ 
and add $x$ to all nodes of $T(x)$. It is easy to check that the resulting collection of bags together with $T$ form a tree-decomposition. 
However, this construction can create a bag with an arbitrarily large number of vertices. Specifically, 
any node $t$ of a nice tree-cut decomposition $(T,\mc{X})$ may have unboundedly many $B$-thin children, and the aforementioned ``intuitive'' construction would add to $t$ all vertices in $\bigcup_{t'\in B_t} Y_t$ with neighbors in $X_t.$ 

To remedy this issue, we use a {\sl truncated} version of $T(x)$ and add $x$ to the nodes of this truncated version of $T(x)$ only. 
In this way, we intend to avoid adding $x$ to $t$ when $x\in Y_{t'}$ for some $B$-thin child $t'$ of $t$ even if $t$ belongs to $T(x).$ 
It turns out that this simple tweak of the previous attempt is sufficient to obtain the desired bound on the treewidth. 
We formalize the idea in the proof below.

\begin{proof}[Proof of Proposition~\ref{prop:twtcw}]
Let $G$ be an arbitrary graph of tree-cut width $k$. By Theorem~\ref{thm:decompsize}, we may assume that $G$ has a nice tree-cut decomposition $(T,\mc{X})$ with at most $2|V(G)|$ nodes.

From $(T,\mc{X})$, we construct a pair $(T, \mc{Z})$, where $\mc{Z}$ consists of vertex subsets $Z_t \subseteq V(G)$ over all nodes $t$ of $T$. 
For every vertex $v\in V(G)$, let $t(v)$ be the node of $T$ whose bag $X_{t(v)}$ contains $v$ and 
let $T(v)$ be the minimal subtree of $T$ satisfying $N[v]\subseteq \bigcup_{t\in T(v)}X_t.$
Clearly, $t(v)$ is contained in $T(v).$ If there is a $B$-thin node on the unique path from the root of $T(v)$ to $t(v)$, 
let ${\sf top}^*(v)$ be the lowermost $B$-thin node among such $B$-thin nodes; 
%such that the neighbors of the bag $X_{\sf top^*(v)}$ are fully contained in its parent's bag (i.e., ${\sf top}^*(v)$ belongs to the set $B_{t'}$ of the parent $t'$ of ${\sf top}^*(v)$); 
if no $B$-thin node exists between $t(v)$ and ${\sf top}(v)$, we set ${\sf top}^*(v)$ to be the root of $T(v).$ Now, 
for every $v\in V(G)$ let $T'_v$ be the subtree of $T(v)$ rooted at ${\sf top}^*(v).$ % such that $N[v] \cap Y_{{\sf top}^*(v)} \subseteq \bigcup_{b\in V(T'_v)}X_b$. 
%In other words, $T'_v$ is the minimal subtree of $T$ the union of whose bags encompasses the closed neighborhood of $v$ within $Y_{{\sf top}^*(v)}.$
Finally, for each node $t$ of $T'_v$, we add $v$ to $Z_t$. This completes the construction of $(T, \mc{Z}).$ 

%Let the node ${\sf top}(v)$ of $T$ be the lowermost node $t$ of $T$ 
%such that $N[v]\subseteq Y_t$. 

We want to show that $(T, \mc{Z})$ is a tree-decomposition of $G$. First, 
the construction guarantees that for every $v\in V(G)$, the set of bags containing $v$ is the tree $T'_v$, thus is connected in $T$. 
Next, we shall argue that for every edge $uv$ of $G,$ there exists a bag in $\mc{Z}$ containing both $u$ and $v.$ For this, 
it suffices to verify that $T'_u$ and $T'_v$ have at least one node in common; let us assume for a contradiction that this is not the case.
If the root of $T'_v$ is an ancestor of $T'_u$ in $T$, then 
$T'_v$ must have included $t(u)$ due to the edge $uv$ and thus the root of $T'_u$. 
Therefore, if $T'_u$ and $T'_v$ are disjoint, the roots of $T'_u$ and $T'_v$ cannot be in an ancestor-descendant relation.   
Note that both $T(u)$ and $T(v)$ contain $\{t(u),t(v)\}$, and in particular the least common ancestor of $T'_u$ and $T'_v$. 
The fact that $T'_u\neq T(u)$ and $T'_v\neq T(v)$ entails that the roots, i.e., ${\sf top}^*(u)$ and ${\sf top}^*(v)$ of $T'_u$ and $T'_v$, respectively, are $B$-thin nodes. 
Recall that $t(u)$ and $t(v)$ are respectively nodes of $T'_u$ and $T'_v$, hence $u\in Y_{{\sf top}^*(u)}$ and $v\in Y_{{\sf top}^*(v)}$. 
However, the niceness property of $(T,\mc{X})$ implies that $N(Y_{{\sf top}^*(u)})$ are included in the bag of the parent of ${\sf top}^*(u),$ a contradiction. 
Therefore, it follows that $(T, \mc{Z})$ is a tree-decomposition of $G$. 

To verify the width of $(T, \mc{Z}),$ consider an arbitrary node $t$ and let us bound the number of vertices $v$ such that $t\in T'_v,$ 
which equals $|Z_t|$. 
Observe that $t\in T'_v$ holds only if $t$ is a descendant of ${\sf top}^*(v)$ (possibly the same node). % and $Y_t$ contains $v$ itself or a neighbor of $v.$ 
This condition does not hold for any vertex in $Y_{t'}$ for $t'\in B_t,$ implying that $Z_t$ is disjoint from $\bigcup_{t'\in B_t} Y_t.$ 
Consider $v\in Z_t \setminus X_t$. If $v\notin Y_t,$ observe that $t(v)$ is a strict ancestor of $t$ and 
the reason that $v$ is placed in $Z_t$ is because $T'_v$ contains $t$, and especially $v$ has a neighbor in $Y_t.$ 
As such an edge connecting $v$ and its neighbor in $Y_t$ is counted in $\adh(t)$, there are at most $|\adh(t)|$ such vertices $v,$ namely 
satisfying $v\in Z_t\setminus X_t \setminus Y_t.$
If $v\in Y_t,$ note that $v\in Y_{t'}$ for some $t'\in A_t$ and $v\in Z_t$ implicates that $T'_v$ contains $t$. 
This means that $v$ has a neighbor outside $Y_{t'}$ since otherwise, the root of $T'_v$ would have been a descendant of $t'.$
Such an edge connecting $v$ and its neighbor outside $Y'_{t'}$ is counted in $\adh(t')$. With~\cref{lem:Asmall}, there are at most $2k+1$ nodes in $A_t$ 
and hence, there are at most $(2k+1)\cdot k$ vertices $v$ such that $v\in (Z_t \setminus X_t)\cap Y_t.$ This yields the claimed bound 
$|Z_t|\leq |X_t| + |\adh(t)| + \sum_{t'\in A_t}|\adh(t')| \leq 2k^2+3k.$  
\end{proof}

\section{Lower Bounds}
\label{sec:hard}

%\subsection{List Coloring and Precoloring Extension}

We show that \textsc{List Coloring}~\cite{ErdosRubinTaylor79} and \textsc{Precoloring Extension}~\cite{BiroHujterTuza92} are $\W[1]$-hard
parameterized by tree-cut width, strengthening the known
$\W[1]$-hardness results with respect to 
treewidth~\cite{FellowsEtAl11}.
Both problems have been studied extensively in the classical \cite{Marx05,GravierKoblerKubiak02} as well as parameterized \cite{FellowsEtAl11,Ganian11} settings. A \emph{coloring} $c$ is a mapping from the vertex set of a graph to a set of colors; a coloring is \emph{proper} if for every pair of adjacent vertices $a,b$, it holds that $c(a)\neq c(b)$. Problem definitions follow.

\begin{center}
%  \vspace{-0.3cm} 
  \begin{boxedminipage}[t]{0.99\textwidth}
\begin{quote}
  tcw-\textsc{List Coloring}\\ \nopagebreak
  \emph{Instance}: A graph $G=(V,E)$, a width-$k$ tree-cut decomposition $(T,\mc{X})$ of $G$, and for each vertex $v\in V$ a list $L(v)$ of permitted colors. \\ \nopagebreak
  %A graph $G=(V,E)$  of tree-cut width at most $k$, and for each vertex $v\in V$ a list $L(v)$ of permitted colors. \\ \nopagebreak
  \emph{Parameter:} $k$. \\ \nopagebreak
  \emph{Task}: Decide whether there exists a proper vertex coloring $c$ such that $c(v)\in L(v)$ for each $v\in V$. \nopagebreak
\end{quote}
\end{boxedminipage}
\end{center}

The tcw-\textsc{Precoloring Extension} problem may be defined
analogously as tcw-\textsc{List Coloring}; the only difference is that in
\textsc{Precoloring Extension} lists are restricted to either contain
a single color or all possible colors. Before stating the new hardness result, we mention (and prove) the following observation.

\begin{observation}%[$\star$]
\label{obs:coldeg}
\textsc{List Coloring} and \textsc{Precoloring Extension} parameterized by $\mathbf{degtw}$ are FPT.
\end{observation}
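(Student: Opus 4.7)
The plan is to first reduce each color list to size at most $k+1$ and then apply a standard dynamic programming over a tree decomposition. Set $k=\mathbf{degtw}(G)$, so that both $\tw(G)\leq k$ and $\Delta(G)\leq k$. The main conceptual point is the list-shrinking step; the rest is routine.

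For the list reduction, for every vertex $v$ with $|L(v)|>k+1$, replace $L(v)$ by an arbitrary subset $L'(v)\subseteq L(v)$ of size exactly $k+1$; leave the other lists unchanged. I claim the new instance is equivalent to the original. The direction ``$L'$-colorable implies $L$-colorable'' is immediate since $L'(v)\subseteq L(v)$. For the converse, take any proper $L$-coloring $c$, initialize $c':=c$, and process in arbitrary order the vertices $v$ satisfying $c'(v)\notin L'(v)$, recoloring each such $v$ with some color from $L'(v)$ distinct from $c'(u)$ for every $u\in N(v)$. Such a color exists because $|L'(v)|=k+1>\deg(v)$, and the update preserves properness since the only changed color is $c'(v)$ and it is chosen to avoid every neighbor's color. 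No vertex outside the processing set is ever modified, and every processed vertex ends with a color in $L'(v)$, so the resulting $c'$ is a proper $L'$-coloring.

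For the dynamic programming step, compute a tree decomposition of width $\bigoh(k)$ in FPT time (e.g., by Bodlaender's algorithm), and process bags from leaves to root. At each bag $X_t$, maintain the set of list-respecting proper colorings of $X_t$ that can be extended to a proper list coloring of the subgraph covered by the subtree rooted at $t$. Because $|L(v)|\leq k+1$ for every $v$ and $|X_t|\leq k+1$, there are at most $(k+1)^{k+1}$ candidate partial colorings per bag, and the standard introduce/forget/join updates run in time $(k+1)^{\bigoh(k)}$. The overall running time is therefore $(k+1)^{\bigoh(k)}\cdot n^{\bigoh(1)}$, which is FPT in $k$.

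Finally, \textsc{Precoloring Extension} is the special case of \textsc{List Coloring} in which every list is either a singleton (for the precolored vertices) or the full palette, so the same algorithm applies verbatim; note that the list-shrinking step preserves singleton lists and replaces oversized full-palette lists by subsets of size $k+1$, which is still a valid \textsc{List Coloring} instance. The only subtle step is the correctness of the list-shrinking, which follows from the greedy recoloring argument above; every other ingredient is textbook treewidth-based dynamic programming.
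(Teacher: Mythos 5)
Your proposal is correct and follows essentially the same route as the paper: both shrink every list to at most $\mathbf{degtw}+1$ colors, justify this by a greedy recoloring of an assumed solution (possible since each vertex has at most $\mathbf{degtw}$ neighbors), and then run the standard list-coloring dynamic program on a tree decomposition. The only cosmetic difference is that the paper keeps the first $\mathbf{degtw}+1$ colors of each list while you keep an arbitrary $(k+1)$-subset, which changes nothing in the argument.
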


\begin{proof}
  Assume that there exists a solution to an instance of \textsc{List
    Coloring}, i.e., there exists a mapping $col$ from each vertex to
  a color in $L(v)$; w.l.o.g., we assume that colors are represented
  by numbers in $\mathbb{N}_0$. From $col$ we can construct a new
  mapping $col'$ such that $col'$ maps each vertex $v$ to one of the
  first $\mathbf{degtw}+1$ elements in $L(v)$ without creating
  conflicts (this may be done greedily from $col$ in an iterative
  fashion). The mapping $col'$ witnesses that there must exist a
  solution to the instance which only chooses one of the first $\mathbf{degtw}+1$
  colors in each list. This allows the construction of a standard
  dynamic algorithm on the tree-decomposition of the input graph.
\end{proof}

\begin{theorem}%[$\star$]
\label{thm:listhard}
tcw-\textsc{List Coloring} and tcw-\textsc{Precoloring Extension} are $\W[1]$-hard.
\end{theorem}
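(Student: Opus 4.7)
The plan is to give an fpt-reduction from the $\W[1]$-complete problem \textsc{Multicolored Clique} (MCC) parameterized by the number of color classes $k$. Let $(G, V_1, \ldots, V_k)$ be an MCC-instance with $V_1, \ldots, V_k$ a partition of $V(G)$; by renaming we may assume the $V_i$ are pairwise disjoint \emph{sets of colors}.

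I construct an equivalent tcw-\textsc{List Coloring} instance $(H, L)$ as follows. For each $i \in [k]$ introduce a \emph{selector} vertex $s_i$ with $L(s_i) := V_i$. For every pair $i<j$ and every $(a,b) \in V_i \times V_j$ with $ab \notin E(G)$, introduce a \emph{blocker} vertex $c^{ij}_{ab}$ with $L(c^{ij}_{ab}) := \{a,b\}$, adjacent to both $s_i$ and $s_j$. Correctness rests on a simple observation: by disjointness of $V_i$ and $V_j$, the blocker $c^{ij}_{ab}$ admits color $a$ precisely when $s_i$ is not colored $a$, and color $b$ precisely when $s_j$ is not colored $b$. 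Hence $c^{ij}_{ab}$ is properly colorable iff the combination ``$s_i$ gets color $a$ \emph{and} $s_j$ gets color $b$'' does not occur; ranging over all non-edges $(a,b) \in V_i \times V_j$ then forces the pair of colors on $(s_i,s_j)$ to correspond to an edge of $G$. Consequently $H$ admits a proper list coloring iff $G$ contains a multicolored $k$-clique.

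To bound $\tcw(H)\leq k$, I place $\{s_1, \ldots, s_k\}$ into the root bag $X_r$ and put every blocker into its own singleton leaf bag hanging off $r$. Each leaf has adhesion $2$ (the two cut-edges to $s_i$ and $s_j$) and torso-size $1$. At the root, consolidating each leaf-subtree yields a single vertex of degree exactly $2$ in $H_r$ (adjacent to the corresponding $s_i$ and $s_j$), which is suppressed by the $3$-center operation; thus $\tilde{H}_r$ retains only the $k$ selectors, giving $\tor(r)\leq k$. This produces a tree-cut decomposition of width at most $k$, so the reduction is an fpt-reduction with new parameter $k'=k$.

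For tcw-\textsc{Precoloring Extension}, I reduce from tcw-\textsc{List Coloring}: writing $C := \bigcup_i V_i$, for each vertex $v$ with $L(v) \subsetneq C$ attach $|C \setminus L(v)|$ precolored pendants to $v$, one per color in $C \setminus L(v)$, each placed in its own singleton leaf bag attached to the bag containing $v$. Each such pendant contributes adhesion $1$ and a degree-$1$ vertex to the parent's torso, which is suppressed by the $3$-center, so tree-cut width increases by at most an additive constant. The main obstacle---which turns out to be mild---is the tree-cut width analysis: the root has up to $\Theta(|V(G)|^2)$ blocker-children, so naively one might fear the torso at $r$ is unbounded. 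The crucial insight is that the $3$-center suppresses every degree-$\leq 2$ vertex outside the bag, and each consolidated blocker has degree exactly $2$ in $H_r$; after suppression only the $k$ selectors remain in $\tilde{H}_r$. This is precisely the gap between treewidth and tree-cut width that makes sharper $\W[1]$-hardness for $\tcw$ attainable: many low-degree attachments to a small hub do not inflate $\tcw$, in contrast to their effect on $\tw$.
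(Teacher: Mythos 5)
Your proposal is correct and follows essentially the same route as the paper: the Fellows et al.\ reduction from \textsc{MCC} with selector vertices listed by $V_i$ and non-edge blockers with two-element lists, the star-shaped tree-cut decomposition with the selectors in the center and singleton blocker leaves of adhesion $2$ (so the consolidated leaves are suppressed by the $3$-center and the root torso has size $k$), and the standard pendant-precoloring gadget, placed in its own leaf bag, for \textsc{Precoloring Extension}.
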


\begin{proof}  
  We use the reduction from the $\W[1]$-hard problem \textsc{Multi-Colored Clique} (\textsc{MCC}) to \textsc{List Coloring}
  described in \cite[Theorem~2]{FellowsEtAl11}. 
\begin{center}
% \vspace{-0.3cm} 
 \begin{boxedminipage}[t]{0.99\textwidth}
\begin{quote}
 \textsc{Multi-Colored Clique (MCC)}\\ \nopagebreak
 \emph{Instance}: A $k$-partite graph $G$ with $k$ parts $V_1,\ldots , V_k$, each consisting of $n$ vertices.  \\ \nopagebreak
 \emph{Parameter:} $k$. \\ \nopagebreak
 \emph{Task}: Decide whether there a $k$-clique in $G$. \nopagebreak
\end{quote}
\end{boxedminipage}
\end{center}
We outline the
  reduction below. Given an instance $G=(V_1\cup\dots\cup V_k, E)$ of
  \textsc{MCC}, we construct an instance $I$ of \textsc{List Coloring}
  with vertex sets $X,Y$ whereas for each $V_i$ there is a single
  $i\in X$ such that $L(i)=V_i$. Then for each non-edge $\{a\in
  V_i,b\in V_j\}\not \in E$, where $i\neq j$, we construct a vertex
  $y\in Y$ such that $y$ is adjacent to $i$ and $j$ and
  $L(y)=\{a,b\}$. Then the choice of color for each $i\in X$
  corresponds to a choice of a single vertex from $V_i$, and the set
  $Y$ contains ``constraints'' which prevent the choice of two
  non-adjacent vertices.

To prove that \textsc{List Coloring} is $\W[1]$-hard, it now suffices to prove that the constructed instance $I$ has tree-cut width at most $k$. To this end, consider the following tree-cut decomposition $(T,\mc{X})$ of $I$: $T$ is a star with center $s\in V(T)$ and $X_s=X$, and for each $y\in Y$ there is a leaf $z\in V(T)$ such that $X_z=\{y\}$. Assume that $T$ is rooted at $s$. Then $\adh(s)=0$ and $\adh(z)=2$ for each leaf $z$. Hence $\tor(s)=k$ and $\tor(z)=3$, from which it follows that $I$ has tree-cut width $k$.

To show that \textsc{Precoloring Extension} is also $\W[1]$-hard, we recall the simple reduction from \textsc{List Coloring} to \textsc{Precoloring Extension} also described in \cite{FellowsEtAl11}. From an instance of \textsc{List Coloring}, it is possible to construct an instance of \textsc{Precoloring Extension} by ``modeling'' the lists through the addition of precolored pendant vertices (vertices of degree one). Let $I'$ be an instance of \textsc{Precoloring Extension} constructed in this manner from an instance $I$ of \textsc{List Coloring} described above; let $Q$ contain all the new pendant vertices, i.e., $Q=V(I')\setminus V(i)$. 

We show that $I'$ also has a tree-cut decomposition $(T',\mc{X'})$ of width $k$. For vertex sets $X,Y\in I'$ we use $(T,\mc{X})$. Then for each $q\in Q$ adjacent to a vertex $v\in X\cup Y$ such that $v\in X_{t_v}$, we construct a pendant $t_q\in V(T')$ adjacent to $t_v$ such that $X'_{t_q}=\{q\}$. The adhesion and torso-size of vertices in $X\cup Y$ remains the same as in $(T,\mc{X})$, while for each $q\in Q$ it holds that $\adh(t_q)=1$ and $\tor(t_q)=1$. The theorem follows.
%\qed
\end{proof}

We also show that the \textsc{Constraint Satisfaction Problem} (\textsc{CSP}) is $\wx{1}$-hard when parameterized by the tree-cut width of the incidence graph, even when restricted to the Boolean domain; this is not the case for $\mathbf{degtw}$~\cite{SamerSzeider10a}. Formal definitions follow.
%provides a general and uniform framework for the representation and solution of hard combinatorial problems that arise in various areas of Computer Science~\cite{RossiVanBeekWalsh06}. %For instance, the problem is equivalent to the evaluation problem of conjunctive queries on relational databases \cite{GottlobLeoneScarcello02b}, and to the homomorphism problem for relational structures \cite{FederVardi98}.

An instance $I$ of the \textsc{CSP} is a triple~$(X, D, \CCC)$, where $X$ is a finite set of \emph{variables}, $D$ is finite set of \emph{domain values}, and $\CCC$ is a finite set of \emph{constraints}.  Each constraint in~$\CCC$ is a pair $(S,R)$, where~$S$, the \emph{constraint scope}, is a non-empty sequence of distinct variables of~$X$, and $R$, the \emph{constraint relation}, is a relation over~$D$ (given as a set of tuples) whose arity matches the length of~$S$.  A CSP instance $(X, D, \CCC)$ is \emph{Boolean} if $D=\{0,1\}$.  An \emph{assignment} is a mapping from the set $X$ of variables to the domain~$D$. An assignment $\tau$ \emph{satisfies} a constraint $C=((x_1,\dots,x_n),R)$ if $(\tau(x_1),\dots,\tau(x_n))\in R$, and $\tau$ satisfies the CSP instance if it satisfies all its constraints. An instance~$I$ is  \emph{satisfiable} if it is satisfied by some assignment.

The \emph{incidence graph} $G_I$ of CSP instance $I=(V, D, \CCC)$ is the bipartite graph whose vertex set is formed by $V\cup \CCC$, and where a constraint $C=(S,R)\in \CCC$ is incident exactly to all the variables in~$S$.

%We consider the following parameterized version of \textsc{CSP}.
\begin{center}
%  \vspace{-0.4cm}
   \begin{boxedminipage}[t]{0.99\textwidth}
\begin{quote}
   tcw-\textsc{CSP}\\ \nopagebreak \emph{Instance}: A CSP instance
  $I=(X,D,\CCC)$  together with a width-$k$
  tree-cut decomposition $(T,\mc{X})$ of the incidence graph $G_I$ of~$I$.\\ \nopagebreak \emph{Parameter:} $k$. \\ \nopagebreak
  \emph{Task}: Decide whether $I$ is satisfiable.
 \nopagebreak
\end{quote}
\end{boxedminipage}
\end{center}

tcw-\textsc{Boolean-CSP} denotes tcw-\textsc{CSP} restricted to Boolean CSP instances.  We note that \textsc{Boolean-CSP} parameterized by $\mathbf{degtw}$ is fixed-parameter tractable \cite{SamerSzeider10a}. %The hardness proof is based on a reduction from \textsc{MCC}.%The following result, again established by a reduction from MCC, shows that this is unlikely to hold for the more general parameter~$\tcw$.

\begin{theorem}%[$\star$]
\label{thm:csphard}
tcw-\textsc{Boolean CSP} is  $\W[1]$-hard.
\end{theorem}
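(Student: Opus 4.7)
The plan is to reduce from \textsc{Multi-Colored Clique}, following the blueprint of Theorem \ref{thm:listhard}, but adapted to the Boolean domain. Given an \textsc{MCC} instance on a $k$-partite graph $G=V_1\cup\dots\cup V_k$ with $|V_i|=n$, I will construct a Boolean CSP instance $I'$ together with a tree-cut decomposition of its incidence graph $G_{I'}$ whose width is bounded by some function $f(k)$.

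The encoding proceeds as follows. For each color class $V_i$, I introduce a one-hot set of Boolean variables $\{x_{i,v}: v\in V_i\}$ with the intended meaning ``$x_{i,v}=1$ iff $v$ is selected from $V_i$''. The ``exactly one'' condition is enforced by a chain gadget of $n$ ternary constraints along auxiliary variables $q_i^0,\dots,q_i^n$ (with boundary unary constraints $q_i^0=0$ and $q_i^n=1$), so that the incidence subgraph of each gadget is a caterpillar. For each non-edge $\{a,b\}\in V_i\times V_j$ of $G$, I add a binary constraint $\neg(x_{i,a}\wedge x_{j,b})$, which corresponds to a constraint vertex of degree $2$ in the incidence graph. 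The resulting CSP $I'$ is satisfiable exactly when $G$ admits a $k$-clique.

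To build the tree-cut decomposition $(T',\mc{X}')$ of $G_{I'}$, I place each color class's caterpillar gadget in its own subtree $T_i$, and arrange the binary non-edge constraint vertices as pendant leaves attached inside the gadget subtrees. Since each caterpillar has constant tree-cut width and each binary constraint is of degree $2$ (hence gets suppressed in the $3$-center computation of any ambient torso), the torso size at every node of $T'$ can be bounded in $k$. The main obstacle is controlling the \emph{adhesions}: a variable $x_{i,v}$ may be incident to $\Theta(nk)$ binary constraints (one per non-edge through $v$), and a naive placement routes all of these edges across the tree-cut edge $e(t_i)$, blowing the adhesion up to $\Theta(nk)$. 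To defeat this, I will refine the encoding by splitting each $x_{i,v}$ into ``copies'' linked by binary equality constraints, arranging the copies as a subtree so that each copy carries at most one non-edge binary constraint; the binary constraints can then be distributed across the fine-grained tree so that only a bounded (in $k$) number of them appear in any single cut, at the cost of enlarging but not unbalancing the torso.

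The hardest step will be verifying that the refined decomposition indeed has width $f(k)$, by showing that both the number of ``complex'' subtrees hanging off each internal node (in the sense of Lemma~\ref{lem:Asmall}) and the number of cross-edges at any tree-cut edge remain bounded by a function of $k$; the analysis will mirror the star-based argument of Theorem~\ref{thm:listhard}, leveraging that the binary non-edge constraints become degree-$2$ pendants that contribute $2$ to any local cut and are eliminated by $3$-center suppression. Finally, the same construction extends to a reduction for \textsc{Precoloring Extension}-style inputs by adding unary constraints, which do not affect the tree-cut width of the incidence graph.
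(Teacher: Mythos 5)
There is a genuine gap, and it lies exactly at the step you flag as ``the hardest'': no amount of splitting variables into copies can make your construction have tree-cut width bounded in $k$. In your encoding every non-edge of $G$ between two color classes becomes its own binary constraint vertex, and each variable $x_{i,a}$ (or its tree of copies joined by equality constraints) forms a connected branch set in the incidence graph. Contracting each such branch set and each degree-$2$ non-edge constraint vertex shows that the incidence graph of $I'$ contains the cross-class non-edge graph of $G$ as a minor. A \textsc{Multi-Colored Clique} instance may have a non-edge graph of treewidth $\Omega(n)$ (e.g.\ when the bipartite complements between classes are expanders), so the incidence graph then has treewidth $\Omega(n)$, and by Proposition~\ref{prop:twtcw} (or Proposition~\ref{prop:comparison}) its tree-cut width is unbounded in $k$ --- this is a property of the graph itself, so it cannot be repaired by distributing constraints more cleverly over the decomposition tree. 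The reason the degree-$2$ trick works in Theorem~\ref{thm:listhard} is that there the high-degree side consists of only $k$ vertices, so the contracted structure lives on $k$ vertices; in your one-hot Boolean encoding the high-degree side has $nk$ vertices and inherits the unbounded structure of $G$'s non-edges.

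The paper avoids this by bounding the \emph{number of constraints} rather than their arity: it first builds a CSP over domain $\{1,\dots,n\}$ with $k^2$ variables, one equality-type constraint per class and one (high-arity after booleanization) adjacency constraint per pair of classes, and then one-hot encodes the variables. The resulting Boolean instance has exactly $k+\binom{k}{2}$ constraints and every variable occurs in exactly two of them, so a star decomposition with all constraint vertices in the center bag and each variable in its own leaf has adhesion $2$ at every leaf and torso-size at most $k+\binom{k}{2}$ at the center (the degree-$2$ variable vertices are suppressed). If you want to salvage your proposal, you must replace the per-non-edge binary constraints by a bounded (in $k$) number of high-arity constraints in this spirit; the caterpillar ``exactly one'' gadgets and the copy-splitting are then unnecessary. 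The closing remark about \textsc{Precoloring Extension} is also out of place here, since the statement concerns Boolean CSP.
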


\begin{proof}
  We give a reduction from \textsc{MCC}.  Let $(G,k)$ be an instance
  of \textsc{MCC} with $V(G)=\bigcup_{i=1}^n V_i$ where
  $V_i=\{v_{i,1},\dots,v_{i,n}\}$ for $1\leq i \leq k$. We first
  construct a CSP instance $I=(X,D,\CCC)$ and will show later how to
  make it Boolean.  We let $X= \SB x_{i,j} \SM 1\leq i,j \leq k\SE$
  and $D=\{1,\dots,n\}$.  For $1\leq i \leq k$ the set $\CCC$ contains
  the constraint
  $C_i^=((x_{i,1},\dots,x_{i,k}),\{(0,\dots,0),(1,\dots,1),\dots,(n,\dots,n)\})$, which
  enforces that the values of all the variables
  $x_{i,1},\dots,x_{i,k}$ are the same.  Furthermore, for each $1\leq
  i<j\leq k$ the set $\CCC$ contains the constraint
  $C^E_{i,j}=((x_{i,j},x_{j,i}),\SB (a,b)\in [n] \times [n] \SM
  v_{i,a}v_{j,b} \in E(G)\SE$ which encodes the incidence relation
  between $V_i$ and~$V_j$. It follows by construction that $I$ is
  satisfiable if and only if $G$ contains a $k$-clique.
  
  Next we obtain from $I$ a Boolean CSP instance
  $I'=(X',\{0,1\},\CCC')$ by replacing each variable $x_{i,j}$ with $n$
  Boolean variables $x_{i,j}^{(1)},\dots,x_{i,j}^{(n)}$. Intuitively,
  assigning $x_{i,j}$ the value $a$ corresponds to assigning
  $x_{i,j}^{(a)}$ the value $1$ and all other $x_{i,j}^{(b)}$ for
  $b\neq a$ the value $0$ (instead of this unary encoding we could
  have used a more succinct binary encoding, but the unary encoding is
  simpler and suffices for our purposes). Consequently, each constraint
  $C_i^=$ of $I$ (of arity $n$) gives rise to a constraint $C_i^{='}$
  of $I$ (or arity $n^2$), and each constraint $C^E_{i,j}$ of $I$ (or
  arity 2) gives rise to a constraint $C^{E'}_{i,j}$ of $I'$ (of arity
  $2n$).  By construction, $I$ is satisfiable if and only if $I'$ is
  satisfiable.

  In order to complete the reduction and the proof of the theorem, it
  remains to construct a tree-cut decomposition of the incidence graph
  $G_{I'}$ of $I'$ of a width that is bounded by a function of $k$.
  We observe that $I'$ has exactly $k'=k+\binom{k}{2}$ many
  constraints and each variable appears in the scopes of exactly two
  constraints. Hence one side of the bipartite graph $G_{I'}$ consists
  of $k'$ many vertices and the other side only of vertices of
  degree~2. This already implies that the tree-cut with of $G_{I'}$ is
  at most $k'$, as we can take the following tree-cut decomposition
  $(T,\mc{X})$ of $G_{I'}$.  $T$ is a star with center $s\in V(T)$ where
  $X_s=\CCC$, and for each variable $x_{i,j}^{(a)}$ there is leaf
  $t_{i,j}^{(a)}\in V(T)$ with
  $X_{t_{i,j}^{(a)}}=\{x_{i,j}^{(a)}\}$.
\end{proof}

%The proofs of Theorems~\ref{thm:listhard} and~\ref{thm:csphard} are based on a reduction from \textsc{MCC}.
%\enlargethispage*{5mm}
 
\section{Concluding Notes}
We have provided the first algorithmic applications of the new graph
parameter tree-cut width, considering various of hard combinatorial
problems. In some cases we could establish fixed-parameter
tractability, in some cases we could establish $\W[1]$-hardness,
staking off the potentials and limits of this parameter (see
Table~\ref{tab:overview}). 

The FPT algorithms make use of our new
notion of nice tree-cut decompositions, which we believe to be of
independent interest. In fact, following their introduction these decompositions have already been used to obtain algorithms for problems such as \textsc{Bounded Degree Vertex Deletion}~\cite{GanianKO21}, \textsc{Stable Roommates with Ties and Incomplete Lists}~\cite{BredereckHKN19}, and \textsc{Edge Disjoint Paths}~\cite{GanianO21}. Surprisingly, while the third problem is \XP\ when parameterized by tree-cut width, it remains $\W[1]$-hard under this parameterization~\cite{GanianO21}---a stark contrast to the behavior of \textsc{Vertex Disjoint Paths} parameterized by treewidth.

We remark that the quadratic dependency on $n$ in our algorithms can almost certainly be reduced to linear by using a carefully designed data structure or avoiding the use of Turing reductions. In particular, in all three cases the quadratic dependency is caused by the fact that the Turing reductions used in the proofs of Lemmas~\ref{lem:cvcreduce}, \ref{lem:imbreduce} and \ref{lem:cdsreduce} produce subinstances of size $\bigoh(n)$; other than that, these reductions run in time that depends only on $k$.

While we do not yet have an exact fixed-parameter algorithm for computing tree-cut width, the 2-approximation algorithm of Kim et al.~\cite{KimOumPaulSauThilikos14} is sufficient to establish fixed-parameter tractability of various problems of interest. Moreover, there is also a SAT encoding which can compute the tree-cut width exactly for graphs with dozens of vertices~\cite{GanianLOS19}.
%We hope that our results and methods
%stimulate further work on problems parameterized by tree-cut width,
%which will result in a more refined parameterized complexity
%landscape; natural candidate problems include further graph layout problems or 
%the \textsc{General Factor} Problem.

\section*{Acknowledgments}
This work was partially supported by the Austrian Science Fund (FWF),
projects P31336 (NFPC) and Y1329 (ParAI), and the Vienna Science and Technology Fund
(WWTF), project ICT19-065 (REVEAL). The authors also gratefully acknowledge the detailed feedback provided by the reviewers.

\bibliographystyle{abbrv} \bibliography{literature}

\end{document}